\theoremstyle{definition}
\newtheorem{definition}{Definition}[section]
\newtheorem{remark}{Remark}[section]
\newtheorem{theorem}{Theorem}
\newtheorem{prop}{Proposition}[section]
\newtheorem{lemma}{Lemma}[section]
\newtheorem{corollary}{Corollary}[section]
\renewenvironment{proof}[1][\proofname]{%
   \par\pushQED{\qed}\normalfont%
   \topsep6\p@\@plus6\p@\relax
   \trivlist\item[\hskip\labelsep\bfseries#1\@addpunct{.}]%
   \ignorespaces
}{%
   \popQED\endtrivlist\@endpefalse
}
\newcommand{\p}{\partial}
\newcommand{\beq}{\begin{equation}}
\newcommand{\eeq}{\end{equation}}
\newcommand{\eeeem}{\end{multline}}
\newcommand{\bem}{\begin{multline}}
\newcommand{\bqa} {\begin{eqnarray}}
\newcommand{\eqa} {\end{eqnarray}}
\newcommand{\eps}{\varepsilon}
\newcommand{\m}{m}
\newcommand{\scrI}{{\mathscr{I}}}
\newcommand{\bmul}{\begin{multline}}
\newcommand{\emul}{\end{multline}}
\DeclareMathOperator{\Sym}{Sym}
\DeclareMathOperator{\Id}{Id}
\DeclareMathOperator{\diam}{diam}
\DeclareMathOperator{\Ad}{Ad}
\DeclareMathOperator{\supp}{supp}
\DeclareMathOperator{\dist}{dist}
\DeclareMathOperator{\End}{End}
\DeclareMathOperator{\Aut}{Aut}
\def \ra {\rightarrow}
\newcommand{\No}{{\mathcal N}}
\newcommand{\CA}{{\mathcal A}}
\newcommand{\CB}{{\mathcal B}}
\newcommand{\CE}{{\mathcal E}}
\newcommand{\CH}{{\mathcal H}}
\newcommand{\CI}{{\mathcal I}}
\newcommand{\CM}{{\mathcal M}}
\newcommand{\CN}{{\mathcal N}}
\newcommand{\CL}{{\mathcal L}}
\newcommand{\CO}{{\mathcal O}}
\newcommand{\CU}{{\mathcal U}}
\newcommand{\CV}{{\mathcal V}}
\newcommand{\NN}{{\mathbb N}}
\newcommand{\ZZ}{{\mathbb Z}}
\newcommand{\RR}{{\mathbb R}}
\newcommand{\SA}{{\mathscr A}}
\newcommand{\SAl}{{\mathscr A}_{\ell}}
\newcommand{\SAal}{{\mathscr A}_{a \ell}}
\newcommand{\Orf}{{\mathscr F}_{\infty}}
\newcommand{\Orfm}{{\mathscr F}^{+}_{\infty}}
\newcommand{\bd}{\mathbf{\epsilon}}
\newcommand{\bD}{\mathbf{\Delta}}
\newcommand{\mfkDal}{{\mathfrak{D}}_{al}}
\newcommand{\mfkDl}{{\mathfrak{D}}_{l}}
\newcommand{\mfkd}{{\mathfrak d}}
\newcommand{\mfkdal}{{{\mathfrak d}_{al}}}
\newcommand{\mfkdl}{{{\mathfrak d}_{l}}}
\newcommand{\mfkDpal}{{\mathfrak D}^{\psi}_{al}}
\newcommand{\mfkdpal}{{{\mathfrak d}^{\psi}_{al}}}
\newcommand{\mfkg}{{\mathfrak g}}
\newcommand{\sfa}{{\mathsf a}}
\newcommand{\chA}{{\mathsf a}}
\newcommand{\chB}{{\mathsf b}}
\newcommand{\chF}{{\mathsf f}}
\newcommand{\chG}{{\mathsf g}}
\newcommand{\chH}{{\mathsf h}}
\newcommand{\chJ}{{\mathsf j}}
\newcommand{\chQ}{{\mathsf q}}
\newcommand{\cha}{{\mathsf a}}
\newcommand{\chb}{{\mathsf b}}
\newcommand{\chc}{{\mathsf c}}
\newcommand{\chf}{{\mathsf f}}
\newcommand{\chg}{{\mathsf g}}
\newcommand{\chh}{{\mathsf h}}
\newcommand{\chj}{{\mathsf j}}
\newcommand{\chk}{{\mathsf k}}
\newcommand{\chm}{{\mathsf m}}
\newcommand{\chq}{{\mathsf q}}
\newcommand{\cht}{{\mathsf t}}
\newcommand{\tchq}{\tilde{{\mathsf q}}}
\newcommand{\derA}{{\mathsf{A}}}
\newcommand{\derB}{{\mathsf{B}}}
\newcommand{\derF}{{\mathsf{F}}}
\newcommand{\derG}{{\mathsf{G}}}
\newcommand{\derH}{{\mathsf{H}}}
\newcommand{\derQ}{{\mathsf{Q}}}
\newcommand{\OL}{\mathcal{O}(L^{-\infty})}
\newcommand{\Or}{\mathcal{O}(r^{-\infty})}
\newcommand{\Br}{{\mathbb B}}
\newcommand{\Frechet}{{Fr\'{e}chet}}
\newcommand{\Mobius}{{M\"obius}}
\newcommand{\Cech}{{\v{C}ech}}
\newcommand{\partialarrow}{{\overset{\partial}{\longrightarrow}}}
\def \l {\left(}
\def \r {\right)}
\def \lal {\langle}
\def \ral {\rangle}
\title{Local Noether theorem for quantum lattice systems and topological invariants of gapped states}
\author{Anton Kapustin, Nikita Sopenko \smallskip\\ 
{\it California Institute of Technology, Pasadena, CA 91125, USA}}
\begin{document}

\maketitle

\abstract{We study generalizations of the Berry phase for quantum lattice systems in arbitrary dimensions. For a smooth family of gapped ground states in $d$ dimensions, we define a closed $d+2$-form on the parameter space which generalizes the curvature of the Berry connection. Its cohomology class is a topological invariant of the family. When the family is equivariant under the action of a compact Lie group $G$, topological invariants take values in the equivariant cohomology of the parameter space. These invariants unify and generalize the Hall conductance and the Thouless pump. A key role in these constructions is played by a certain differential graded \Frechet -Lie algebra attached to any quantum lattice system. As a by-product, we describe  ambiguities in charge densities and  conserved currents for arbitrary lattice systems with rapidly decaying interactions.}

\section{Introduction}

Over the last couple of decades there has been a growing interest in classifying phases of quantum matter at zero temperature and a  non-vanishing energy gap for local excitations. Such phases often do not fit into the Landau paradigm, and it is remarkably hard to describe the observables  which distinguish them in a mathematically precise way.\footnote{For 2d lattice systems it is often assumed that gapped phases are in 1-1 correspondence with unitary Topological Quantum Field Theories. But it is difficult to justify such an assumption, and there is no general way to determine TQFT data for a given gapped Hamiltonian. In higher dimension, the relations between gapped phases and TQFT is even less understood.}  

Phases of non-interacting fermions can be classified using K-theory, as pointed out in the pioneering works \cite{ryuetal,ryuetal2,kitaev2009periodic}. However, these methods do not generalize to interacting systems.
Since the notion of a phase, strictly speaking, applies only in  infinite-volume, it is natural to use methods of quantum statistical mechanics to construct invariants of gapped phases. Such invariants are usually called topological, because they do not vary in suitably defined families. Recently, these methods have been
used to define topological invariants of 1d \cite{ogataTindex,ogata2019classification,ogatabourne,kapustin2020classification} and 2d \cite{ThoulessHall,ogata2021h,sopenko2021} gapped lattice systems with symmetries and arbitrarily strong interactions which decay rapidly with distance. In Refs.  \cite{bachmann2012automorphic,moon2020automorphic} (see also  \cite{ThoulessHall,kapustin2020classification,sopenko2021}) a quantum phase was defined as an equivalence class of pure states of  many-body systems with respect to finite-time evolutions generated by Hamiltonians with rapidly decaying interactions. This is in accordance with a view \cite{QImeetsQM} that a gapped quantum phase is a pattern of entanglement of the ground-state wave function, while the Hamiltonian is unimportant. On the other hand, ground states of gapped local Hamiltonian are known to be quite special (for example, their entanglement entropy is believed to obey the area law), and these special properties are crucial for defining topological invariants. This raises the issue of characterizing gapped ground states of local Hamiltonians among all pure quantum states of many-body systems. 

In this paper we develop a systematic approach to the construction of topological invariants of quantum lattice systems. Our main tool is a certain differential graded \Frechet-Lie algebra (DGFLA) which can be attached to any quantum state $\psi$ of a lattice system. Recall that a differential graded Lie algebra is a $\ZZ$-graded vector space equipped with a grading-compatible Lie bracket and a differential (a degree-1 derivation of the Lie bracket which squares to zero). In our case the vector space is an infinite-dimensional \Frechet\ space and the bracket and the differential are continuous w.r. to \Frechet\ topology, so we get a DGFLA. Roughly speaking, it contains symmetries of $\psi$ and their higher analogs. Let $\psi$ be a gapped ground state of a Hamiltonian with interactions decaying faster than any power of distance. We call such states gapped. The key property of gapped states is that the cohomology of the corresponding DGFLA is trivial. This enables one to define topological invariants for any gapped state by solving a suitable Maurer-Cartan equation.

The first application of our method is the construction of invariants of smooth\footnote{In a sense defined below.} families of gapped states. They generalize the curvature of the Berry connection to the many-body setting and for $d$-dimensional lattice systems take values in the degree-$(d+2)$ de Rham cohomology of the parameter space. Higher Berry classes allow one to probe the topology of the space of gapped states and to detect high-codimension critical loci in the phase diagram where the gap closes \cite{diabolical}. In the case of {\it invertible
} quantum lattice systems, higher Berry classes were first discussed by A. Kitaev \cite{Kitaev_talk}; for general gapped systems explicit formulas for them were written down in \cite{kapustin2020higherA} using a choice of a family of Hamiltonians. Our construction makes it clear that higher Berry classes depend only on the family of states, not the Hamiltonians used to define them. \footnote{It is believed that for invertible states these classes are quantized. Ref. \cite{kapustin2020higherA} argued (non-rigorously) that this is the case for families parameterized by spheres. We plan to address quantization for generic families of invertible states elsewhere.}

More generally, given a manifold $\CM$ with an action of a compact Lie group $G$, one may consider smooth $G$-equivariant families of gapped states over $\CM$. We show how to attach a degree $d+2$ class in the $G$-equivariant cohomology of $\CM$ to any such family by solving a suitable Maurer-Cartan equation. In the case when $d=1$, $G=U(1)$, and $\CM=S^1$ with a trivial $G$-action, the corresponding invariant is the Thouless pump. In the case when $d=2$, $G=U(1)$, and $\CM=\text{pt}$, the corresponding invariant is the Hall conductance. Thus both the Hall conductance and the Thouless pump arise from a $G$-equivariant version of the Berry phase. More generally, 
when $\CM=\text{pt}$, the equivariant cohomology can be identified with the space of $G$-invariant polynomials on the Lie algebra $\mfkg$ of $G$. Thus, for any even $d$, our construction attaches a $G$-invariant polynomial of degree $(d+2)/2$ to a $G$-invariant gapped state in $d$ dimensions. This is in agreement with the TQFT approach, which predicts that topological invariants of such systems should take values in the space of Chern-Simons forms of degree $d+1$. It is well known that the latter arise from invariant polynomials on the Lie algebra. 

Some of our results have applications outside of the area of gapped phases. As a key preparatory step, for any quantum lattice system we define a DGFLA which encodes all local\footnote{In the sense that interactions are either finite-range or decay faster than any power of the distance.} Hamiltonians, their densities, currents, and their higher analogs. We prove that the cohomology of this DGFLA vanishes. This kinematic result implies, among other things, that any local continuous symmetry of a local lattice Hamiltonian gives rise to a local conserved current. Since the relation between symmetries and local conservation laws is usually referred to as the Noether theorem, we call the DGFLA attached to a lattice system the local Noether complex. 

The current corresponding to a symmetry is not unique, but the ambiguity can be completely described. Non-uniqueness of the energy current, in particular, is the cause of many complications in the theory of heat  transport, see e.g. \cite{CooperHalperinRuzin}. The effect of these ambiguities on the standard Kubo-Greenwood formulas for transport coefficients has been studied in \cite{gauge,KapSpo}. Hopefully, our results help clarify such issues, at least for lattice systems.

Our main technical tools are the Lieb-Robinson bound in the form proved in \cite{nachtergaele2006propagation,matsuta2017improving} and linear integral transforms introduced in \cite{osborne2007simulating}. One technical novelty is the use of a certain dense subalgebra $\SAal$ in the algebra of quasilocal observables $\SA$. This subalgebra is not complete with respect to the norm topology of $\SA$, but it is complete with respect to a topology defined by a non-decreasing family of norms labeled by a non-negative integer. Such topological vector spaces are known as (graded) \Frechet\ spaces. The central role played by $\SAal$ and its relatives necessitates a systematic use of calculus in \Frechet\ spaces as described, for example, in \cite{Hamilton}.

The paper is organized as follows. In Section 2 we set up the notation and define the algebra of almost local observables. In Section 3 we define DGFLAs attached to lattice systems, explain their relation to currents, and prove the ``Local Noether Theorem''.  In Section 4 we define DGFLAs attached to states and use them to construct topological invariants of smooth families of gapped states, including the $G$-equivariant case. We
also construct an example of a topologically non-trivial family of quasi-1d states associated to a $G$-invariant Symmetry Protected Topological state in two dimensions. This construction relates higher Berry curvature to non-abelian Hall conductance. The more technical results are relegated to  appendices.
\\

\noindent
{\bf Acknowledgements:}
We would like to thank Karl-Hermann Neeb and Bruno Nachtergaele for reading a preliminary draft of the paper and Bowen Yang for discussions. We are especially grateful to Bruno Nachtergaele for pointing out to us the relevance of the improved Lieb-Robinson bounds from \cite{matsuta2017improving,elsemachadonayakyao} and for informing us about his forthcoming work \cite{nachtergaelesimsyoungupcoming}. This research was supported in part by the U.S.\ Department of Energy, Office of Science, Office of High Energy Physics, under Award Number DE-SC0011632. A.K. was also supported by the Simons Investigator Award.
\\

\section{Preliminaries}

\subsection{The lattice} \label{ssec:lattice}

Let $\Lambda\subset\RR^d$ be a Delone set. That is, $\Lambda$ is both uniformly discrete:
\begin{equation} 
\bd :=\inf_{\substack{j,k\in\Lambda\\ j\neq k}} |j-k|  >0,
\end{equation}
and relatively dense:
\begin{equation}
\bD := \sup_{x\in\RR^d} \inf_{j\in \Lambda} |x-j|<\infty .
\end{equation}
In what follows we assume the metric on $\RR^d$ has been rescaled so that $\bD<1/2$. Then every open ball of diameter $1$ has a nonempty intersection with $\Lambda$, and the cardinality of the intersection is upper bounded by $C_d\bd^d$ for some $C_d$ depending only on $d$. Elements of $\Lambda$ are called sites. The set of finite subsets of $\Lambda$ will be denoted ${\rm Fin}(\Lambda)$. It is a directed set with respect to inclusion.

The complement of a subset $X \subset \RR^d$ is denoted by $\overline{X}$.  The diameter of a subset $X\subset \RR^d$ is defined by
\beq
\text{diam}(X) := \sup_{x,y \in X} |x-y|.
\eeq
The distance between any subsets $X$ and $Y$ is defined by
\beq
\dist(X,Y) := \inf_{x \in X, y \in Y} |x-y|.
\eeq
An open ball of radius $r$ with the center at $x \in \RR^d$ is defined by $B_x(r) := \{y\in\RR^d : |x-y|< r\}$. Its complement $\overline{B_x(r)}$ in $\RR^d$ is also denoted $B^c_x(r)$. 
%More generally, for any finite set $X\subset\RR^d$ we define %$B_X(r):=\{y\in\RR^d : {\rm dist}(y,X) < r\}$. Note that %$B_X(0)=\emptyset$ for any such $X$. 
For any $Y\subset\RR^d$ we denote by $\chi_Y$ the indicator function of $Y$: $\chi_Y(x)=1$ if $x\in Y$ and $\chi_Y(x)=0$ if $x\notin Y$. 
% Finally, $\theta:\RR\ra\RR$ will denote the indicator function of $[0,\infty)$.

\subsection{Functions}

We will denote by $\Orf$ the space of functions $\RR_{r \geq 0} \ra \RR$
which decay at infinity faster than any power of $r$. We denote the subset of monotonically decreasing non-negative functions by $\Orfm$. The space $\Orf$ is a {\Frechet} space topologized by a sequence of norms 
\beq
\|f\|_{\alpha} := \sup_{r \geq 0} (1+r)^{\alpha} |f(r)|,\ \alpha\in\NN_0.
\eeq
A sequence $f_k, k\in\NN,$ converges to 0 iff $\|f_k\|_{\alpha} \to 0$ for all  $\alpha$.

Finally, $\theta:\RR\ra\RR$ will denote the indicator function of $[0,\infty)$.

\subsection{Observables}
To each site $j\in\Lambda$ we attach an on-site algebra $\SA_j = \End(\CV_j)$ for some Hilbert space $\CV_j$ of  dimension $d_j = \dim \CV_j$. For any $\Gamma\in \text{Fin}(\Lambda)$ we define 
\beq
\SA_\Gamma=\bigotimes_{j\in\Gamma} \SA_j.
\eeq
By convention, $\SA_\emptyset={\mathbb C}$. For $X \subset \RR^d$ we let
\beq
\SA_X := \underset{\Gamma \subset X}\varinjlim\, \SA_{\Gamma}.
\eeq
The algebra of local observables is defined as
\beq
\SAl := \underset{\Gamma}\varinjlim\, \SA_{\Gamma} .
\eeq
It is a normed $*$-algebra whose center is isomorphic to $\mathbb{C}$. For any $j\in \Lambda$ we let $\Pi_j:
\SAl\ra\SAl$ be the averaging over local unitaries supported at $j$, i.e. for any $\CB\in\SAl$ we let $\Pi_j(\CB)=\int  U \CB U^* d\mu_j(U)$, where $d\mu_j(U)$ is the Haar measure on the group of unitary elements of $\SA_j$ normalized so that $\int 1\, d\mu_j(U)=1$. It is well-known that $\Pi_j$ is positive and does not increase the norm. The maps $\Pi_j$ for different $j$ commute, therefore for any $\Gamma \in \text{Fin}(\Lambda)$ we can define  $\Pi_{\Gamma}:=\prod_{j\in \Gamma} \Pi_j$. For $X\subset \RR^d$ we let
\beq
\Pi_X := \underset{\Gamma \subset X} \varinjlim\, \Pi_{\Gamma}.
\eeq
We will use without comment various obvious properties of the averaging maps $\Pi_X$, for example $\Pi_X\circ \Pi_Y=\Pi_{X\cup Y}$ for any $X,Y \subset \RR^d$. The following simple estimate is also sometimes useful:
\beq\label{eq:commutatorlowerbound}
\|\CA-\Pi_X(\CA)\|\leq \sup_{\CB\in\SA_X}\frac{\|[\CA,\CB]\|}{\|\CB\|}
\eeq

Note that $\Pi_\Lambda=\underset{\Gamma}\varinjlim\, \Pi_{\Gamma}$ maps any $\CB\in\SAl$ to the center of $\SAl$ and thus can be thought of as a state (i.e. a positive linear function) on $\SAl$. We will denote it $\langle\cdot\rangle_\infty$ since it is the infinite temperature state. This state can also be defined as the unique tracial state on $\SAl$. More generally, for any $X\subset \RR^d$ we define the conditional expectation value $\SAl\ra\SA_X$ by 
\beq 
\CA\mapsto \CA\vert_X := \Pi_{\overline{X}}(\CA).
\eeq 
Conditional expectation value does not increase the norm. 

In quantum statistical mechanics, one usually defines the algebra of quasi-local observables $\SA$ as the norm completion of $\SAl$. The conditional expectation value extends to $\SA$.
In this paper the $C^*$-algebra $\SA$ will not be used much. Instead a certain dense  sub-algebra $\SAal\subset \SA$ will play a central role. Its  definition depends on the metric structure of $\Lambda$. In contrast, $\SA$ depends only on the list of dimensions of the spaces $\CV_j$. Roughly speaking, $\SAal$ consists of quasi-local observables which on a ball of radius $r$ can be approximated by local observables with an $\Or$ error. 
% For this reason, we call $\SAal$ the \textit{algebra of almost local observables}.

% One possible definition of $\SAal$ is as follows.
To define $\SAal$ we first introduce some notation. 
%The conditional expectation value of an observable $\CA \in \SA$ on a subset $X \subset \Lambda$, that is an average over the tracial state in the complement of $X$, is denoted by $\CA|_X$. It satisfies $\|\CA|_X\| \leq \CA$. 
For any $\CA\in\SA$, $x \in\RR^d$, and $r\geq 0$ let
% \beq\label{eq:fjAr}
% f_x(\CA,r):= \lVert \CA-\CA|_{B_x(r)}\rVert.
% \eeq
\beq\label{eq:fjAr}
f_x(\CA,r):=\inf_{\CB\in\SA_{B_x(r)}} \lVert \CA-\CB\rVert.
\eeq
This is a monotonically decreasing non-negative function of $r$ which takes values in $[0,\|\CA\|]$, approaches zero as $r\ra\infty$, and for any $x \in\RR^d$ and any $r\geq 0$ satisfies the triangle inequality:
\beq
f_x(\CA_1+\CA_2,r)\leq f_x(\CA_1,r)+f_x(\CA_2,r).
\eeq
Thus for a fixed $x\in \RR^d$ and fixed  $r\geq 0$ $f_x(\cdot,r)$ is a seminorm on $\SA$ (see Appendix \ref{app:frechet} for a brief reminder on seminorms and the topologies on vector spaces defined by countable families of seminorms). Note for future use that if $\Gamma \in {\rm Fin}(\Lambda)$ and $\CA\in\SA_{\Gamma}$, then $f_x(\CA,r)=0$ for any $r>\diam(\{x\} \cup \Gamma)$, and thus $f_x(\CA,r)\leq \|\CA\| \theta(r-\diam(\{x\} \cup \Gamma))$. 
% Note also that $f_x(\CA,0)=\|\CA-1\cdot \langle\CA\rangle_\infty\|$. 
% Here $\langle\cdot\rangle_\infty$ is the extension of the state $\Pi_\Lambda$ to $\SA$.

Given $b(r)\in\Orf$ we will say that an observable $\CA\in\SA$ is $b$-localized at $x$ if $f_x(\CA,r)\leq b(r)$ for all $r$. Note that for such an observable $f_x(\CA,r) \in \Orfm$.

For $j\in \Lambda$, let us define the norms
\beq\label{eq:primednorms}
 \|\CA\|'_{j,\alpha}:=\lVert \CA\rVert +\sup_r (1+r)^\alpha f_j(\CA,r),\ \alpha\in\NN.
\eeq

\begin{lemma} \label{lma:Aal}
Let $\SA_{al}$ be a $\ast$-subalgebra $\SA_{al} \subset \SA$, and let us fix $j \in \Lambda$. The following characterizations of $\SAal$ are all equivalent:
\begin{itemize}
    \item[(1)] $\SAal$ is a subspace of $\SA$ defined by the condition $\|\CA\|'_{j,\alpha}<\infty$ for all $\alpha\in\NN$.
    \item[(2)] $\SAal$ consists of elements $\CA \in \SA$ which are $b$-localized at $j$ for some $b(r) \in \Orfm$.
    \item[(3)] $\SAal$ is the completion of the algebra $\SAl$ with respect to the norms $\|\cdot\|'_{j,\alpha}$, $\alpha\in\NN$.
\end{itemize}
$\SAal$ thus defined does not depend on $j$.
\end{lemma}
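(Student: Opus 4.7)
My plan is to verify the three characterizations in a cycle---(1)$\Leftrightarrow$(2) being essentially a rearrangement, and (1)$\Leftrightarrow$(3) requiring density plus completeness---and then dispatch the site-independence at the end.

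For (1)$\Leftrightarrow$(2), the function $r\mapsto f_j(\CA,r)$ is already non-negative and monotonically decreasing, so finiteness of $\|\CA\|'_{j,\alpha}$ for every $\alpha\in\NN$ is exactly the statement that $f_j(\CA,\cdot)\in\Orfm$. Thus (1) gives (2) with $b=f_j(\CA,\cdot)$; the reverse direction is immediate from $\sup_r(1+r)^\alpha b(r)<\infty$ for $b\in\Orfm$.

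For (1)$\Leftrightarrow$(3), containment $\SAl\subset\SAal$ is clear: for $\CA\in\SA_\Gamma$, $f_j(\CA,r)=0$ once $r>\diam(\{j\}\cup\Gamma)$. For density, given $\CA\in\SAal$ I approximate by $\CA_n:=\Pi_{B^c_j(n)}(\CA)$, which by uniform discreteness of $\Lambda$ lies in $\SA_{B_j(n)}\subset\SAl$. The crucial estimate is $\|\CA-\CA_n\|\leq 2f_j(\CA,n)$: for any $\CB\in\SA_{B_j(n)}$ one has $\Pi_{B^c_j(n)}(\CB)=\CB$ and $\|\Pi_{B^c_j(n)}\|\leq 1$, giving $\|\CA-\CA_n\|\leq \|\CA-\CB\|+\|\Pi_{B^c_j(n)}(\CA-\CB)\|\leq 2\|\CA-\CB\|$, and infimizing over $\CB$ yields the bound. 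Combined with the trivial $f_j(\CA-\CA_n,r)\leq f_j(\CA,r)$ for $r\geq n$ (since $\CA_n\in\SA_{B_j(r)}$ there), this makes $(1+r)^\alpha f_j(\CA-\CA_n,r)$ vanish uniformly in $r$, using only $f_j(\CA,\cdot)\in\Orfm$. Completeness follows because any sequence Cauchy in the seminorms is Cauchy in $\|\cdot\|$ (which is dominated by any $\|\cdot\|'_{j,\alpha}$), so converges in $\SA$ to some $\CA$; norm convergence forces $f_j(\CA_n,r)\to f_j(\CA,r)$ uniformly in $r$, and a Fatou-type argument (take $m\to\infty$ inside $\sup_r(1+r)^\alpha f_j(\CA_n-\CA_m,r)\leq\eps$) yields both $\CA\in\SAal$ and $\|\CA-\CA_n\|'_{j,\alpha}\to 0$. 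The same bounds also confirm $\SAal$ is a $\ast$-subalgebra, via the product estimate $f_j(\CA\CB,r)\leq \|\CB\|f_j(\CA,r)+\|\CA\|f_j(\CB,r)+f_j(\CA,r)f_j(\CB,r)$ and $f_j(\CA^*,r)=f_j(\CA,r)$.

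The step that needs genuine (if routine) care is independence of $j$. Writing $d:=|j-j'|$, the inclusion $B_j(r)\subset B_{j'}(r+d)$ gives $f_{j'}(\CA,r+d)\leq f_j(\CA,r)$, i.e.\ $f_{j'}(\CA,r)\leq f_j(\CA,r-d)$ for $r\geq d$. Combining the elementary inequality $(1+r)^\alpha\leq(1+d)^\alpha(1+r-d)^\alpha$ for $r\geq d$ with the obvious $f_{j'}(\CA,r)\leq \|\CA\|$ for $r<d$ yields $\|\CA\|'_{j',\alpha}\leq C(d,\alpha)\|\CA\|'_{j,\alpha}$, and symmetrically with $j,j'$ swapped. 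Hence the two seminorm families are equivalent and define the same subspace equipped with the same \Frechet\ topology.
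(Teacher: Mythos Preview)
Your proof is correct and follows essentially the same route as the paper. The paper argues cyclically $(1)\Rightarrow(2)\Rightarrow(3)\Rightarrow(1)$ and uses a best-approximation sequence $\CA^{(n)}\in\SA_{B_j(n)}$ attaining $f_j(\CA,n)=\|\CA-\CA^{(n)}\|$, whereas you use the conditional expectation $\CA_n=\Pi_{B^c_j(n)}(\CA)$ (costing only an extra factor of $2$); both choices give the same estimate $\|\CA-\CA^{(n)}\|'_{j,\alpha}\lesssim \sup_{r\geq n}(1+r)^\alpha f_j(\CA,r)$. Your treatment is somewhat more self-contained: you spell out completeness, the $*$-algebra product estimate, and the explicit norm comparison $\|\CA\|'_{j',\alpha}\leq (1+|j-j'|)^\alpha\|\CA\|'_{j,\alpha}$ for site-independence, all of which the paper either leaves implicit in the phrase ``manifest in characterization (2)'' or defers to Appendix~B.
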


\begin{proof}
The implication (1) $\Rightarrow$ (2) follows directly from the definition of the norms $\|\cdot\|'_{j,\alpha}$. The definition of the completion with respect to norms implies (3) $\Rightarrow$ (1).

To show that (2) implies (3), let $\{\CA^{(n)} \}$ be a sequence of observables $\CA^{(n)} \in \SA_{B_j(n)}$ for $n \in \NN$ for which the infimum in the definition of $f_j(\CA,n)$ is reached: $f_j(\CA,n)=\|\CA-\CA^{(n)}\|.$ We have
% To show that (2) implies (3), let $\{\CA^{(n)} \}$ be a sequence of observables $\CA^{(n)} = \CA|_{B_j(n)}$ for $n \in \NN$. We have
% \begin{multline}
% \|\CA^{(n+1)}-\CA^{(n)}\|'_{j,\alpha} = \|\CA^{(n+1)}-\CA^{(n)}\| + \sup_r (1+r)^{\alpha} f_j(\CA^{(n+1)}-\CA^{(n)},r) \leq \\ \leq \|\CA^{(n+1)}-\CA^{(n)}\| (1+ (n+2)^{\alpha})  \leq \\ \leq (1+ (n+2)^{\alpha}) (f_j(\CA,n+1)+f_j(\CA,n)).
% \end{multline}
\begin{multline}
\|\CA-\CA^{(n)}\|'_{j,\alpha} = \|\CA-\CA^{(n)}\| + \sup_r (1+r)^{\alpha} f_j(\CA-\CA^{(n)},r) \leq \\ \leq f_j(\CA,n)+\sup_{r\geq n} (1+r)^\alpha f_j(\CA,r).
\end{multline}
Therefore $\{\CA^{(n)} \}$ converges in the norms $\|\cdot\|'_{j,\alpha}$ to $\CA$.

Independence of $j$ is manifest in the characterization (2).
\end{proof}

\begin{definition}
The space of almost local observables  $\SAal$ is the  $\ast$-subalgebra satisfying the equivalent characterizations of Lemma \ref{lma:Aal}.
\end{definition}

As recalled in Appendix \ref{app:frechet}, one can use the family of norms $\|\cdot\|'_{j,\alpha}$ for a fixed $j\in\Lambda$ to define a topology on $\SAal$. In Appendix \ref{app:Aal} we show that this topology on $\SAal$ does not depend on $j$ and turns $\SAal$ into a \textit{\Frechet}  \textit{algebra}.

We denote by $\mfkd_{\Gamma}$, $\mfkdl$ and $\mfkdal$ the real subspaces of $\SA_{\Gamma}$, $\SAl$ and $\SAal$, respectively, defined 
by two conditions: $\CA^*=-\CA$ (anti-self-adjoint) and $\langle \CA \rangle_\infty =0$ (traceless). All these spaces are real Lie algebras with respect to the commutator. The Lie algebra $\mfkdal$ is the completion of $\mfkdl$ with respect to the norms $\|\cdot\|'_{j,\alpha}$. Since the bracket is continuous, it is a \textit{\Frechet-Lie algebra}. 

Note that for any  $\CA\in\mfkdl,\mfkdal$  we have $\|\CA\| \geq f_x(\CA,0) \geq \|\CA\|/2$ for any $x \in \RR^d$. Therefore, upon restriction to traceless anti-self-adjoint observables one can replace the norms (\ref{eq:primednorms}) with an equivalent set of norms
\beq\label{eq:norms}
\|\CA\|_{j,\alpha}:=\sup_r (1+r)^\alpha f_j(\CA,r),\ \alpha\in\NN_0 .
\eeq
In this paper we will mostly work with traceless anti-self-adjoint observables and then will use the norms (\ref{eq:norms}). Two other equivalent sets of norms are defined in Appendix \ref{app:normfamilies}.

\section{Complexes of currents}

\subsection{Hamiltonians and interactions}

A Hamiltonian for a lattice system is an unbounded densely-defined real derivation of the algebra $\SA$. All Hamiltonians of physical interest have the form
\beq
\CA\mapsto \sum_{\Gamma\in {\rm Fin}(\Lambda)} [\Phi_\Gamma,\CA],
\eeq
where $\Phi_\Gamma\in \mfkd_\Gamma$ satisfies $\Phi_\Gamma^*=-\Phi_\Gamma$. Typically one also assumes that $\sup_\Gamma \lVert\Phi_\Gamma\rVert <\infty$. In the mathematical physics literature the function $\Phi:{\rm Fin}(\Lambda)\ra\mfkdl$, $\Gamma\mapsto\Phi_\Gamma$, is known as an interaction. 

The domain of such a derivation depends on how rapidly $\lVert\Phi_\Gamma\rVert$ decays with the size of $\Gamma$. As a minimum, it should be  defined everywhere on  $\SAl$. Further, it should be possible to exponentiate a physically sensible derivation to an automorphism of $\SA$. Before we can describe suitable decay conditions, however, we need to deal with the fact that the map from interactions to derivations is many-to-one and that there is a large ``gauge freedom" in choosing the function $\Phi$ for a given derivation. Any decay  condition on $\Phi$ should respect this freedom. Unfortunately, it is not straightforward to describe this ``gauge freedom''. To rectify the situation, one may impose a suitable ``gauge condition'' on $\Phi$ so that a derivation determines $\Phi$ uniquely. One natural condition is to demand that for any proper inclusion $\Gamma'\subset\Gamma$ and any $\CA\in\mfkd_{\Gamma'}$ one has $\langle\Phi_\Gamma \CA\rangle_\infty=0$. This condition implies that $\Phi_\Gamma$ is not localized on any proper subset of $\Gamma$. Its advantage is that it does not depend on any choices. But this gauge condition is difficult to work with when the interaction has exponential or slower than exponential decay because the number of finite subsets of $B_j(r)\cap \Lambda$ grows as $e^{C r^d}$. Later we will describe a convenient but non-canonical ``gauge condition'' on $\Phi$. 

Another drawback of describing a Hamiltonian via an interaction $\Phi$ is that there is no natural notion of energy density (and therefore also of energy current). As an alternative, we may study  derivations of the form
\beq\label{eq:Fder}
\delta_{\chh}: \CA\mapsto \sum_{j\in\Lambda} [\chh_j,\CA],
\eeq
where each $\chh_j$ is a traceless anti-self-adjoint observable which in some sense is localized in the neighborhood of the site $j$ and can be interpreted as $i$ times the energy density on this site. The gauge freedom is present in this approach as well, since the observables $\chh_j$ are not uniquely determined by the derivation $\delta_\chh$. However, it is fairly straightforward to parameterize this gauge freedom and to define a class of derivations which is mathematically natural and is large enough to describe lattice systems with rapidly decaying interactions.

For that, we introduce a certain chain complex, which we call {\it the complex of currents}. We first describe this complex for finite-range interactions, and then show how to complete it to a  complex of rapidly decaying currents using a family of norms, in the same way as the algebra $\SAl$ can be completed to $\SAal$.

\subsection{The complex of finite-range currents}

\subsubsection{Definition of the complex}

% We say that $\CA \in \mfkdl$ is $(C,R)$-localized at $j \in \Lambda$ if $\CA \in \SA_{B_j(R)}$ and $\|\CA\|\leq C$. 

For any non-negative integer $q$ we define a uniformly local (UL) $q$-chain as a skew-symmetric function $\cha: \Lambda^{q+1}\ra \mfkdl$ for which there are constants $C>0$ and $R>0$ such that for any $\{j_0,j_1,...,j_q\} \subset \Lambda$ and any $a \in\{ 0,1,...,q\}$ we have $\cha_{j_0 ... j_q} \in \SA_{B_{j_a}(R)}$ and $\|\cha_{j_0 ... j_q}\| \leq C$. The smallest possible value of $R$ is called the range of $\cha$. If the distance between any $j_a$ and $j_b$ is greater than $2R$, then $\cha_{j_0 ... j_q} = 0$. The space of UL $q$-chains will be denoted $C_q(\mfkdl)$. The boundary operator 
$\partial_{q}: C_{q}(\mfkdl)\ra C_{q-1}(\mfkdl)$
has the form 
\beq\label{eq:partial}
\left(\partial_{q}\cha\right)_{j_1\ldots j_q}:=\sum_{j_0\in\Lambda} \cha_{j_0 j_1\ldots j_q}.
\eeq
Since $\cha_{j_0 j_1\ldots j_q}$ is skew-symmetric in indices, we have $\partial_{q}\circ\partial_{q+1}=0$ for all $q$. 
Note that to any UL 0-chain $\cha:j\mapsto\cha_j$ one can attach a derivation of $\SAl$
\beq
\delta_{\cha}: \CA \mapsto \sum_{j\in\Lambda} [\cha_j,\CA],
\eeq
We will call derivations of this form UL derivations. Physically, they correspond to finite-range Hamiltonians. It is easy to see that a UL derivation $\delta_\cha$ is not affected if we add to $\cha$ an exact 0-chain. We will show that this is the only gauge freedom associated to UL derivations. That is, we will show that the space of UL derivations is isomorphic to the zeroth homology $H_0(\mfkdl)$ of the complex
\beq \label{eq:ULcomplex}
\ldots \stackrel{\partial_{2}}{\ra} C_{1}(\mfkdl)\stackrel{\partial_{1}}{\ra} C_0(\mfkdl) \stackrel{}{\ra} 0.
\eeq

\subsubsection{Homology}

To compute the homology of $(C_{\bullet}(\mfkdl),\partial)$, we introduce the following definition.
\begin{definition}
A brick is a subset of $\RR^d$ of the form $\{(x_1,\ldots,x_d): n_i\leq x_i < m_i,\ i=1,\ldots, d\}$, where $n_i$ and $m_i$ are integers satisfying  $n_i<m_i$. The empty brick is the empty subset.
\end{definition}
We denote the set of all bricks in $\RR^d$ together with the empty brick by $\Br_d$. $\Br_d$ is a poset with respect to inclusion.\footnote{In fact it is a lattice. That is, every two elements of $\Br_d$ have a join and a meet in $\Br_d$.} Any finite subset of $\Lambda$ is contained in some brick.

Recall that for any $Y\subset \RR^d$ we let $\mfkd_Y=\mfkdl\cap \SA_Y$. Clearly, if $Z\subset Y$, then $\mfkd_Z\subset\mfkd_Y$. For any $Y\in\Br_d$ we define $\mfkd^Y$ to be the orthogonal complement of the subspace 
\beq
\sum_{\substack{Z\in\Br_d\\ Z\subsetneq Y}}\mfkd_Z 
\eeq
in $\mfkd_Y$ with respect to the inner product $\langle \CA,\CB\rangle :=\langle \CA^*\CB\rangle_\infty$. Elements of $\mfkd^Y$ are anti-self-adjoint traceless local observables which are localized on $Y$ but not on any brick which is a proper subset of $Y$. It is easy to see that for any $Y\in\Br_d$ we have a direct sum decomposition $\mfkd_Y=\bigoplus_{Z\in\Br_d,Z\subseteq Y} \mfkd^Z$ and $\mfkdl=\bigoplus_{Y\in \Br_d}\mfkd^Y$. For any $\CA\in\mfkdl$ we will denote by $\CA^Y$ its component in $\mfkd^Y$. For an example of a brick decomposition see Fig. \ref{fig:brickdecomposition}. Clearly, if $\CA \in \mfkd_{X}$, then $\CA^Y=0$ whenever $Y\cap X = \emptyset$.  
%It is also easy to see that $\CA^Y=0$ whenever $\diam(Y)>\sqrt d(\diam(X)+2)$. 
Additional properties of the brick expansion can be found in Appendix \ref{app:bricks}.

\begin{figure}
    \centering
\begin{tikzpicture}[scale=.8]
\filldraw[color=red!60, fill=red!5, very thick](1-0.1,1-0.1) rectangle (5+0.1,4+0.1);
\filldraw[color=red!60, fill=red!5, very thick](1,1) rectangle (3,3);
\filldraw[color=red!60, fill=red!5, very thick](1+.1,1+.1) rectangle (3-.1,2-.1);

% \draw[red, very thick] (0,0) -- (1.5*3.4641,1.4*2);
% \draw[red, very thick] (0,0) -- (-1.5*3.4641,1.4*2);
% \draw[red, very thick] (0,0) -- (0,-5);

\foreach \x in {0.5,1.5,...,5.5}{                           % Two indices running over each
    \foreach \y in {0.5,1.5,...,4.5}{                       % node on the grid we have drawn 
    \node[draw,black,circle,inner sep=.7pt,fill] at (\x,\y) {}; % Places a dot at those points
    }
}

\node  at (2,1+0.5) {$Y_1$};
\node  at (2,2+0.5) {$Y_2$};
\node  at (4,3) {$Y_3$};

% \draw [gray, very thick] (4,0) arc [radius=4, start angle=0, end angle= 360];
% \node  at (0,1.5*2+0.3) {$A_2$}; 
% \node  at (-1.5*1-0.4,-1.5*1-0.4) {$(\frac12,\frac12)$};
% \node  at (7.5*1+0.4,7.5*1+0.4) {$(\frac92,\frac92)$};
% \node  at (1.5*1.7321,-1.5*1-0.4) {$A_1$};
% \node  at (3.5,3.5) {$D$};

\end{tikzpicture}
    \caption{Let $\Lambda \subset \RR^2$ be the lattice $(\ZZ + \frac12)^2$ with two-dimensional on-site Hilbert spaces. Let $\sigma^{i}_{(x,y)}$ be the Pauli matrix observables on site $(x,y)$. The brick decomposition of $\CA = \sigma^z_{(\frac32,\frac32)} \sigma^x_{(\frac52,\frac32)} + \sigma^x_{(\frac32,\frac32)} \sigma^y_{(\frac52,\frac52)} + \sigma^x_{(\frac32,\frac32)} \sigma^z_{(\frac32,\frac52)} \sigma^z_{(\frac92,\frac72)} + \sigma^x_{(\frac32,\frac32)} \sigma^y_{(\frac92,\frac52)} \sigma^z_{(\frac72,\frac72)}$ has three terms $\CA^{Y_1} = \sigma^z_{(\frac32,\frac32)} \sigma^x_{(\frac52,\frac32)}$, $\CA^{Y_2} = \sigma^x_{(\frac32,\frac32)} \sigma^y_{(\frac52,\frac52)}$, $\CA^{Y_3}  = \sigma^x_{(\frac32,\frac32)} \sigma^z_{(\frac32,\frac52)} \sigma^z_{(\frac92,\frac72)} + \sigma^x_{(\frac32,\frac32)} \sigma^y_{(\frac92,\frac52)} \sigma^z_{(\frac72,\frac72)}$ for bricks $Y_1,Y_2,Y_3$ shown on the picture.}
    \label{fig:brickdecomposition}
\end{figure}

% It is also shown there that any $b$-localized $\CA\in\mfkdal$ can be written as an absolutely convergent (in the \Frechet\ topology) sum 
% \beq
% \CA=\sum_{Y\in \Br_d} \CA^Y,
% \eeq
% and for any $j\in\Lambda$ one has an estimate
% \beq\label{AYestimate}
% \| \CA^Y \| \leq  4^d b(\diam(\{j\}\cup Y)).
% \eeq

% Using this result, one can easily prove the vanishing of $H_k(\mfkdl)$ and $H_k(\mfkdal)$ in positive degree. 

Let $h_q : C_q(\mfkdl) \to C_{q+1}(\mfkdl)$ be a map defined by
\beq \label{eq:contrho}
h_q(\cha)_{j_0\ldots j_{q+1}}=\sum_{Y\in \Br_d}\sum_{k=0}^{q+1} (-1)^k \frac{\chi_Y(j_k)}{|Y\cap\Lambda|} \chA^{Y}_{j_0\ldots \widehat j_k \ldots j_{q+1}}.
\eeq
where $\widehat{j}_k$ denotes the omission of $j_k$. Since the sum over $Y$ is finite, $h_q$ is well-defined. Note that $h_{q-1}\circ\partial_q +\partial_{q+1}\circ h_q={\rm id}$ for any $q>0$. Therefore we have the following 
\begin{theorem} \label{thm:poincareUL}
$H_{q}(\mfkdl)=0$ for all $q>0$.
\end{theorem}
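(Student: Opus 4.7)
The identity $h_{q-1}\circ\partial_q+\partial_{q+1}\circ h_q=\mathrm{id}$ for $q>0$ stated in the excerpt is a chain homotopy between the identity and the zero map on $C_\bullet(\mfkdl)$ in positive degrees. Once it is established, the theorem follows by one line of homological algebra: if $\partial_q\cha=0$ with $q>0$, then $\cha=\partial_{q+1}(h_q\cha)$, so every cycle is a boundary. The substantive content of the proof is therefore to check that $h_q$ actually sends UL $q$-chains to UL $(q+1)$-chains and to verify the homotopy identity by direct computation.

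For the first point, if $\cha$ is a UL $q$-chain of range $R$, then each value $\cha_{j_0\ldots j_q}$ is localized in a region of bounded diameter, so its brick decomposition $\cha_{j_0\ldots j_q}=\sum_Y\cha^Y_{j_0\ldots j_q}$ has only finitely many non-zero summands, supported on bricks $Y$ of diameter bounded in terms of $R$ and $d$. Thus the sum over $Y\in\Br_d$ in the definition of $h_q(\cha)$ is finite, and $|Y\cap\Lambda|\geq 1$ whenever $\cha^Y\neq 0$. Skew-symmetry of $h_q(\cha)$ in its indices is immediate from that of $\cha$ and the alternating signs $(-1)^k$, while uniform norm bounds and a finite range for $h_q(\cha)$ follow from those for $\cha$.

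For the homotopy identity itself, I would compute $(\partial_{q+1}h_q+h_{q-1}\partial_q)\cha$ directly. Using linearity of the brick decomposition together with the compatibility $(\partial_q\cha)^Y=\partial_q(\cha^Y)$---which holds because the brick decomposition is pointwise and commutes with summation over a single coordinate---both pieces become finite double sums indexed by $Y\in\Br_d$ and by the position of the omitted coordinate. The $k=0$ contribution to $\partial_{q+1}h_q\cha$ (the term in which the index being summed over is the one that gets omitted) simplifies to
\beq
\sum_{j_0\in\Lambda}\sum_Y\frac{\chi_Y(j_0)}{|Y\cap\Lambda|}\,\cha^Y_{j_1\ldots j_{q+1}}=\sum_Y\cha^Y_{j_1\ldots j_{q+1}}=\cha_{j_1\ldots j_{q+1}},
\eeq
using $\sum_{j_0\in\Lambda}\chi_Y(j_0)=|Y\cap\Lambda|$ and the fact that the brick decomposition sums to the identity. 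The remaining terms in $\partial_{q+1}h_q\cha$, with $k\geq 1$, carry a sign $(-1)^k$, whereas the corresponding terms in $h_{q-1}\partial_q\cha$ carry the opposite sign $(-1)^{k-1}$ after one matches the re-indexing induced by $\partial_q$; these pair up and cancel. The main obstacle is the purely combinatorial bookkeeping required to align the index conventions in $\partial_{q+1}h_q$ and $h_{q-1}\partial_q$ so that the sign cancellation becomes manifest; no analytic subtlety arises at this stage, since all sums involved are finite for UL chains.
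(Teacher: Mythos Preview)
Your proposal is correct and follows exactly the paper's approach: the paper simply defines $h_q$, remarks that the sum over $Y$ is finite so $h_q$ is well-defined, and then states the identity $h_{q-1}\circ\partial_q+\partial_{q+1}\circ h_q=\mathrm{id}$ without proof, deducing the theorem immediately. You have merely supplied the routine verifications (UL property of $h_q(\cha)$ and the sign cancellation in the homotopy identity) that the paper leaves to the reader.
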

% \begin{proof}
% For $\chA\in C_q(\mfkdl)$ let
% \beq \label{eq:contrho}
% \chB_{j_0\ldots j_{q+1}}=\sum_{Y\in \Br_d}\sum_{k=0}^{q+1} (-1)^k \frac{\chi_Y(j_k)}{|Y\cap\Lambda|} \chA^{Y}_{j_0\ldots \widehat j_k \ldots j_{q+1}},
% \eeq
% The sum over $Y\in \Br_d$ is finite, thus $\chB_{j_0\ldots j_{q+1}}$ is a well-defined local observable. Furthermore, if $\chA$ has range $R$, then $\chA^Y_{j_0\ldots j_q}=0$ whenever $\diam(Y)>2R+2$. Thus eq. (\ref{eq:contrho}) defines a linear map $h_q:C_q(\mfkdl)\ra C_{q+1}(\mfkdl)$ with the desired properties and $H_q(\mfkdl)=0$ for $q>0$. 

% If $\chA\in C_q(\mfkdal)$, we can define $\chB\in C_{q+1}(\mfkdal)$ by the same formula. The sum over $Y$ is absolutely \Frechet-convergent by the estimate (\ref{AYestimate}). The same computation as in the UL case shows that the resulting map $h_q$ is a contracting homotopy for $q>0$. 
% We omit the straightforward check that for $q\geq 0$  $h_q$ is a continuous UAL map. 
% \end{proof}
To describe the homology in degree zero, we introduce the following definition. 
\begin{definition}
Let $\mfkDl$ be the space of bounded functions $\derA:\Br_d\ra \mfkdl$, $Y\mapsto \derA^Y$, such that 
\begin{itemize}
    \item $\derA^Y\in \mfkd^Y$,
    \item $\derA^Y=0$ for all $Y$ of sufficiently large diameter.
\end{itemize}
% Let $\mfkDal$ be the space of functions $\derF:\Br_d\ra \mfkdl$, $Y\mapsto \derF^Y$ such that 
% \begin{itemize}
%     \item $\derF^Y\in \mfkd^Y$,
%     \item $\|\derF\|_\alpha:=\sup_{Y\in \Br_d} (1+\diam(Y))^\alpha \|\derF^Y\|<\infty,\quad \forall  \alpha\in\NN_0.$
% \end{itemize}
\end{definition}
\noindent
We also extend the definition of $\p_q$ and $h_q$ by introducing maps $\p_0:C_0(\mfkdl) \to \mfkDl$ and $h_{-1}: \mfkDl \to C_0(\mfkdl)$ defined by 
\beq
(\p_{0} \cha)^Y := \sum_{j \in \Lambda} \cha^Y_j,
\eeq
\beq \label{eq:hminus1}
h_{-1}(\derA)_j :=\sum_{Y\in \Br_d} \frac{\chi_j(Y)}{|Y\cap\Lambda|}\derA^Y.
\eeq
Note that the sums on the r.h.s. of both expressions are finite and thus well-defined. We have 
\beq 
h_{q-1}\circ\partial_q +\partial_{q+1}\circ h_q={\rm id}, \quad q \geq 0,
\eeq 
\beq 
\p_0 \circ h_{-1} = {\rm id}.
\eeq
\begin{theorem} \label{thm:ULH0}
$H_0(\mfkdl)$ is isomorphic to $\mfkDl$ and to the space of UL derivations. 
% $H_0(\mfkdal)$ is isomorphic to $\mfkDal$ and to the space of UAL derivations.
\end{theorem}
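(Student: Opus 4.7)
The plan is to establish the two claimed isomorphisms in turn: the first is purely homological, while the second requires an additional gauge-uniqueness argument.

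First, I would identify $H_0(\mfkdl)$ with $\mfkDl$ via $\partial_0$. The map $\partial_0$ is well-defined: for any UL $0$-chain $\cha$ of range $R$, each $\cha_j$ is localized inside a brick of diameter bounded in terms of $R$, so $(\partial_0\cha)^Y=\sum_j\cha_j^Y$ is a finite sum and vanishes once $\diam(Y)$ exceeds a bound depending only on $R$. Surjectivity then follows from $\partial_0\circ h_{-1}=\mathrm{id}$. For the kernel, $\mathrm{im}\,\partial_1\subseteq\ker\partial_0$ is immediate from the antisymmetry of $1$-chains, since $(\partial_0\partial_1\chb)^Y=\sum_{j,k}\chb_{kj}^Y=0$. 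The reverse inclusion uses the chain-homotopy identity $\mathrm{id}_{C_0}=\partial_1\circ h_0+h_{-1}\circ\partial_0$, which shows that $\partial_0\cha=0$ forces $\cha=\partial_1(h_0\cha)\in\mathrm{im}\,\partial_1$.

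Next, the map $\cha\mapsto\delta_{\cha}$ is linear, surjects onto the space of UL derivations by definition of the latter, and kills boundaries $\partial_1\chb$ by the same antisymmetry argument, so it induces a surjection from $H_0(\mfkdl)$ onto UL derivations. Using the first identification, proving injectivity reduces to the implication $\delta_\cha=0\Rightarrow\partial_0\cha=0$. To prove this, my plan is to pair the relation $\sum_j[\cha_j,\CA]=0$ with the tracial state $\langle\cdot\rangle_\infty$ and exploit cyclicity: for every $\CA,\CB\in\SAl$ one obtains $\sum_j\langle[\CA,\CB^*]\,\cha_j\rangle_\infty=0$. Since commutators in $\SAl$ span the traceless part and each $\cha_j$ is traceless, this extends to $\sum_j\langle\CC\,\cha_j\rangle_\infty=0$ for every $\CC\in\SAl$, the sum being finite because contributions with $\supp(\CC)\cap\supp(\cha_j)=\emptyset$ vanish by the factorization of $\langle\cdot\rangle_\infty$ and the tracelessness of $\cha_j$. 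Expanding $\cha_j=\sum_Y\cha_j^Y$ and interchanging these finite sums gives
\beq
\sum_Y\langle \CC\,(\partial_0\cha)^Y\rangle_\infty=0\qquad\text{for every }\CC\in\SAl.
\eeq
I would then set $\CC=(\partial_0\cha)^{Y_0}$ (itself an element of $\SAl$, supported on the finite brick $Y_0$) and invoke the mutual orthogonality of the subspaces $\mfkd^Y\subseteq\mfkdl$ with respect to the tracial inner product to collapse the sum to $\|(\partial_0\cha)^{Y_0}\|^2=0$, forcing $(\partial_0\cha)^{Y_0}=0$ for every brick $Y_0$.

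The main obstacle is the orthogonality of $\mfkd^{Y_1}$ and $\mfkd^{Y_2}$ for incomparable bricks $Y_1,Y_2$: this is not manifest from the definition, which directly handles only the case $Y_1\subsetneq Y_2$. I would verify it by noting that $Y_1\cap Y_2$ is itself a brick and a proper sub-brick of each $Y_i$, so an element of $\mfkd^{Y_1}$ has no component supported on $Y_1\cap Y_2$; combined with the factorization of $\langle\cdot\rangle_\infty$ across the disjoint regions $Y_1\setminus Y_2$, $Y_1\cap Y_2$, $Y_2\setminus Y_1$, this kills $\langle X,Y\rangle$ for $X\in\mfkd^{Y_1}$, $Y\in\mfkd^{Y_2}$. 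This property is presumably among the brick-expansion statements in Appendix~\ref{app:bricks}; once it is in hand, the rest of the argument is bookkeeping with the two chain-homotopy identities and a single application of the tracial pairing.
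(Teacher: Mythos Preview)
Your proposal is correct. For the identification $H_0(\mfkdl)\cong\mfkDl$ you proceed exactly as the paper does, only more cleanly: you invoke the homotopy identity $\mathrm{id}=\partial_1\circ h_0+h_{-1}\circ\partial_0$ directly, whereas the paper spells out the 1-chain $\chb_{jk}=\sum_Y\frac{\chi_Y(j)}{|Y\cap\Lambda|}\cha_k^Y-\sum_Y\frac{\chi_Y(k)}{|Y\cap\Lambda|}\cha_j^Y$, which is nothing but $h_0(\cha)$.

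For the injectivity of $H_0(\mfkdl)\to\{\text{UL derivations}\}$ you take a genuinely different route. The paper argues via commutants: if $\derA=\partial_0\cha$ and $\sum_Y[\derA^Y,\CA]=0$ for all $\CA$, then for any brick $Z$ the finite sum $\CB=\sum_{Y\cap Z\neq\emptyset}\derA^Y$ commutes with $\SA_Z$, hence $\CB=1_Z\otimes\tilde\CB$ and so $\CB^X=\derA^X=0$ for every $X\subseteq Z$. Your approach instead pairs with the tracial state and uses that the $\mfkd^Y$ are mutually orthogonal. Both arguments are short; the paper's avoids any explicit use of the inner product and works purely with the tensor-factor structure, while yours ties the result to the Hilbert-space geometry of the brick decomposition. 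The orthogonality statement you need (for incomparable bricks) is indeed immediate from the Pauli-basis description in Appendix~\ref{app:bricks}, so your ``main obstacle'' is already dispatched there; your own sketch via factorization of $\langle\cdot\rangle_\infty$ is on the right track but would be cleaner phrased as $X|_{Y_1\cap Y_2}=0$ for $X\in\mfkd^{Y_1}$, which follows because conditional expectation is the orthogonal projection onto $\SA_{Y_1\cap Y_2}$ and $X\perp\mfkd_{Y_1\cap Y_2}$ by definition.
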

\begin{proof} 
To any UL 0-chain $\cha$ we attach a function $Y\mapsto \derA^Y=\sum_j \cha^Y_j$. The sum over $j$ is finite. It is easy to see that this function belongs to $\mfkDl$. Furthermore, it is trivially checked that if the 0-chain $\chA$ is exact, then the corresponding function vanishes. Thus we get a map $\rho_{l}: H_0(\mfkdl)\ra\mfkDl$. 

This map is surjective because a right inverse exists: to an element $\derA$ of $\mfkDl$ one can attach a UL 0-chain $h_{-1}(\derA)$. To prove injectivity, suppose that a UL 0-chain $\cha$ satisfies $\sum_j \cha_j^Y=0$ for all $Y\in \Br_d$. For any $j,k\in\Lambda$ let
\beq
\chb_{jk}=\sum_{Y\in \Br_d} \frac{\chi_j(Y)}{|Y\cap\Lambda|} \cha_k^Y -\sum_{Y\in \Br_d} \frac{\chi_k(Y)}{|Y\cap\Lambda|}\cha_j^Y.
\eeq
It is straightforward to check that the collection of observables $\chb_{jk}$ defines a UL 1-chain, and that $\partial \chb=\cha$. Thus the map $\rho_{l}$ is an isomorphism.

By definition of UL derivations, the map from $H_0(\mfkdl)$ to UL derivations defined by (\ref{eq:Fder}) is surjective. To prove injectivity, suppose $\delta_\cha=0$ for some UL 0-chain $\cha$. Then for any $\CA\in\SAl$ we have
\beq\label{eq:adFYsum}
\sum_{Y\in \Br_d} [\derA^Y,\CA]=0,
\eeq
where $\derA^Y=\sum_j \cha^Y_j$ is an element of $\mfkd^Y$. Let us pick an arbitrary brick $Z\in \Br_d$. Then for any $\CA\in\SA_Z$ the sum in (\ref{eq:adFYsum}) truncates to those $Y$ which have a nonempty intersection with $Z$. Thus if we define a traceless local observable $\CB$ by
\beq
\CB=\sum_{Y\in \Br_d, Y\cap Z\neq\emptyset} \derA^Y,
\eeq
then $\CB=1_Z\otimes \tilde \CB$, where $\tilde\CB\in \SA_{\overline{Z}}$. Therefore $\CB^X=0$ for any brick $X$ such that $X\subseteq Z$. On the other hand, from the definition of $\CB$ we have that for such bricks $\CB^X=\derA^X$. Since $Z$ was arbitrary, we conclude that $\derA^Y=0$ for all $Y\in \Br_d$.
\end{proof}

Thus, the augmented complex
\beq  \label{eq:ULNC}
\ldots \stackrel{\partial_{2}}{\ra} C_{1}(\mfkdl)\stackrel{\partial_{1}}{\ra} C_0(\mfkdl) \stackrel{\partial_{0}}{\ra} \mfkDl \stackrel{}{\ra} 0.
\eeq 
is contractible with a contracting homotopy $h_q$,\, $q \geq -1$. We call it the {\it uniformly local Noether complex}. It is graded by integers $q\geq -01$.
% \begin{remark}
% The space of anti-self-adjoint UAL derivations $\mfkDal$ can be regarded as a subspace of the space of interactions satisfying the following ``gauge condition'': $\Phi_\Gamma\in\mfkd^\Gamma$ if $\Gamma\in \Br_d$ and $\Phi_\Gamma=0$ if $\Gamma\notin \Br_d$.
% \end{remark}

% We conclude this subsection with the following definition.
% \begin{definition}
% The uniformly local Noether complex is the augmented complex $C_{\bullet}(\mfkdl)  \stackrel{\partial_{0}}{\ra} \mfkDl$. 
% \end{definition}
% \noindent
% The reason for this terminology will become apparent in the next sections. Theorems \ref{thm:poincareUL} and \ref{thm:ULH0} imply that UL Noether complexes has trivial homology. 
For any $\derF\in\mfkDl$ we denote the action of the corresponding UL derivation on $\CA\in\mfkdl$ by $\derF(\CA)$. Explicitly, $\derF(\CA)=\sum_Y [\derF^Y,\CA].$

\subsubsection{Brackets}

An important property of the space of UL derivations $\mfkDl$ is that it has the structure of a Lie algebra. This is easiest to see if we identify it with $H_0(\mfkdl)$. For given $\derF,\derG \in \mfkDl$ and $\chf,\chg \in C_0(\mfkdl)$ such that $\derF=\p \chf$ and $\derG=\p \chg$, the Lie bracket can be defined by
\beq \label{eq:Dliebracket}
\{\derF,\derG\} : = \p ([\chf,\chg\})
\eeq
where the components of a UL 0-chain $[\chf,\chg\}$ are defined as a finite sum
\beq
[\chf,\chg\}_j := \sum_{k \in \Lambda} [\chf_k,\chg_j].
\eeq
% It is easy to see that if both $\chF$ and $\chG$ are UAL 0-chains, then $[\chF,\chG\}$ is a UAL 0-chain as well. 

The bracket $[\cdot,\cdot\}$ on 0-chains is not skew-symmetric and awkward to work with. But one can express it through a more natural structure which exists on the augmented complex $C_{\bullet}(\mfkdl)\rightarrow\mfkDl$: the structure of a 1-shifted dg-Lie algebra. This means that there is a degree $1$ bracket $\{\cdot,\cdot\}$ on the augmented complex which is graded-skew-symmetric:
\beq
\{\chF,\chG\} = - (-1)^{(|\chF|+1)(|\chG|+1)} \{ \chG, \chF \},
\eeq
satisfies the graded Jacobi identity:
\begin{multline}
(-1)^{(|\chF|+1)(|\chH|+1)}\{ \chF ,\{ \chG , \chH \} \}  + (-1)^{(|\chG|+1)(|\chF|+1)} \{ \chG ,\{ \chH , \chF \} \} + \\ + (-1)^{(|\chH+1)(|\chG|+1)} \{ \chH ,\{ \chF , \chG \} \} = 0,
\end{multline}
and the graded Leibniz rule:
\beq
\p \{ \chF,\chG \} = \{ \p \chF, \chG \} + (-1)^{|\chF|+1} \{ \chF, \p \chG \}.
\eeq
Here $|\cdot|$ denotes the degree of a chain.
% Here $\partial$ denotes the differential on the augmented complexes.
For $\chf \in C_p(\mfkdl)$, $\chg \in C_q(\mfkdl)$, $\derF\in \mfkDl$ the bracket is defined by
\beq
\{\chf,\chg\}_{j_0...j_{p+q+1}} := \frac{1}{p!q!} [\chF_{j_0...j_{p}},\chG_{j_{p+1} ... j_{p+q+1}}] + \text{(signed permutations)}.
\eeq
\beq
\{\derF,\chG\}_{j_0\ldots j_q} := \derF\left(\chG_{j_0\ldots j_q}\right).
\eeq
while the bracket of two UL derivations is defined to be their Lie bracket eq. (\ref{eq:Dliebracket}). Then for any two UL 0-chains $\chF,\chG$ we can write
\beq
[\chF,\chG\}=\{\partial\chF,\chG\}.
\eeq
The non-skew-symmetric bracket $[\cdot,\cdot\}$ on 0-chains is an example of a ``derived bracket'' \cite{derivedbrackets}. 

There is an injective Lie algebra homomorphism from $\mfkdl$ to the Lie algebra of UL derivations $\mfkDl$ which sends $\CB\in\mfkdl$ to the derivation $\CA\mapsto [\CB,\CA]$. One can describe the image of this homomorphism more intrinsically by making the following definition.
\begin{definition}
An element $\derA\in\mfkDl$ is called summable if $\derA^Y \neq 0$ only for finitely many bricks.
\end{definition}
\noindent 
Physically, summable UL derivations correspond to interactions which are localized at a point. Obviously, summable UL derivations form a Lie sub-algebra of $\mfkDl$. In the following we identify it with $\mfkdl$.

\subsubsection{Integration}

If we interpret a 0-chain $\chb\in C_0(\mfkdl)$ as a density of energy or some other physical quantity, then it is natural to define energy in a region $A$ as a derivation $\chb_A\in\mfkDl$ which acts on $\CA\in\mfkdl$ by $\chb_A(\CA)=\sum_{j\in A} [\chb_j,\CA].$ Equivalently, $\chb_A^Y=\sum_{j\in A}\chb_j^Y$. Generalizing this, 
for any $\chb \in C_q(\mfkdl)$ the contraction of $\chb$ with regions $A_0,...,A_q \subset \RR^d$ is a derivation $\chb_{A_0\ldots A_q}\in\mfkDl$ defined by
\beq
\chb_{A_0\ldots A_q}^Y=  \sum_{j_k \in A_k, \, k = 0,...,q} \chb^Y_{j_0...j_q}.
\eeq
We may interpret $\chb_{A_0...A_q}$ as an ``integral'' of $\chb$ over $A_0,...,A_q$. Since chains are antisymmetric in $j_a$, without loss of generality we can assume that the regions are non-intersecting. 

Note that if all the regions $A_0,...,A_q$ are infinite, then in general the  derivation $\chb_{A_0\ldots A_q}$ is not summable. For the contraction to be a summable derivation (that is, an element of $\mfkdl$), one needs to choose the regions $A_0,\ldots,A_q$ with some care. For our purposes the following set of regions will suffice.
%For $d=0$, the lattice $\Lambda$ consists of a single point $j_0$, $\SAl=\SA_{j_0}$ is a matrix algebra, and $\mfkdl=\mfkd_{j_0}$ The complex of UL chains collapse to its degree-0 component which is simply $\mfkdl=\mfkDl$.
%For $d>0$ one can attach an element of $\mfkdl$ to a UL $d$-chain as follows. 
Let us pick a point $p\in\RR^d$ and a triangulation of $S^{d-1}$ as a boundary of a $d$-simplex. Let $\sigma_0,\ldots,\sigma_d$ be its open $(d-1)$-simplices and let $A_a$, $a=0,\ldots,d$, be an open subset of $\RR^d$ which in polar coordinates has the form $\RR_+\times \sigma_a$. We will say that $A_a$ is a conical region with base $\sigma_a$ and apex $p$. More generally, for a fixed $p$ and a fixed triangulation of $S^{d-1}$ into $d+1$ simplices we say that an open set $A_a$ is an eventually conical region with apex $p$ and base $\sigma_a$ if outside of a ball $B_p(r)$ it coincides with $\RR_+\times\sigma_a$. We say that an ordered partition $(A_0,...,A_d)$ of $\RR^d$ with $\Lambda$ being in the interior is a \textit{conical partition}, if $A_0,...,A_d$ are eventually conical regions with an apex $p$ and bases $\sigma_0,...,\sigma_d$ (see Fig. \ref{fig:simplpart}).
\begin{figure}
    \centering
\begin{tikzpicture}[scale=.4]
\draw[red, very thick] (0,0) -- (1.5*3.4641,1.4*2);
\draw[red, very thick] (0,0) -- (-1.5*3.4641,1.4*2);
\draw[red, very thick] (0,0) -- (0,-5);

\foreach \x in {-2.5,-1.5,...,2.5}{                           % Two indices running over each
    \foreach \y in {-2.5,-1.5,...,2.5}{                       % node on the grid we have drawn 
    \node[draw,gray,circle,inner sep=.5pt,fill] at (2*\x,2*\y) {}; % Places a dot at those points
    }
}

% \draw [gray, very thick] (4,0) arc [radius=4, start angle=0, end angle= 360];
\node  at (0,1.5*2+0.3) {$A_2$}; 
\node  at (-1.5*1.7321,-1.5*1-0.4) {$A_0$};
\node  at (1.5*1.7321,-1.5*1-0.4) {$A_1$};
% \node  at (3.5,3.5) {$D$};

\end{tikzpicture}
    \caption{An example of a conical partition of $\RR^2$.}
    \label{fig:simplpart}
\end{figure}
To any conical partition $(A_0,...,A_d)$ and a UL $d$-chain $\chb$ one can attach an element of $\mfkdl$:
% One can attach an element of $\mfkdal$ to a UL or UAL $d$-chain $\chF$ and an ordered set of $d+1$ non-intersecting eventually conical regions $A_0,\ldots,A_d$ in $\RR^d$ with apex $p$ and bases $\sigma_0,\ldots,\sigma_d$:
\beq\label{eq:cochaincontraction}
\chb_{A_0\ldots A_d}=\sum_{j_k\in A_k, \, k=0,\ldots, d} \chb_{j_0\ldots j_d}
\eeq
which is the contraction of $\chb$ with $A_0,\ldots,A_d$. This sum is finite. Note that the expression (\ref{eq:cochaincontraction}) does not depend on the ordering of the simplices $\sigma_a$ provided they correspond to a fixed orientation of $S^{d-1}$ and changes sign when the orientation is flipped. A version of Stokes' theorem  holds: $(\partial\chc)_{A_0\ldots A_d}=0$ for any $\chc \in C_{d+1}(\mfkdl)$.

\begin{remark} \label{rmk:cosheafcomplex}
With any conical partition $(A_0,...,A_d)$ we can also associate an integrated version of the complex eq. (\ref{eq:ULNC}) of {\Cech } type. Let $\Delta^k$ be the set of $(d-k-1)$-simplices of the triangulation of $S^{d-1}$ as a boundary of a $d$-simplex. For any $\sigma \in \Delta^k$ there is an associated set of cones $\{ A_{i_0},..., A_{i_{k}}\}$. We denote the image of the contraction of $C_k(\mfkdl)$ with $\{ A_{i_0},..., A_{i_{k}}\}$ by $\mfkDl^{(\sigma)}$. Each $\mfkDl^{(\sigma)}$ is a Lie subalgebra of $\mfkDl$. Then we have an exact sequence
\beq \label{eq:cohseafcomplex}
\mfkdl \xrightarrow[]{} \bigoplus_{\sigma \in \Delta^{d-1}} \mfkDl^{(\sigma)} \xrightarrow[]{} \bigoplus_{\sigma \in \Delta^{d-2}} \mfkDl^{(\sigma)} \xrightarrow[]{} ... \xrightarrow[]{}  \bigoplus_{\sigma \in \Delta^{0}} \mfkDl^{(\sigma)} \xrightarrow[]{} \mfkDl
\eeq
with the differential being a sum of signed injections.
\end{remark}

\subsection{The complex of rapidly decaying currents}

In the same way as $\mfkdl$ can be completed to $\mfkdal$ using a family of norms, the space $C_{q}(\mfkdl)$, $q\in\NN_0,$ of UL $q$-chains can be completed using the norms
% The space of UL $q$-chains can be topologized using a family of norms
% \beq
% \|\chF\|_{n}=\sup_{j_0,...,j_q \in \Lambda} \sup_{a = 0,1,...,q} \|\chF_{j_0...j_q}\|_{j_a,n}
% \eeq
\beq \label{eq:UALnorms}
\|\cha\|_{\alpha} := \sup_r (1+r)^{\alpha} f(\cha,r) = \sup_{a\in \{0,1,\ldots,q\}}\sup_{j_0,...,j_q \in \Lambda} \lVert\cha_{j_0...j_q}\rVert_{j_a,\alpha}, \quad\alpha\in\NN_0
\eeq
where
\beq
f(\cha,r):= \sup_{a\in\{0,1,...,q\}} \sup_{j_0,...,j_q \in \Lambda} f_{j_a}(\cha_{j_0...j_q},r)
\eeq
is defined for any $\cha \in C_q(\mfkdl)$. We call the completed space the space of uniformly almost local (UAL)  $q$-chains and denote it by $C_{q}(\mfkdal)$. This means that any element $\cha \in C_{q}(\mfkdal)$ can be represented by a sequence $\{\cha^{(n)}\},\, n \in \NN$ of UL $q$-chains $\cha^{(n)} \in C_q(\mfkdl)$ such that for any $\alpha\in\NN_0$  $\|\cha^{(n)}-\cha^{(m)}\|_{\alpha}$ can be made arbitrarily small by taking arbitrary sufficiently large $n,m$. 
% as the completion of $C_q(\mfkdl)$ with respect to these norms. We will denote this space $C_{q}(\mfkdal)$, $q\in\NN_0$.
% and regard $C_{\bullet}(\mfkdal)$ as a graded \Frechet\  space. 
\begin{lemma} \label{lma:Cdalequiv}
The following characterizations of UAL chains are all equivalent:
\begin{itemize}
    \item[(1)] A skew-symmetric function $\cha:\Lambda^{q+1}\ra\mfkdal$ defines an element of $C_{q}(\mfkdal)$ if $\|\cha\|_{\alpha}<\infty$ for any $\alpha \in \NN$.
    \item[(2)] A skew-symmetric function $\cha:\Lambda^{q+1}\ra\mfkdal$ defines an element of $C_{q}(\mfkdal)$ if there is a function $b(r) \in \Orf$  such that for any $j_0,...,j_q$ the observable $\cha_{j_0...j_q}$ is $b$-localized at $j_a$ for any $a \in \{0,1,...,q\}$.
    \item[(3)] $C_{q}(\mfkdal)$ is the completion of $C_q(\mfkdl)$ with respect to the norms $\|\cdot\|_{\alpha}$.
\end{itemize}
\end{lemma}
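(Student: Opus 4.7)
The plan is to establish the cycle of implications (3) $\Rightarrow$ (1) $\Rightarrow$ (2) $\Rightarrow$ (3), paralleling the proof of Lemma \ref{lma:Aal} at the chain level. The implications (1) $\Leftrightarrow$ (2) and (3) $\Rightarrow$ (1) are straightforward: for (1) $\Rightarrow$ (2) set
\[
b(r) := \sup_{a,j_0,\ldots,j_q} f_{j_a}(\cha_{j_0\ldots j_q}, r),
\]
so the hypothesis gives $(1+r)^\alpha b(r) \leq \|\cha\|_\alpha < \infty$ for every $\alpha$, hence $b \in \Orfm \subseteq \Orf$ and each $\cha_{j_0\ldots j_q}$ is $b$-localized at every $j_a$; the implication (2) $\Rightarrow$ (1) is then immediate from the definition of $\|\cdot\|_\alpha$. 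For (3) $\Rightarrow$ (1), any UL $q$-chain $\cha^{(n)}$ has finite range $R_n$ and a uniform bound on its entries, so $\|\cha^{(n)}\|_\alpha \leq C_n(1+R_n)^\alpha < \infty$, and continuity of $\|\cdot\|_\alpha$ on the completion extends this to the limit.

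The substance is (2) $\Rightarrow$ (3). I build a Cauchy sequence of UL chains approximating $\cha$ by setting, for each $n \in \NN$,
\[
\cha^{(n)}_{j_0\ldots j_q} := \begin{cases} \Pi_{\overline{X_n}}(\cha_{j_0\ldots j_q}), & \max_{a,b}|j_a - j_b| \leq n, \\ 0, & \text{otherwise,}\end{cases}
\]
where $X_n := \bigcup_a B_{j_a}(n)$. Since $X_n$ is symmetric in the indices and contained in $B_{j_a}(2n)$ for every $a$ on the non-cutoff regime, $\cha^{(n)}$ is a well-defined UL $q$-chain of range $n$ with entries bounded uniformly by $2b(0)$. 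The elementary bound $\|\CA - \Pi_{\overline{Y}}(\CA)\| \leq 2\inf_{\CB \in \SA_Y}\|\CA - \CB\|$ combined with $b$-localization at $j_a$ controls $f_{j_a}(\cha_{j_0\ldots j_q} - \cha^{(n)}_{j_0\ldots j_q}, r)$ on the non-cutoff regime by a constant multiple of $b(\max(r,n))$, which after multiplication by $(1+r)^\alpha$ tends to zero uniformly as $n \to \infty$.

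The main obstacle is the cutoff regime $D := \max_{a,b}|j_a - j_b| > n$, where $\cha^{(n)}_{j_0\ldots j_q} = 0$ but $\cha_{j_0\ldots j_q}$ may not be. The key estimate is that simultaneous $b$-localization at every index forces the norm itself to decay with $D$: choosing $a, c$ realizing the maximum, the balls $B_{j_a}(D/3)$ and $B_{j_c}(D/3)$ are disjoint, so the composition $\Pi_{\overline{B_{j_c}(D/3)}} \circ \Pi_{\overline{B_{j_a}(D/3)}}$ factors through $\SA_\emptyset = \mathbb{C}$ and therefore vanishes on the traceless observable $\cha_{j_0\ldots j_q}$; chasing the triangle inequality with the factor-of-two conditional-expectation bound gives $\|\cha_{j_0\ldots j_q}\| \leq 8\, b(\lfloor D/3 \rfloor)$. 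Using $f_{j_a}(\cha_{j_0\ldots j_q}, r) \leq \min(b(r), \|\cha_{j_0\ldots j_q}\|)$ and splitting the sup over $r$ at $r = n/3$, one finds $\sup_r (1+r)^\alpha f_{j_a}(\cha_{j_0\ldots j_q}, r) \to 0$ as $n \to \infty$ uniformly over the cutoff regime. Combining the two regimes yields $\|\cha - \cha^{(n)}\|_\alpha \to 0$ for each $\alpha$, so $\{\cha^{(n)}\}$ is Cauchy in $C_q(\mfkdl)$ and converges to $\cha$ in $C_q(\mfkdal)$, establishing (3).
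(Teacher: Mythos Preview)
Your proof is correct. The implications (3) $\Rightarrow$ (1) $\Rightarrow$ (2) match the paper's treatment, and your argument for (2) $\Rightarrow$ (3) is sound, though the bound in the non-cutoff regime is more precisely $2b(\max(r/2,n))$ rather than $b(\max(r,n))$; this does not affect the conclusion.

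The route you take for (2) $\Rightarrow$ (3) differs from the paper's. The paper approximates $\cha_{j_0\ldots j_q}$ by its best local approximation on the \emph{intersection} $\bigcap_a B_{j_a}(n)$ and invokes Lemma~\ref{lma:chainapprox} (iterated over the $q+1$ indices) to obtain the uniform bound $\|\cha_{j_0\ldots j_q}-\cha^{(n)}_{j_0\ldots j_q}\|\leq 2(2q+1)b(n)$ in a single stroke, with no case split on the diameter of the index set. When the indices are far apart the intersection is empty, the best approximation is zero, and the lemma already delivers the needed decay. Your construction instead restricts to the \emph{union} $\bigcup_a B_{j_a}(n)$ with a hard cutoff at diameter $n$, which forces a separate treatment of the large-diameter regime; your disjoint-balls argument there is essentially the special case of Lemma~\ref{lma:chainapprox} where the two balls do not overlap. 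The paper's approach is more economical (one estimate, no regimes), while yours is more self-contained (it does not appeal to an external lemma and makes the mechanism for decay at large diameters explicit).
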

\begin{proof}
As in Lemma \ref{lma:Aal} the implication (3) $\Rightarrow$ (1) is straightforward.

If all the norms are finite, the function $f(\cha,r)$ can be upper-bounded by an element of $\Orfm$. Therefore (1) implies (2).

It is left to show (2) $\Rightarrow$ (3). Let $\{ \chA^{(n)} \}$ be a sequence of elements of $C_q(\mfkdl)$ such that  $\chA^{(n)}_{j_0...j_q}$ is a best possible approximation of $\chA_{j_0...j_q}$ by a traceless observable on $\mfkd_{B_{j_0}(n) \cap B_{j_1}(n) \cap ... \cap B_{j_q}(n)}$. Lemma \ref{lma:chainapprox} implies
\beq
\|\chA_{j_0\ldots j_q} - \chA^{(n)}_{j_0\ldots j_q}\| \leq 2 (2q+1) b(n).
\eeq
Therefore for any $a\in\{0,\ldots,q\}$ one has 
\beq
f_{j_a}\left(\chA_{j_0\ldots j_q} - \chA^{(n)}_{j_0\ldots j_q},r\right)\leq 2(2q+1){\rm min}(b(r),b(n))
\eeq
and therefore
\beq
\|\chA - \chA^{(n)}\|_{\alpha} \leq 2(2 q+1) \sup_{r\geq n} (1+r)^\alpha b(r).
\eeq
Thus the sequence $\{ \chA^{(n)} \}$  converges in the norms $\|\cdot\|_{\alpha}$ to $\chA$. Therefore (2) $\Rightarrow$ (3).
\end{proof}

% \footnote{An equivalent definition of a UAL $q$-chain involves a function $\chf:\Lambda^{q+1}\ra\mfkdal$ which is  skew-symmetric, concentrated near the diagonal of $\Lambda^{q+1}$ (in the sense that it decays faster than any negative power of $\diam(\{j_0,...,j_q\})$), and which can be approximated by a local observable on a ball of radius $r$ and center at any $j\in  \{j_0,...,j_q\}$ with an $\Or$ error which is uniform in $j$.}

% In the following if $\cha \in C_q(\mfkdal)$ satisfied the condition (2) from this Lemma for some $b(r) \in \Orf$, then we say that $\cha$ is $b$-localized. We also define 
% \beq
% f(\cha,r):= \sup_{a\in\{0,1,...,q\}} \sup_{j_0,...,j_q \in \Lambda} f_{j_a}(\cha_{j_0...j_q},r).
% \eeq
We use the family of norms $\|\cdot\|_{\alpha}$ to define a topology on $C_q(\mfkdal)$ as discussed in Appendix \ref{app:frechet}. Characterization (3) implies that this topology turns $C_q(\mfkdal)$ into a {\Frechet} space. Two other equivalent families of norms are described in Appendix \ref{app:normfamilies}.

Similarly, the space $\mfkDl$ can be completed using the norms
\beq
\|\derF\|^{br}_\alpha:=\sup_{Y\in \Br_d} (1+\diam(Y))^\alpha \|\derF^Y\|<\infty,\quad \forall  \alpha\in\NN_0.
\eeq
The resulting space is denoted by $\mfkDal$, and the resulting space of derivations of $\mfkdal$ is called the space of UAL derivations.

In Appendix \ref{app:complexcontinuity} we show that the boundary map $\p$, the contracting homotopy $h$, the bracket $\{\cdot,\cdot\}$, and the contraction maps on $C_{\bullet}(\mfkdl)$ can be extended to maps on $C_{\bullet}(\mfkdal)$ continuous in the {\Frechet} topology. Therefore we have the complex
\beq \label{eq:UALcomplex}
\ldots \stackrel{\partial_{2}}{\ra} C_{1}(\mfkdal)\stackrel{\partial_{1}}{\ra} C_0(\mfkdal) \stackrel{}{\ra} 0 
\eeq 
and the corresponding augmented exact complex
\beq \label{eq:UALcomplexAug}
\ldots \stackrel{\partial_{2}}{\ra} C_{1}(\mfkdal)\stackrel{\partial_{1}}{\ra} C_0(\mfkdal) \stackrel{\p_{0}}{\ra} \mfkDal \stackrel{}{\ra} 0
\eeq 
with the structure of a 1-shifted dg-{\Frechet}-Lie algebra. We call the latter the {\it uniformly almost local Noether complex} and  denote it $\No_\bullet$. It is graded by integers $q\geq -1$.

\begin{remark}
The space of UAL derivations $\mfkDal$ can be regarded as a subspace of the space of interactions satisfying the following ``gauge condition'': $\Phi_\Gamma\in\mfkd^Y$ if $\Gamma = Y \cap \Lambda$ for a brick $Y \in \Br_d$ and $\Phi_\Gamma=0$ otherwise.
\end{remark}

There is an injective Lie algebra homomorphism from $\mfkdal$ to the Lie algebra of UAL derivations which sends $\CB\in\mfkdal$ to the derivation $\CA\mapsto [\CB,\CA]$. One can describe this sub-space more intrinsically by making the following definition.
\begin{definition}
An element $\derF\in\mfkDal$ is called summable if the infinite sum $\sum_Y \derF^Y$ is absolutely  convergent in the \Frechet\ topology of $\mfkdal$.
\end{definition}
The image of $\mfkdal$ under the embedding into $\mfkDal$ consists precisely of summable elements of $\mfkDal$. Physically, summable UAL derivations correspond to interactions which are approximately localized at a point. Summable UAL derivations obviously form a Lie sub-algebra of $\mfkDal$. In fact, it is easy to see that they form an ideal. 

Finally, the contraction of any $\chb \in C_d(\mfkdal)$ with a conical partition $(A_0,...,A_d)$ of $\RR^d$ gives a summable element of $\mfkDal$, as   Prop.  {\ref{prop:contractionConical}} shows. Moreover the corresponding map $C_d(\mfkdal) \to \mfkdal$ is continuous. As in the Remark \ref{rmk:cosheafcomplex}, there is an almost local version of the integrated complex eq. (\ref{eq:cohseafcomplex}).

\subsection{Relation to energy and charge currents} \label{ssec:currents}

As mentioned in the introduction, currents on a lattice can be defined using the language of chains. Consider a lattice system with finite-range interactions. The dynamics for such a system is described by a Hamiltonian that can be regarded (after multiplication by $i$) as a UL derivation $\derH$. The Hamiltonian density can be defined as a UL 0-chain $\chh$ such that $\derH = \p \chh$. Obviously, for a fixed $\derH$ the 0-chain $\chh$ is far from unique. The ambiguity can be fully characterized, since by Theorem \ref{thm:ULH0} any two choices of $\chh$ differ by a boundary $\p$ of a 1-chain. Once $\chh$ is fixed, we can define an energy current $\chJ^E:\Lambda\times\Lambda\ra \SAl$ as a solution of the equation 
\beq\label{eq:conservation1}
\sum_k [\chh_k,\chh_j]=-\sum_k \chJ^E_{kj}.
\eeq
The observable $\chJ^E_{kj}$ represents the energy flow from site $j$ to site $k$. It is natural to require $\chJ^E_{jk}$ to be traceless, so that there is no energy flow between sites in the infinite-temperature state. Since $\chH$ is finite-range, it is also natural to require $\chJ^E_{jk}$ to vanish whenever $j$ and $k$ are sufficiently far apart, and to be localized near $j$ and $k$. Thus $\chJ^E$ is a UL 1-chain. The equation (\ref{eq:conservation1}) can be written using the algebraic operations on the UL Noether complex:
\beq\label{eq:conservation2}
\{\derH,\chh\}=-\partial \chJ^E.
\eeq
Note that this equation for $\chJ^E$ is guaranteed to have a solution because by the properties of the shifted Lie bracket the l.h.s. is closed, $\partial\{\derH,\chh\}=\{\derH,\derH\}=0,$ and thus exact.
Similarly, a Hamiltonian $\derH$ with rapidly decaying interactions can be regarded as a self-adjoint UAL 0-chain $\chh$ such that $\derH=\p \chh$, while an energy current is defined to be a self-adjoint UAL 1-chain $\chJ^E$ solving the equation (\ref{eq:conservation2}).
Both in the UL and UAL cases, the equation has an obvious solution:
\beq
\chJ^E_{kj}=-[\chH_k,\chH_j],
\eeq
which can be written using the operations on the Noether complexes as
\beq\label{eq:JEsolution}
\chJ^E=-\frac12\{\chH,\chH\}.
\eeq
Triviality of $H_1(\mfkdl)$ and $H_1(\mfkdal)$ ensures that any other solution differs from (\ref{eq:JEsolution}) by an exact 1-chain. 

Similarly, a continuous one-parameter symmetry of a lattice system is encoded into a charge density which, after multiplication by $i$, can be viewed as a 0-chain $\chQ$ (usually assumed to be uniformly local). The Hamiltonian $\derH$ is said to be $\chQ$-invariant if the derivations corresponding to $\chQ$ and $\derH$ commute, $\{\partial\chq,\derH\}=0$. If the symmetry group is compact, using the average over the group action we can always make sure that the Hamiltonian density $\chh$ satisfies  $\{\partial\chq,\chh\}=0$ . A current for the symmetry generated by $\chq$ is a UL or UAL 1-chain $\chj$ solving the equation
\beq
\sum_k [\chh_k,\chq_j]=-\sum_k \chj_{kj}.
\eeq
This equation can also be written using index-free notation:
\beq
\{\derH,\chq\}=-\partial\chJ.
\eeq
A solution always exists because the l.h.s. is closed. Any two solutions differ by an exact 1-chain. If $\{\partial\chq,\chh\}=0$, one can write an explicit solution 
\beq
\chJ=-\{\chH,\chQ\}.
\eeq
% If two Hamiltonian densities map to the same derivation, then they are often regarded as physically equivalent. However, one should keep in mind that they do not lead to physically equivalent energy currents. Indeed, if $\partial\chF'=\partial\chF$, then $\chF'=\chF-\partial\chA$ for some 1-chain $\chA$, and therefore replacing $\chF$ with $\chF'$ shifts $\chJ^E$ by  $\chA$. This affects the net current through a closed codimension-1 surface, which is a measurable quantity. Despite this, one expects that certain physical quantities, such as transport coefficients in linear response theory, are not affected by this ``gauge freedom''. Transformation properties of Kubo formulas under such  re-definitions of the Hamiltonian density have been analyzed in \cite{gauge,KapSpo}.
From the above discussion it is clear that all densities and currents have ambiguities. However they can be fully characterized, and one expects that physical quantities, such as transport coefficients in linear response theory, are not affected by this ``gauge freedom''. Transformation properties of Kubo formulas under such  re-definitions of the Hamiltonian density have been analyzed in \cite{gauge,KapSpo}.

\subsection{Locally generated automorphisms} \label{ssec:automorphisms}

Derivations in $\mfkDal$ and continuous one-parameter families of derivations can be integrated to automorphisms of the algebra $\SA$. Let $C([0,1],\mfkDal)$ be the \Frechet\ space of $\mfkDal$-valued functions on the interval $[0,1]$ (see Appendix A for a brief discussion of  functions valued in \Frechet\ spaces). In Appendix \ref{app:LRbound} we show that for any $\derG \in C([0,1],\mfkDal)$ there is a unique one-parameter family of automorphisms $\alpha_{\derG}: [0,1] \to \Aut(\SAal)$ such that for any $\CA\in\SAal$ the curve $s\mapsto\alpha_\derG(s)(\CA)$ is continuously differentiable and solves the differential  equation
\beq
\frac{d \alpha_{\derG}(s) (\CA)}{ds} = \alpha_{\derG}(s)(\derG(s)(\CA))
\eeq
with the initial condition $\alpha_\derG(0)={\rm id}$.
We call such one-parameter families {\it locally generated paths} (LGPs). One may also regard $\derG(s)$ as a component of a continuous $\mfkDal$-valued 1-form $\derG(s) ds$ on $[0,1]$. In what follow we will not distinguish between this 1-form and the function $\derG$ and for any continuous $\mfkDal$-valued 1-form $\derF$ denote by $\alpha_{\derF}$ a unique one-parameter family of automorphisms $\alpha_{\derF}: [0,1] \to \Aut(\SAal)$ defined by
\beq
d \alpha_{\derF}(s) (\CA) = \alpha_{\derF}(s)(\derF(s)(\CA)).
\eeq

The map $\derG\mapsto \alpha_\derG$ from continuous 1-parameter families of UAL derivations to LGPs is clearly 1-1. This allows us to identify the set of LGPs with the \Frechet\ space $C([0,1],\mfkDal)$ and thus make the former into a \Frechet\ manifold. The set of LGPs also has a group structure. The composition $\alpha_{\derF} \circ \alpha_{\derG}$ of two LGPs is an LGP generated by $\derG(s) + (\alpha_{\derG}(s))^{-1} (\derF(s)) $. The inverse $\alpha^{-1}_{\derF}$ is an LGP generated by $- (\alpha_{\derF}(s))^{-1} (\derF(s))$. By Prop. \ref{prop:GAsmoothness}, both the composition and the inverse are smooth maps of \Frechet\ spaces and thus the set of LGPs is  a \Frechet-Lie group. Its Lie algebra is $C([0,1],\mfkDal)$ with the Lie bracket
\beq
\{\derF,\derG\}(s)=\int_0^s \left(\{\derF(u),\derG(s)\}-\{\derG(u),\derF(s)\}\right) du.
\eeq

Restricting every LGP $\alpha_\derG(s)$ to the endpoint $s=1$ we get a homomorphism from the group of LGPs to ${\rm Aut}(\SAal)$. Such automorphisms of $\SAal$ will be called {\it locally generated automorphisms} (LGAs). In this paper we do not define any topology on the group of LGAs.

\begin{remark}
It is plausible that the group of LGAs is a \Frechet-Lie group integrating the Lie algebra of UAL derivations $\mfkDal$. The group of LGPs is supposed to be the group of based continuous paths in the group of LGAs. 
\end{remark}
The action of LGAs and LGPs on observables can be extended to an action on UAL derivations and chains in a straightforward way:
\begin{align}
\left(\alpha_\derG(s)(\chA)\right)_{j_0\ldots j_q}=\alpha_\derG(s)\left(\chA_{j_0\ldots j_q}\right),\quad \chA\in C_q(\mfkdal),\\
\left(\alpha_\derG(s)(\derA)\right)^Y=\sum_{Z\in\Br_d}\left(\alpha_\derG(s)\left(\derA^Z\right)\right)^Y,\quad \derA\in \mfkDal .
\end{align}
By Proposition \ref{prop:GAsmoothness}, this action is jointly continuous and smooth.

Let $\CM$ be a finite-dimensional manifold. We say that a family of LGPs $\beta_{\m} = \alpha_{\derG_{\m}}$, $\m \in \CM$, is smooth, if the corresponding map $\CM\ra C([0,1],\mfkDal)$ is smooth (smooth here means that derivatives of all orders exist, see Appendix A for a further discussion). This is equivalent to saying that $\derG(s,\m)$ is jointly continuous in $s$ and $\m$ and infinitely differentiable in $\m$. As explained in Remark \ref{rmk:alphaMderivative}, to any such family of LGPs one can assign a smooth 1-form  $\omega_\beta\in\Omega^1(\CM,\mfkDal)$ satisfying $d\left(\beta(1)(\CA)\right)=\beta(1)\left(\omega_\beta(\CA)\right)$ for any $\CA\in\mfkdal$. The 1-form $\omega_\beta$ is flat, i.e.  $d\omega_\beta+\frac12 \{ \omega_\beta, \omega_\beta \}=0$.

\section{Invariants of families of gapped states}\label{sec:descendants}

\subsection{Complexes associated to gapped states}  \label{ssec:pseudogapped}

The expectation value of an observable $\CA \in \SA$ in a state $\psi$ on $\SA$ will be denoted $\lal \CA \ral_{\psi}$. We say that two pure states $\psi_1$ and $\psi_2$ are in the same phase or LGA-equivalent if there is an LGA $\alpha$ such that $\psi_2 = \psi_1 \circ \alpha$. The trivial phase is defined to be  the LGA-equivalence class of a factorized pure state $\omega_0 = \underset{\Gamma \subset \Lambda} \varinjlim\, \bigotimes_{j \in \Gamma} \omega_j$, where $\{ \omega_j \}$ is a collection of pure states on the algebras $\{\SA_j\}$.

We say that a UAL derivation $\derF$ does not excite the state $\psi$ if for any $\CA \in \mfkdal$ one has $\lal \derF(\CA) \ral_{\psi} = 0$. Such derivations form a Lie sub-algebra in $\mfkDal$ which we denote  $\mfkDpal$. Similarly, the Lie sub-algebra $\mfkdpal \subset \mfkdal$ consists of $\CB\in\mfkdal$ such that $\lal [\CB,\CA] \ral_{\psi} = 0$ for all $\CA\in\mfkdal$. We also define a subcomplex $C_{\bullet}(\mfkdpal)\hookrightarrow C_{\bullet}(\mfkdal)$ of chains that do not excite $\psi$ and the complex
\beq \label{eq:HcomplexMain}
\ldots \stackrel{\partial_{2}}{\ra} C_{1}(\mfkdpal)\stackrel{\partial_{1}}{\ra} C_0(\mfkdpal)\stackrel{\p_0}{\ra} \mfkDpal \stackrel{}{\ra} 0
\eeq 
of chains and derivations that do not excite $\psi$. The latter is a sub-complex of the UAL Noether complex $\No_\bullet$. We will denote it $\No^\psi_\bullet$. The space of $k$-cycles of $\No^\psi$ will be denoted $Z_k(\No^\psi)$. As usual, the $k$-th homology group of the complex is the quotient of the space of $k$-cycles by the subspace of $k$-boundaries. 
%\beq \label{eq:complexMain}
%\ldots \stackrel{\partial_{2}}{\ra} C_{1}(\mfkdal)\stackrel{\partial_{1}}{\ra} C_0(\mfkdal)\stackrel{\p_0}{\ra} \mfkDal \stackrel{}{\ra} 0.
%\eeq 

In general, the homology of a complex of \Frechet\ spaces may be rather pathological (for example, if we endow it with a quotient topology, it may not be a Hausdorff space). But the situation is much simplified if $\psi$ is a {\it gapped ground state} of a UAL derivation. 

Recall that a pure state $\psi$ is said to be a ground state of a derivation $\derH \in \mfkDal$ if for any $\CA \in \SAal$ one has $-i\lal \CA^* \derH(\CA) \ral_{\psi} \geq 0$. Any such state is necessarily invariant under the 1-parameter group of automorphisms generated by $\derH$ \cite{bratteli2012operator2}. 
\begin{definition}
A pure state $\psi$ is said to be a gapped ground state of $\derH\in\mfkDal$ with a gap greater or equal than $\Delta>0$ if $-i\lal \CA^* \derH(\CA) \ral_{\psi} \geq \Delta \left(\lal \CA^*\CA\ral_\psi-|\lal\CA\ral_\psi|^2\right)$ for any $\CA\in\SAal$. 
\end{definition} 
If $\psi$ satisfies the above condition for some unspecified choice of $\derH\in\mfkDal$ and $\Delta>0$, we will say that $\psi$ is gapped.
We will also say that a derivation $\derH\in\mfkDal$ is gapped if there exists a state $\psi$ which is a gapped ground state for $\derH$ (such a state need not be unique).
\begin{remark}
If $\psi$ is a gapped ground state of $\derH\in\mfkDal$, then $\psi$ is the only vector state in the GNS representation of $\psi$ which is the ground state of $\derH$. Further, if $\hat H$ is the generator of the 1-parameter group of automorphisms $\alpha_\derH(s)$ in the GNS representation of $\psi$, then $\hat H$ is a positive operator annihilating the vacuum vector, and its spectrum on the orthogonal complement of the vacuum vector is contained in $[\Delta,+\infty)$.
\end{remark}

To analyze the homology of complex $\No^\psi$ we will make use of certain linear maps defined by means of  integral transforms.
For any $\derH \in \mfkDal$ and a piecewise-continuous function $f:\RR \to \RR$ satisfying $f(t)=\CO(|t|^{-\infty})$ we define a map $\mathscr{I}_{\derH,f}:\SAal\ra\SAal$ by
\beq
\mathscr{I}_{\derH,f}(\cdot) := \int_{-\infty}^{+\infty} f(t) \alpha_{ \derH}(t) (\cdot) d t.
\eeq
It is shown in Appendix \ref{app:gappedHamiltonians} that $\mathscr{I}_{\derH,f}$ is well-defined and continuous. More precisely, for any $a$-localized $\CA\in\mfkdal$ we have 
\begin{equation}
\lVert \mathscr{I}_{\derH,f}(\CA) \rVert_{j,\alpha} \leq C_{\alpha} \lVert \CA\rVert_{j,\alpha}, \quad \alpha \in \NN_0
\end{equation}
where $C_\alpha>0$ depends on $\derH,f$ and $a$. This estimate implies that $\scrI_{\derH,f}$ is continuous but is stronger than continuity because $C_{\alpha}$ does not depend on $j$. In other words, the map $\scrI_{\derH,f}$ is equicontinuous w. r. to a family of  metrics on $\mfkdal$ labeled by $j$. This ensures that $\scrI_{\derH,f}$ extend to continuous chain maps
$\scrI_{\derH,f}:\No_\bullet\ra \No_\bullet$. On $k$-chains with $k\ge 0$ it is defined by 
\beq
\scrI_{\derH,f}(\chA)_{j_0\ldots j_k}=\scrI_{\derH,f}(\chA_{j_0\ldots  j_k}),
\eeq
while on derivations it is defined by
\beq
\scrI_{\derH,f}(\derA)^Y=\sum_Z \left(\scrI_{\derH,f}(\derA^Z)\right)^Y .
\eeq

If $\alpha_\derH(t)$ preserves a state $\psi$, then $\mathscr{I}_{\derH,f}$ preserves the subspace $\mfkdpal$ and the subcomplex $\No^\psi$. Indeed,
for any $\CA,\CB\in\mfkdal$ we have 
\begin{equation}
\langle [\mathscr{I}_{\derH,f}(\CA),\CB]\rangle_\psi=\int_{-\infty}^{+\infty} f(t)\langle[\CA, \alpha_{\derH}(-t)(\CB)]\rangle_\psi dt.
\end{equation}
Therefore if $\CA\in\mfkdpal$, then $\mathscr{I}_{\derH,f}(\CA)\in\mfkdpal.$

If $\psi$ is a gapped ground state of $\derH$, then with a clever choice of $f$ a stronger result holds. 
\begin{lemma}\label{lma:wdelta}
Let $\psi$ be a gapped ground state of $\derH \in \mfkDal$ with a gap greater or equal than $\Delta > 0$. Let $w_{\Delta}(t)$ be an even continuous function $w_{\Delta}(t) = \CO(|t|^{-\infty})$ satisfying $\int w_{\Delta}(t) e^{- i \omega t} d t = 0$ for $|\omega| > \Delta'$ for some $0<\Delta'<\Delta$ and $\int w_{\Delta}(t) d t = 1$. \footnote{Such functions exist \cite{hastings2010quasi,bachmann2012automorphic}. See Lemma 2.3 from \cite{bachmann2012automorphic} for an explicit example.}  Then for any $\CA \in \mfkdal$ and $\CB \in \mfkdal$ we have
\beq\label{Iw}
\lal \mathscr{I}_{\derH,w_{\Delta}}(\CA) \CB \ral_{\psi} = \lal \mathscr{I}_{\derH,w_{\Delta}}(\CA) \ral_{\psi} \lal \CB \ral_{\psi} ,
\eeq
and thus $\mathscr{I}_{\derH,w_{\Delta}}(\CA) \in \mfkdpal$ and $\scrI_{\derH,w_{\Delta}}(\No_\bullet)\subseteq \No^\psi_{\bullet}$.
\end{lemma}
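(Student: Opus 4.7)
The plan is to reduce equation (\ref{Iw}) to a computation in the GNS representation using spectral calculus, and to exploit the Fourier-support hypothesis on $w_\Delta$ together with the spectral gap of $\hat H$.

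Let $(\pi_\psi, \CH_\psi, \Omega_\psi)$ be the GNS triple of $\psi$, and write $\hat A = \pi_\psi(\CA)$, $\hat B = \pi_\psi(\CB)$. By the remark after the definition of gappedness, the generator $\hat H$ of the one-parameter unitary group implementing $\alpha_\derH$ satisfies $\hat H \Omega_\psi = 0$ and $\sigma(\hat H)\subseteq\{0\}\cup[\Delta,\infty)$, with $P_{\Omega_\psi}$ the spectral projector onto the $0$-eigenspace along $\Omega_\psi$. First I would expand
\begin{equation}
\lal \scrI_{\derH,w_\Delta}(\CA)\,\CB \ral_\psi = \int w_\Delta(t)\,\lal \Omega_\psi,\hat A\,e^{-i\hat H t}\hat B\,\Omega_\psi\ral\,dt,
\end{equation}
using $e^{i\hat H t}\Omega_\psi=\Omega_\psi$. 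The rapid decay of $w_\Delta$ and boundedness of $\hat A,\hat B$ let me pull the integral through the inner product and apply the bounded Borel functional calculus, yielding $\lal \Omega_\psi,\hat A\,\hat w_\Delta(\hat H)\hat B\,\Omega_\psi\ral$, where $\hat w_\Delta(\omega)=\int w_\Delta(t)e^{-i\omega t}\,dt$.

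Next I would use the two spectral inputs: $\hat w_\Delta$ is supported in $[-\Delta',\Delta']$ and $\sigma(\hat H)\cap(0,\Delta')=\emptyset$, so $\hat w_\Delta(\hat H)=\hat w_\Delta(0)P_{\Omega_\psi}=P_{\Omega_\psi}$ since $\hat w_\Delta(0)=\int w_\Delta=1$. Substituting gives
\begin{equation}
\lal \scrI_{\derH,w_\Delta}(\CA)\,\CB \ral_\psi = \lal \Omega_\psi,\hat A\Omega_\psi\ral\lal\Omega_\psi,\hat B\Omega_\psi\ral = \lal\CA\ral_\psi\lal\CB\ral_\psi.
\end{equation}
Separately, $\alpha_\derH$-invariance of $\psi$ and $\int w_\Delta=1$ yield $\lal\scrI_{\derH,w_\Delta}(\CA)\ral_\psi=\lal\CA\ral_\psi$, which establishes (\ref{Iw}).

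For the remaining assertions, note that $\scrI_{\derH,w_\Delta}(\CA)\in\mfkdal$ has already been proved in the continuity discussion preceding the lemma. To deduce $\scrI_{\derH,w_\Delta}(\CA)\in\mfkdpal$ I need $\lal[\scrI_{\derH,w_\Delta}(\CA),\CB]\ral_\psi=0$; (\ref{Iw}) handles the $\scrI(\CA)\CB$ order, and taking adjoints of (\ref{Iw}) while using $\CA^*=-\CA$, $\CB^*=-\CB$ (so $\scrI_{\derH,w_\Delta}(\CA)^*=-\scrI_{\derH,w_\Delta}(\CA)$ by evenness of $w_\Delta$) produces the symmetric identity $\lal\CB\,\scrI_{\derH,w_\Delta}(\CA)\ral_\psi=\lal\scrI_{\derH,w_\Delta}(\CA)\ral_\psi\lal\CB\ral_\psi$. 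Since $\scrI_{\derH,w_\Delta}$ acts componentwise on chains and term-by-term on brick-decomposed derivations, the chain-level inclusion $\scrI_{\derH,w_\Delta}(\No_\bullet)\subseteq\No^\psi_\bullet$ follows immediately.

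The only nontrivial step is the spectral-calculus manipulation, which requires justifying Fubini-type exchanges; this is routine given $w_\Delta\in\CO(|t|^{-\infty})$ and the uniform boundedness of $\|e^{i\hat H t}\hat A e^{-i\hat H t}\|=\|\hat A\|$. The evenness of $w_\Delta$, explicitly highlighted in the statement, is essential only for turning the one-sided identity (\ref{Iw}) into membership in $\mfkdpal$.
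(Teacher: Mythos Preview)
Your proof is correct and follows essentially the same route as the paper: both pass to the GNS representation, insert the spectral resolution of $\hat H$ between $\pi(\CA)$ and $\pi(\CB)$, and use that the Fourier transform of $w_\Delta$ is supported in $[-\Delta',\Delta']$ while $\sigma(\hat H)\subseteq\{0\}\cup[\Delta,\infty)$ to collapse the integral onto the vacuum projector. Your write-up is slightly more explicit than the paper's in justifying the ``thus'' clauses (in particular the use of evenness/reality of $w_\Delta$ to obtain the reversed-order identity and hence $\scrI_{\derH,w_\Delta}(\CA)\in\mfkdpal$), but the core argument is identical.
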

\begin{proof}
Let $\pi$ be the GNS representation corresponding to $\psi$ and $dP_\omega$, $\omega\in\RR$, be the projection-valued measure on $\RR$ corresponding to the self-adjoint operator $\hat H$. Then 
\begin{multline}
\lal \mathscr{I}_{\derH,w_{\Delta}}(\CA) \CB \ral_{\psi} =
\int_{-\infty}^{+\infty} d\omega\int_{-\infty}^{+\infty} dt\, w_{\Delta}(t) e^{-i\omega t} \langle 0\vert \pi(\CA) dP_\omega \pi(\CB)\vert 0\rangle=\\
=\int_{-\Delta'}^{+\Delta'} d\omega\int_{-\infty}^{+\infty} dt \, w_{\Delta}(t) e^{-i\omega t} \langle 0\vert \pi(\CA) dP_\omega \pi(\CB)\vert 0\rangle=\\
=\int_{-\infty}^{+\infty} dt \, w_{\Delta}(t)\langle\CA\rangle_{\psi} \langle \CB\rangle_{\psi} = \lal \mathscr{I}_{\derH,w_{\Delta}}(\CA) \ral_{\psi} \lal \CB \ral_{\psi} .
\end{multline}
\end{proof}
Using this lemma, one easily obtains the following result.
\begin{lemma} \label{lma:Htilde}
For any $\derH \in \mfkDal$ with a gapped ground state $\psi$ there exists $\chh^\psi \in C_0(\mfkdpal)$ such that $\derH = \p \chh^\psi$.
\end{lemma}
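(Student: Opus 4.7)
The plan is to combine exactness of the UAL Noether complex with the spectral projection $\scrI_{\derH, w_\Delta}$ from Lemma \ref{lma:wdelta}. The idea is that exactness gives us \emph{some} 0-chain $\chh$ with $\partial \chh = \derH$, and then applying $\scrI_{\derH,w_\Delta}$ site-by-site projects each $\chh_j$ into $\mfkdpal$ while preserving the boundary, because $\derH$ is fixed by its own flow.

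First, invoking the exactness of the augmented complex $\No_\bullet$ (eq. \eqref{eq:UALcomplexAug}), or equivalently applying the contracting homotopy $h_{-1}$ (extended to the almost local setting in Appendix \ref{app:complexcontinuity}), I would produce an element $\chh \in C_0(\mfkdal)$ with $\partial_0 \chh = \derH$. At this stage there is no reason for $\chh_j$ to lie in $\mfkdpal$, so we need to improve it.

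Second, I would set
\beq
\chh^\psi := \scrI_{\derH, w_\Delta}(\chh),
\eeq
where $w_\Delta$ is as in Lemma \ref{lma:wdelta} and $\scrI_{\derH, w_\Delta}$ acts component-wise. By Lemma \ref{lma:wdelta} each $\chh^\psi_j = \scrI_{\derH,w_\Delta}(\chh_j)$ lies in $\mfkdpal$, so $\chh^\psi \in C_0(\mfkdpal)$. Since $\scrI_{\derH,w_\Delta}$ is a continuous chain map $\No_\bullet \to \No_\bullet$, we have
\beq
\partial \chh^\psi = \partial \scrI_{\derH, w_\Delta}(\chh) = \scrI_{\derH, w_\Delta}(\partial \chh) = \scrI_{\derH, w_\Delta}(\derH).
\eeq

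The only remaining point is to check $\scrI_{\derH, w_\Delta}(\derH) = \derH$ in $\mfkDal$. This is the small step I would isolate as the main technical check, but it is essentially formal: since $\derH$ generates the flow $\alpha_\derH$, one has $\alpha_\derH(t)(\derH) = \derH$ as a UAL derivation for every $t$, and hence
\beq
\scrI_{\derH,w_\Delta}(\derH) = \int_{-\infty}^{+\infty} w_\Delta(t)\, \alpha_\derH(t)(\derH)\, dt = \derH \int_{-\infty}^{+\infty} w_\Delta(t)\, dt = \derH,
\eeq
using the normalization $\int w_\Delta = 1$. Combining this with the previous display gives $\partial \chh^\psi = \derH$, completing the construction. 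The only place where anything subtle enters is ensuring that the chain-map property of $\scrI_{\derH,f}$ together with the component-wise definition on derivations is compatible with the identity $\alpha_\derH(t)(\derH) = \derH$ in $\mfkDal$; this is really just uniqueness of the ``brick-gauged'' representative of a UAL derivation, and so presents no genuine obstacle.
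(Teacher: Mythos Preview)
Your proof is correct and essentially identical to the paper's. The paper sets $\chh^\psi = \scrI_{\derH,w_\Delta}(\chh)$ and verifies $\partial\chh^\psi = \derH$ by computing directly on a test observable, $\partial\chh^\psi(\CA) = \int w_\Delta(t)\,\alpha_\derH(t)\bigl(\derH(\alpha_\derH(-t)(\CA))\bigr)\,dt = \int w_\Delta(t)\,\derH(\CA)\,dt = \derH(\CA)$; your version packages the same computation as ``$\scrI_{\derH,w_\Delta}$ is a chain map and $\alpha_\derH(t)(\derH)=\derH$,'' which is the same identity read at the level of $\mfkDal$ rather than pointwise in $\CA$.
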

\begin{proof}
Let $w_\Delta=\CO(|t|^{-\infty})$ be a function as in the statement of Lemma \ref{lma:wdelta}, and suppose $\derH=\partial\chh$ for some $\chh \in C_0(\mfkdal)$. Let $\chh^\psi = \mathscr{I}_{\derH,w_{\Delta}} (\chh)$.
Then for any $\CA\in\SAal$ we have 
\begin{multline}
\partial\chh^\psi(\CA)=\int_{-\infty}^{+\infty} w_{\Delta}(t) \alpha_\derH(t) \derH(\alpha_\derH(-t)(\CA)) dt = \\ = \int_{-\infty}^{+\infty} w_{\Delta}(t) \derH(\CA) dt=\derH(\CA).
\end{multline}
\end{proof}
\begin{remark}
The construction of $\chh^\psi$ by means of the map $\mathscr{I}_{\derH,w_\Delta}$ is due to A. Kitaev \cite{kitaev2006anyons}.
\end{remark}

Another interesting choice for $f$ is described in the next lemma.
\begin{lemma}\label{lma:Wdelta}
Let $\psi$ be a gapped ground state of $\derH\in\mfkDal$. Let $W_{\Delta}(t) = \CO(|t|^{-\infty})$ be an odd piecewise-continuous function defined for $t>0$ by $W_{\Delta}(|t|)=-\int_{|t|}^{\infty} w_{\Delta}(s) d s$. Then for any $\CA\in\mfkdal$ we have \begin{equation}
\CA-\scrI_{\derH,W_\Delta}\left(\derH(\CA)\right)=\scrI_{\derH,w_\Delta}(\CA).
\end{equation}
\end{lemma}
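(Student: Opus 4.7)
The plan is to recognize this identity as integration by parts in the time variable, where $W_\Delta$ is precisely chosen to be an antiderivative of $w_\Delta$ away from the origin.

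First I would recall that, by definition of $\alpha_\derH$, the curve $t \mapsto \alpha_\derH(t)(\CA)$ is continuously differentiable in the \Frechet{} topology of $\SAal$ with derivative $\alpha_\derH(t)(\derH(\CA))$. Hence
\begin{equation}
\scrI_{\derH,W_\Delta}(\derH(\CA))
=\int_{-\infty}^{+\infty} W_\Delta(t)\,\frac{d}{dt}\alpha_\derH(t)(\CA)\,dt .
\end{equation}
The integrand is absolutely integrable (as a \Frechet-space valued function) because $W_\Delta(t)=\CO(|t|^{-\infty})$ and $\alpha_\derH(t)(\derH(\CA))$ is norm-bounded by $\|\derH(\CA)\|$ uniformly in $t$.

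Next I would compute $W_\Delta'$. By construction $W_\Delta$ is $C^1$ on $\RR\setminus\{0\}$, and for $t>0$ differentiation of $W_\Delta(t)=-\int_t^\infty w_\Delta(s)\,ds$ gives $W_\Delta'(t)=w_\Delta(t)$; oddness of $W_\Delta$ together with evenness of $w_\Delta$ then yields $W_\Delta'(t)=w_\Delta(t)$ for $t<0$ as well. At $t=0$ there is a jump discontinuity: the one-sided limits are $W_\Delta(0^+)=-\tfrac12$ and $W_\Delta(0^-)=+\tfrac12$, since $\int w_\Delta=1$ and $w_\Delta$ is even. This is the only subtle point.

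The main step is then to split the integral at $0$ and integrate by parts on each half-line:
\begin{align}
\int_{-\infty}^{0} W_\Delta(t)\,\tfrac{d}{dt}\alpha_\derH(t)(\CA)\,dt
&= \tfrac12\CA - \int_{-\infty}^{0} w_\Delta(t)\,\alpha_\derH(t)(\CA)\,dt ,\\
\int_{0}^{+\infty} W_\Delta(t)\,\tfrac{d}{dt}\alpha_\derH(t)(\CA)\,dt
&= \tfrac12\CA - \int_{0}^{+\infty} w_\Delta(t)\,\alpha_\derH(t)(\CA)\,dt ,
\end{align}
where the boundary contributions at $\pm\infty$ vanish because $W_\Delta(t)$ decays faster than any power while $\alpha_\derH(t)(\CA)$ stays norm-bounded. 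Adding the two halves produces $\CA-\scrI_{\derH,w_\Delta}(\CA)$ on the right, which rearranges to the claimed identity.

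The only nontrivial point is the jump at the origin; once that is handled, everything else is routine. Integration by parts in a \Frechet-space valued setting is justified by the norm estimates from Appendix \ref{app:gappedHamiltonians} together with the standard fundamental theorem of calculus for continuously differentiable curves in a \Frechet{} space, applied separately on $(-\infty,0]$ and $[0,+\infty)$.
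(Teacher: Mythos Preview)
Your argument is correct and is exactly the ``straightforward computation'' the paper omits: integrate by parts on each half-line, using $W_\Delta'=w_\Delta$ away from $0$ and the jump $W_\Delta(0^\pm)=\mp\tfrac12$ to produce the term $\CA$. There is nothing to add.
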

\begin{proof}
Straightforward computation.
\end{proof}
\begin{remark}
The map $\mathscr{I}_{\derH,W_\Delta}$ first appeared in  \cite{osborne2007simulating}.
\end{remark}

In what follows we will use a shorthand $t^\psi$ for the chain map $\scrI_{\derH,w_{\Delta}}:\No_\bullet\ra\No_\bullet$ with $w_\Delta$ chosen as in Lemma \ref{lma:wdelta} and a shorthand $\CI^\psi$ for the chain map $\scrI_{\derH,W_{\Delta}}:\No_\bullet\ra\No_\bullet$ with $W_\Delta$ chosen as in Lemma \ref{lma:Wdelta}. These chain maps preserve the subcomplex $\No^\psi_\bullet$. In addition $t^\psi$ maps $\No_\bullet$ to $\No^\psi_\bullet$. Finally, they satisfy an identity
\beq\label{hCITidentity}
\chA-\CI^\psi\left(\{ \derH,\chA\}\right)=t^\psi(\chA),\quad \forall \chA\in \No_\bullet.
\eeq

We are now ready to prove
\begin{theorem}\label{thm:poincareH}
Let $\psi$ be a gapped state. Then the homology $H_{\bullet}(\No^\psi)$ is trivial. Moreover, there is a continuous map $h^{\psi}_k: Z_k(\No^\psi)\ra \No^{\psi}_{k+1}$ such that $\partial_{k+1}\circ h^\psi_k= \Id$.
\end{theorem}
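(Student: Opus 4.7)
My plan is to build the contracting homotopy $h^\psi$ by combining the contracting homotopy $h$ of the full augmented UAL Noether complex $\No_\bullet$ with the integral transforms $t^\psi$ and $\CI^\psi$, together with a Hamiltonian density that already lives inside $\No^\psi$. Since $\psi$ is gapped, I fix a UAL derivation $\derH \in \mfkDal$ of which $\psi$ is a gapped ground state, and use Lemma~\ref{lma:Htilde} to pick $\chh^\psi \in C_0(\mfkdpal)$ with $\partial \chh^\psi = \derH$; this $\chh^\psi$ is the ``gauge-fixed'' Hamiltonian density tailored to $\psi$.

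For a cycle $\chA \in Z_k(\No^\psi)$ with $k \geq -1$ my proposed formula is
\[
h^\psi_k(\chA) \;:=\; t^\psi(h_k\,\chA) \;+\; \CI^\psi\{\chh^\psi,\chA\}.
\]
To see that $\partial h^\psi_k(\chA) = \chA$, I use that both $t^\psi$ and $\CI^\psi$ are chain maps, so $\partial t^\psi(h_k\chA) = t^\psi(\partial h_k\chA) = t^\psi(\chA)$; the graded Leibniz rule with $\partial\chh^\psi = \derH$ and $\partial\chA = 0$ gives $\partial\{\chh^\psi,\chA\} = \{\derH,\chA\}$, hence $\partial\CI^\psi\{\chh^\psi,\chA\} = \CI^\psi\{\derH,\chA\}$. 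Summing the two contributions and invoking the identity $\chA = t^\psi(\chA) + \CI^\psi\{\derH,\chA\}$ from Lemma~\ref{lma:Wdelta} (equation~(\ref{hCITidentity})) then yields $\partial h^\psi_k(\chA) = \chA$.

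The real work is checking that $h^\psi_k(\chA)$ actually lies in $\No^\psi_{k+1}$. The first summand is immediate because $t^\psi$ sends all of $\No_\bullet$ into $\No^\psi_\bullet$ by the defining property of $w_\Delta$ in Lemma~\ref{lma:wdelta}. For the second summand the key input is that $\mfkdpal$ (and hence $\mfkDpal$) is a Lie subalgebra of $\mfkdal$ and that $\mfkDpal$ acts on $\mfkdpal$: for $\chB_1,\chB_2\in\mfkdpal$ and any $\CC\in\mfkdal$ the Jacobi-type expansion $\langle[[\chB_1,\chB_2],\CC]\rangle_\psi = \langle[\chB_1,[\chB_2,\CC]]\rangle_\psi - \langle[\chB_2,[\chB_1,\CC]]\rangle_\psi$ vanishes, and an analogous computation based on the $\alpha_\derH$-invariance of $\psi$ shows $\derA(\chB)\in\mfkdpal$ whenever $\derA\in\mfkDpal$ and $\chB\in\mfkdpal$, taking care of the $k=-1$ case as well. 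Hence $\{\chh^\psi,\chA\}\in C_{k+1}(\mfkdpal)\subset\No^\psi_{k+1}$, and $\CI^\psi$ keeps us there because it preserves $\No^\psi$. Continuity of $h^\psi_k$ in the \Frechet{} topology is inherited from the continuity of $h_k$, $t^\psi$, $\CI^\psi$, and the shifted bracket with the fixed element $\chh^\psi$, all of which is established in Appendices \ref{app:complexcontinuity} and \ref{app:gappedHamiltonians}. The one place where care is required is precisely this closure of $\mfkdpal$ and $\mfkDpal$ under the operations that enter the formula; once this bookkeeping is in place, the identity $\partial\circ h^\psi_k = \Id$ on cycles is a short algebraic manipulation driven entirely by Lemma~\ref{lma:Wdelta}.
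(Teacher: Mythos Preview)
Your proof is correct and is essentially the same as the paper's: the paper defines $s^\psi(\chA)=\CI^\psi(\{\chh^\psi,\chA\})$ and sets $h^\psi=t^\psi\circ h+s^\psi$, which is exactly your formula, and then verifies $\partial h^\psi=\Id$ on cycles via the identity~(\ref{hCITidentity}) in the same way you do. One minor remark: the reason $\derA(\chB)\in\mfkdpal$ for $\derA\in\mfkDpal$, $\chB\in\mfkdpal$ is the Leibniz rule $[\derA(\chB),\CC]=\derA([\chB,\CC])-[\chB,\derA(\CC)]$ rather than the $\alpha_\derH$-invariance of $\psi$, but the claim itself is correct.
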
 
\begin{proof}
Let $\derH$ be a gapped Hamiltonian for $\psi$. By Lemma \ref{lma:Htilde}, there exists  $\chh^\psi\in C_0(\mfkdpal)$ such that $\partial \chh^\psi=\derH$. For any $\chF\in \No_\bullet$ let
\beq
s^\psi(\chF)=\CI^\psi\left(\{\chH^\psi,\chf\}\right).
\eeq
$s^\psi$ is a continuous linear map $\No_k\ra \No_{k+1}$ which maps $\No^\psi_k$ to  $\No^\psi_{k+1}$. Now suppose  $\chA\in Z_k(\No^\psi)$. By Prop. \ref{prop:contrHoContinuity}  $\chA=\partial \chB$, where $\chB=h(\chA)\in \No_{k+1}$ is a continuous linear function of $\chA$. 
% The identity (\ref{hCITidentity}) implies
% \beq
% \chB=t^\psi(\chB)+\partial(s^\psi(\chB))+s^\psi(\chA),
% \eeq
% and thus
% \beq
% \chA=\partial\chB=\partial(t^\psi\circ h(\chA)+s^\psi(\chA)). 
% \eeq
The identity (\ref{hCITidentity}) implies 
\beq
\cha = \p t^{\psi}(\chb) + \p s^{\psi}(\cha) = \partial(t^\psi\circ h(\chA)+s^\psi(\chA))
\eeq
Therefore we can set $h^\psi=t^\psi\circ h+s^\psi$.
\end{proof}
\begin{remark}
The vanishing of $H_\bullet(\No^\psi)$ can also be explained as follows. Eq. (\ref{hCITidentity}) implies that
$s^\psi$ induces a contracting homotopy on the quotient complex $\No_{\bullet}/\No_{\bullet}^\psi$, therefore the homology of $\No_{\bullet}/\No_{\bullet}^\psi$ is trivial. The homology of the UAL Noether complex is also trivial. Therefore the long exact sequence of homology groups (in the category of vector spaces) corresponding to the short exact sequence
\beq 
0 \to \No_{\bullet}^\psi \to \No_{\bullet} \to \No_{\bullet}/\No_{\bullet}^\psi \to 0
\eeq
implies that $H_{\bullet}(\No^\psi)$ is trivial. However, for our purposes it is important to know that a continuous linear map $h^\psi$ inverting $\partial$ exists.
\end{remark}
%\begin{theorem} \label{thm:poincareH}
%Let $\psi$ be a gapped ground state of $\derH \in \mfkDal$. Then the complex eq. (\ref{eq:HcomplexMain}) is exact. Moreover, 
%\end{theorem}
%\begin{proof}
%By Lemma \ref{lma:Htilde} we can represent $\derH = \p \chh^\psi$ where $\chh^\psi \in C_0(\mfkdpal)$ is defined by $\chh^\psi = \mathscr{I}_{\derH,w_{\Delta}} (\chh)$.  By Lemma \ref{lma:tildemap}, the map $\mathscr{I}_{\derH,W_{\Delta}}:\mfkdal \to \mfkdal$ is well-defined and continuous. For any $\CA\in\mfkdal$ one has $\CA-\mathscr{I}_{\derH,W_\Delta}(\derH (\CA))=\mathscr{I}_{\derH,w_\Delta}(\CA)\in\mfkdpal$. 

%Let $\cha \in C_q(\mfkdpal)$ such that $\p \cha = 0$. Let $\chb' = h_{q}(\cha)$, and let $\chb = \chb' - \p \{\chh, \mathscr{I}_{\derH,W_\Delta}(\chb') \}$. Then $\chb \in C_{q+1}(\mfkdpal)$ and $\cha = \p \chb$.

% \begin{remark}
% The existence of $\CI^\psi$ roughly means that any restriction of a state preserving Hamiltonian to a region (that in general does not preserve the state) can be modified at the boundary of the region, so that the resulting Hamiltonian still preserves the state. 
% \end{remark}

In this section we will define invariants of gapped states and smooth families of gapped states under LGA-equivalence. The definition depends exclusively on the existence of a derivation $\derH\in\mfkDpal$ and a map $\CI^\psi$ with the properties described above\footnote{The existence of $\CI^\psi$ roughly means that any restriction of an arbitrary state-preserving Hamiltonian to a region (that in general does not preserve the state) can be modified at the boundary of the region, so that the resulting Hamiltonian still preserves the state. A similar property has been considered in \cite{bachmann2019many}}. 
%Any such invariant also gives an invariant of gapped finite-range Hamiltonians and their families. 

% \begin{remark}
% Though the existence of UAL contracting homotopy may seem to be a non-local condition because of the component $h^{\psi}_{-1}:\mfkDpal \to C_{0}(\mfkdpal)$, it is in fact equivalent to the following explicitly local one. Let $s^{\psi}: C_{n}(\mfkdal) \to C_{n+1}(\mfkdal)$ be a UAL map for $n \geq 0$, such that for any $\chf \in C_{n}(\mfkdal)$ with $\p \chf \in C_{n-1}(\mfkdpal)$ we have $(\chf - \p \circ s^{\psi}(\chf)) \in C_{n}(\mfkdpal)$. Indeed, given such a map and a UAL contracting homotopy of the complex eq. (\ref{eq:complexMain}) we can define $h^{\psi} = h - \p \circ s^{\psi} \circ h$. Conversely, given a UAL contracting homotopy $h^{\psi}$ we can use $s^{\psi} = h \circ (\text{id}-h^{\psi} \circ \p)$.
% \end{remark}

\subsection{Smooth families of states}  \label{ssec:smoothfamiliesofstates}

Let $\CM$ be a compact connected smooth manifold. Consider a family of states $\psi$ parameterized by $\CM$. We denote a state at the point $m \in \CM$ by $\psi_m:\SA \to \mathbb{C}$. The averaging over states defines a function $\lal\, \cdot\, \ral_{\psi}: \CM \to \mathbb{C}$.
% $\psi:\CM \times \SA\ra{\mathbb C}$ parameterized by $\m \in \CM$. We will use a shorthand $\langle\,\cdot\,\rangle_\m=\langle\,\cdot\,\rangle_{\psi_\m}.$ 
\begin{definition}\label{def:smoothfamily} A smooth family of states $\psi$ over $\CM$ is a family of states for which  there exists  $\derG \in \Omega^1(\CM,\mfkDal)$ such that for any smooth path $\gamma: [0,1] \to \CM$ one has $\psi_{\gamma(s)}=\psi_{\gamma(0)}\circ  \alpha_{\gamma^* \derG}(s)$. 
\end{definition}
One motivation for this definition is a theorem of A. Moon and Y. Ogata \cite{moon2020automorphic}. As explained in Appendix \ref{app:gappedHamiltonians}, it implies  that for a smooth family of gapped UL  Hamiltonians $\derH \in \Omega^0(\CM, \mfkDal)$ with a unique gapped ground state $\psi_\m$ $\forall \m\in\CM$ and such that the function $\CM \mapsto \langle\CA\rangle_{\psi}$ is smooth for all $\CA\in\SAal$, the family $\psi$ is smooth in the sense of Def. \ref{def:smoothfamily}.
\begin{prop}
A family of states $\psi$ is smooth if and only if for any observable $\CA\in\SAal$ the function $\CM \mapsto \langle\CA\rangle_{\psi}$ is smooth and satisfies
\beq\label{parallelfamily}
d\langle\CA\rangle_{\psi} = \langle\derG(\CA)\rangle_{\psi}. 
\eeq
\end{prop}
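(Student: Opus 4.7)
The plan is to derive the equivalence from an ODE-uniqueness argument; the map $\scrI^\psi$ and gappedness play no role here.

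For the forward direction, assume $\psi$ is a smooth family with associated 1-form $\derG$. Given a smooth path $\gamma:[0,1]\to\CM$ and $\CA\in\SAal$, the defining identity $\psi_{\gamma(s)}=\psi_{\gamma(0)}\circ\alpha_{\gamma^*\derG}(s)$ combined with the ODE for the LGP $\alpha_{\gamma^*\derG}$ yields
\[
\tfrac{d}{ds}\langle\CA\rangle_{\psi_{\gamma(s)}} \;=\; \langle \gamma^*\derG(s)(\CA)\rangle_{\psi_{\gamma(s)}}.
\]
Evaluating at $s=0$ with $v=\dot\gamma(0)\in T_m\CM$ gives $v\langle\CA\rangle_\psi|_m = \langle \derG(v)(\CA)\rangle_{\psi_m}$, which is the one-form equation $d\langle\CA\rangle_\psi=\langle\derG(\CA)\rangle_\psi$. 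Smoothness of $m\mapsto\langle\CA\rangle_{\psi_m}$ in the manifold sense then follows by iteration: the next derivative along a smooth vector field $w$ is computed by differentiating the right-hand side, whose $m$-dependence enters only through $\derG$, which is smooth as a $\mfkDal$-valued 1-form, and through $\psi$, where one reapplies the same identity to $\CA':=\derG(v_m)(\CA)\in\mfkdal$. Joint smoothness/continuity of the pairing $\mfkDal\times\SAal\to\SAal$ ensures arbitrarily high mixed partials exist.

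For the reverse direction, fix a smooth path $\gamma:[0,1]\to\CM$ and set $\beta(s):=\alpha_{\gamma^*\derG}(s)\in\Aut(\SAal)$; I need to show $\psi_{\gamma(s)}=\psi_{\gamma(0)}\circ\beta(s)$. Define two families of states $\phi^{(1)}_s:=\psi_{\gamma(0)}\circ\beta(s)$ and $\phi^{(2)}_s:=\psi_{\gamma(s)}$. By the ODE for $\beta$, $\phi^{(1)}$ satisfies $\tfrac{d}{ds}\phi^{(1)}_s(\CA)=\phi^{(1)}_s(\gamma^*\derG(s)(\CA))$. By the hypothesis $d\langle\CA\rangle_\psi=\langle\derG(\CA)\rangle_\psi$ pulled back along $\gamma$, so does $\phi^{(2)}$. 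Both agree at $s=0$.

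The crux is uniqueness for this pointwise ODE on states. I handle it by conjugating $\beta$ away: set $\theta_s:=\phi_s\circ\beta(s)^{-1}$ for either candidate solution. Differentiating $\theta_s(\beta(s)(\CB))=\phi_s(\CB)$ in $s$ gives
\[
\tfrac{d}{ds}\theta_s(\beta(s)(\CB)) + \theta_s\!\left(\beta(s)(\gamma^*\derG(s)(\CB))\right) = \phi_s(\gamma^*\derG(s)(\CB)) = \theta_s\!\left(\beta(s)(\gamma^*\derG(s)(\CB))\right),
\]
hence $\tfrac{d}{ds}\theta_s(\beta(s)(\CB))=0$ for every $\CB\in\SAal$. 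Since $\beta(s)$ is an automorphism, $\theta_s$ is independent of $s$; together with $\theta_0=\psi_{\gamma(0)}$ this yields $\phi_s=\psi_{\gamma(0)}\circ\beta(s)$ for both $\phi^{(1)}$ and $\phi^{(2)}$, giving $\psi_{\gamma(s)}=\psi_{\gamma(0)}\circ\alpha_{\gamma^*\derG}(s)$ as required. The main (mild) obstacle is justifying the forward-direction smoothness in the $C^\infty$-manifold sense from path-differentiability, which is standard once smoothness of $\derG$ as a 1-form is in hand.
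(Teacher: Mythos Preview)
Your proof is correct and follows essentially the same approach as the paper's. The paper's argument is more compressed: for the reverse direction it directly shows that $s\mapsto\langle\alpha_{\gamma^*\derG}(s)^{-1}(\CA)\rangle_{\psi_{\gamma(s)}}$ is constant, which is exactly your $\theta_s$ for $\phi^{(2)}$; your framing via two candidate solutions and ODE uniqueness is a mild repackaging of the same computation. For the forward direction the paper is terser about the smoothness iteration, but the underlying mechanism (re-apply the identity to $\derG(v)(\CA)$ and use smoothness of $\derG$) is identical to what you spell out.
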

That is, a family of states $\psi$ is smooth iff it is parallel with respect to the connection $d+\derG$ on the trivial vector bundle with fiber $\SAal$. 
\begin{proof}
Pick a point $\m_0\in \CM$. For any $\m\in\CM$ there is a smooth path $\gamma:[0,1]\ra M$ with $\gamma(0)=\m_0$, $\gamma(1)=\m$, and by assumption $\langle\CA\rangle_{\psi_{\gamma(s)}}=\langle \alpha_{\gamma^*\derG}(s)(\CA)\rangle_{\psi_{\m_0}}$. By Prop. \ref{lma:alphaODE} this function is smooth and its derivative at $s=1$ is $\langle \gamma^*(1)\derG(\CA)\rangle_{\psi_{\m}}$. This implies (\ref{parallelfamily}). Conversely, eq. (\ref{parallelfamily}) implies that for any such path $\gamma$ and any $\CA\in\SAal$ the function $s\mapsto\langle \alpha^{-1}_{\gamma^*(s)\derG}(\CA)\rangle_{\psi_{\gamma(s)}}$ is constant. Therefore $\psi_{\gamma(s)}=\psi_{\gamma(0)}\circ  \alpha_{\gamma^* \derG}(s)$.
\end{proof}
\begin{corollary}
Let $\{\CU^{(\sfa)}\}$ be an open cover of $\CM$. A family of states $\psi$ over $\CM$ is smooth if and only if it is smooth over each element of the cover. 
\end{corollary}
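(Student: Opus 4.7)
The plan is to reduce to the characterization in the preceding proposition, namely that smoothness of $\psi$ is equivalent to smoothness of the functions $\langle\CA\rangle_\psi$ together with the existence of a global $\derG\in\Omega^1(\CM,\mfkDal)$ satisfying $d\langle\CA\rangle_\psi=\langle\derG(\CA)\rangle_\psi$ for all $\CA\in\SAal$. The forward implication is essentially trivial: if $\psi$ is smooth over $\CM$ with implementing 1-form $\derG$, then the restriction of $\derG$ to any $\CU^{(\sfa)}$ implements smoothness over $\CU^{(\sfa)}$. All the content is in the converse.

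For the converse, suppose $\psi$ is smooth over each $\CU^{(\sfa)}$, so there exist $\derG^{(\sfa)}\in\Omega^1(\CU^{(\sfa)},\mfkDal)$ with $d\langle\CA\rangle_\psi=\langle\derG^{(\sfa)}(\CA)\rangle_\psi$ on $\CU^{(\sfa)}$ for every $\CA\in\SAal$. First I would observe that smoothness of the scalar function $\m\mapsto\langle\CA\rangle_{\psi_\m}$ is a local condition, so it is automatically smooth on all of $\CM$. The remaining task is to glue the $\derG^{(\sfa)}$ into a global 1-form. Since $\CM$ is compact, I would pass to a finite subcover and pick a smooth partition of unity $\{\rho^{(\sfa)}\}$ subordinate to it, and set
\beq
\derG:=\sum_\sfa \rho^{(\sfa)}\,\derG^{(\sfa)},
\eeq
interpreted by extending each summand by zero outside $\supp\rho^{(\sfa)}\subset\CU^{(\sfa)}$. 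Each $\rho^{(\sfa)}\derG^{(\sfa)}$ is a smooth $\mfkDal$-valued 1-form on $\CM$ (since $\derG^{(\sfa)}$ is smooth on a neighborhood of $\supp\rho^{(\sfa)}$ and $\mfkDal$ is a \Frechet\ space so scalar multiplication by a smooth function is smooth), and a finite sum of such objects lies in $\Omega^1(\CM,\mfkDal)$.

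Verification that $\derG$ implements smoothness is a direct computation using $\sum_\sfa\rho^{(\sfa)}\equiv 1$: for any $\CA\in\SAal$,
\beq
\langle\derG(\CA)\rangle_\psi=\sum_\sfa \rho^{(\sfa)}\langle\derG^{(\sfa)}(\CA)\rangle_\psi=\sum_\sfa\rho^{(\sfa)}\,d\langle\CA\rangle_\psi=d\langle\CA\rangle_\psi.
\eeq
The proposition then implies that $\psi$ is smooth on $\CM$. The only place I would expect to be slightly careful is checking that $\rho^{(\sfa)}\derG^{(\sfa)}$ really defines an element of $\Omega^1(\CM,\mfkDal)$—this amounts to a routine \Frechet-calculus check that multiplication of a smooth $\mfkDal$-valued form by a compactly supported smooth scalar function, extended by zero, remains smooth, which follows from the definition of smoothness into \Frechet\ spaces used in Appendix A.
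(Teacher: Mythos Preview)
Your proposal is correct and follows essentially the same approach as the paper: the paper's proof also notes the forward direction is obvious and, for the converse, glues the local 1-forms $\derG^{(\sfa)}$ via a partition of unity subordinate to the cover to produce a global $\derG$ with respect to which $\psi$ is parallel. Your version supplies a bit more detail (the explicit verification $\langle\derG(\CA)\rangle_\psi=\sum_\sfa\rho^{(\sfa)}d\langle\CA\rangle_\psi=d\langle\CA\rangle_\psi$ and the remark on extending $\rho^{(\sfa)}\derG^{(\sfa)}$ by zero), but the argument is the same.
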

\begin{proof}
The only if direction is obvious. Now suppose one is given a $\mfkDal$-valued 1-form $\derG^{(\sfa)}$ on each $\CU^{(\sfa)}$ such that $\psi$ is parallel with respect to each of them. Then we can construct $\derG$ on $\CM$ with respect to which $\psi$ is parallel using a partition of unity for some open cover subordinate to $\{\CU^{(\sfa)}\}$.
\end{proof}
If every element of the cover $\{\CU^{(\sfa)}\}$ is smoothly contractible, one can also describe a smooth family of states using locally-defined smooth families of LGPs. 
\begin{prop}
Let $\{\CU^{(\sfa)}\}$ be a finite cover such that each element is smoothly contractible. A family $\psi_\m,\m\in\CM,$ is smooth if and only if there is a state $\psi_0$ and smooth families $\beta^{(\sfa)}_{\m}$, $\m \in \CU^{(\sfa)}$, of LGPs such that  $\psi_\m = \psi_0 \circ \beta^{(\sfa)}_\m(1)$ for any $\m\in\CU^{(\sfa)}$ and any $\sfa$.
\end{prop}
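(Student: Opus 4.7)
\textbf{Forward direction} ($\Rightarrow$). Assume $\psi$ is smooth with connection $\derG \in \Omega^1(\CM, \mfkDal)$. Fix a basepoint $\m_0 \in \CM$, set $\psi_0 := \psi_{\m_0}$, and for each chart pick a point $\m_\sfa \in \CU^{(\sfa)}$. Using connectedness of $\CM$, choose once and for all a smooth path $\eta^{(\sfa)}:[0,1] \to \CM$ from $\m_0$ to $\m_\sfa$ whose derivatives vanish to all orders at the endpoints. Using the smooth contractibility of $\CU^{(\sfa)}$, pick a smooth homotopy $H^{(\sfa)}:\CU^{(\sfa)}\times [0,1] \to \CU^{(\sfa)}$ with $H^{(\sfa)}(\m,0)=\m_\sfa$, $H^{(\sfa)}(\m,1)=\m$, and all $s$-derivatives vanishing at the endpoints. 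Concatenating and reparametrizing gives a family of smooth paths $\gamma^{(\sfa)}_\m:[0,1]\to \CM$ with $\gamma^{(\sfa)}_\m(0)=\m_0$, $\gamma^{(\sfa)}_\m(1)=\m$, depending smoothly on $\m \in \CU^{(\sfa)}$. Set $\beta^{(\sfa)}_\m := \alpha_{(\gamma^{(\sfa)}_\m)^*\derG}$. Then Definition \ref{def:smoothfamily} gives $\psi_\m = \psi_0 \circ \beta^{(\sfa)}_\m(1)$, and the map $\m \mapsto (\gamma^{(\sfa)}_\m)^*\derG$ into $C([0,1],\mfkDal)$ is smooth (since $\derG$ is smooth and $\gamma^{(\sfa)}_\m$ is jointly smooth), so $\beta^{(\sfa)}$ is a smooth family of LGPs.

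\textbf{Backward direction} ($\Leftarrow$). Assume the data $\psi_0$ and $\beta^{(\sfa)}_\m$ are given. By Remark \ref{rmk:alphaMderivative}, each smooth family of LGPs $\beta^{(\sfa)}$ determines a smooth $\mfkDal$-valued 1-form $\omega^{(\sfa)} := \omega_{\beta^{(\sfa)}} \in \Omega^1(\CU^{(\sfa)},\mfkDal)$ satisfying $d(\beta^{(\sfa)}_\m(1)(\CA)) = \beta^{(\sfa)}_\m(1)(\omega^{(\sfa)}(\CA))$ for all $\CA\in\mfkdal$. Taking the expectation in $\psi_0$ and using $\psi_\m = \psi_0 \circ \beta^{(\sfa)}_\m(1)$, this gives
\beq
d\langle \CA \rangle_{\psi_\m} = \langle \omega^{(\sfa)}(\CA) \rangle_{\psi_\m}, \qquad \m \in \CU^{(\sfa)},
\eeq
so $\psi$ is parallel with respect to $d+\omega^{(\sfa)}$ on $\CU^{(\sfa)}$. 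In particular $\m \mapsto \langle \CA \rangle_{\psi_\m}$ is smooth on each $\CU^{(\sfa)}$, hence globally smooth. Choose a smooth partition of unity $\{\chi^{(\sfa)}\}$ subordinate to $\{\CU^{(\sfa)}\}$ and define $\derG := \sum_\sfa \chi^{(\sfa)} \omega^{(\sfa)} \in \Omega^1(\CM, \mfkDal)$. Then at every $\m$,
\beq
\langle \derG(\CA)\rangle_{\psi_\m} = \sum_\sfa \chi^{(\sfa)}(\m)\langle \omega^{(\sfa)}(\CA)\rangle_{\psi_\m} = \Bigl(\sum_\sfa \chi^{(\sfa)}(\m)\Bigr)\, d\langle \CA\rangle_{\psi_\m} = d\langle \CA \rangle_{\psi_\m},
\eeq
so by the preceding proposition the family $\psi$ is smooth.

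\textbf{Main obstacle.} The only real subtlety is in the forward direction: constructing, for each $\sfa$, a family of paths $\gamma^{(\sfa)}_\m$ that is jointly smooth in $(s,\m)$ so that $(\gamma^{(\sfa)}_\m)^*\derG$ is a smooth curve in $C([0,1],\mfkDal)$. Pure contractibility gives only continuity, so one must use \emph{smooth} contractibility and insert a smooth reparametrization near the concatenation point where $\eta^{(\sfa)}$ meets $H^{(\sfa)}(\m,\cdot)$ to kill derivatives of all orders. Once joint smoothness of $\gamma^{(\sfa)}_\m$ is in hand, smoothness of $\m \mapsto \beta^{(\sfa)}_\m$ follows from Prop. \ref{prop:GAsmoothness}, and the rest of both directions is routine bookkeeping with the parallel-transport characterization of smooth families.
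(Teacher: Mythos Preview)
Your proof is correct and follows essentially the same strategy as the paper: in the forward direction, build a smooth family of paths from a fixed basepoint to each $\m$ and transport $\derG$ along them; in the backward direction, take the 1-forms $\omega^{(\sfa)}=\omega_{\beta^{(\sfa)}}$ and glue with a partition of unity. The paper simply asserts the existence of a smooth homotopy $\gamma^{(\sfa)}:\CU^{(\sfa)}\times[0,1]\to\CM$ between the constant map at $\m_0$ and the identity, whereas you spell out the two-step construction (path to $\m_\sfa$, then contraction inside $\CU^{(\sfa)}$) and the reparametrization needed to make the concatenation jointly smooth --- a genuine detail the paper glosses over, since $\m_0$ need not lie in $\CU^{(\sfa)}$.
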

\begin{proof}
Suppose $\psi_\m$ is parallel with respect to $\derG \in \Omega^1(\CM,\mfkDal)$. We pick a point $\m_0 \in \CM$ and let $\psi_0=\psi_{\m_0}$. For every $\sfa$ let   $\gamma^{(\sfa)}:\CU^{\sfa}\times [0,1] \to \CM$ be a smooth homotopy between the map of $\CU^{(\sfa)}$ to $\m_0$ and the identity map. We regard it as a smooth family of paths in $\CM$ based at $\m_0$ and labeled by $\m$ and let  $\beta^{(\sfa)}_{\m}:= \alpha_{\gamma^{(\sfa)*}_{\m}\derG}$. 

In the opposite direction, given families of LGPs $\beta^{(\sfa)}$ we can define $\derG = \sum_{\sfa} f^{(\sfa)} \omega^{(\sfa)}$, where $\omega^{(\sfa)}:= \omega_{\beta^{(\sfa)}}$ and $f^{(\sfa)}$ is a partition of unity. Since the family of states $\psi$ is parallel with respect to the connection $d+\omega^{(\sfa)}$ on each $\CU^{(\sfa)}$, it is also parallel with respect to $d+\derG$.
\end{proof}

Given a smooth family of states $\psi_\m,\m\in\CM$, one may ask whether it is possible to choose a globally defined smooth family of LGPs $\beta_\m$, $\m\in\CM$, such that $\psi_\m=\psi_0\circ\beta_\m$. Since the space of LGPs is smoothly contractible, this is possible only if the family of states is smoothly homotopic to a constant family. For $d=0$, when the algebra $\SAal$ is simply the algebra of operators on a finite-dimensional Hilbert space $\CH$, it is well known that not all families of states are homotopic to a constant family. Indeed, for $d=0$ a smooth family of states is the same as a smooth line bundle $\CL$ over $\CM$. If the 1st Chern class of this line bundle is  non-trivial, the family is not homotopic to a constant one. If $\psi_\m$ is a ground state of a family of gapped Hamiltonians parameterized by $\CM$, it is the cohomology class of the Berry curvature which provides an obstruction for the existence of a globally defined family of LGPs. Below we will construct similar obstructions (``higher Berry classes'') for smooth families of gapped states with $d>0$. This includes smooth families of ground states of gapped finite-range Hamiltonians.

% By choosing any gapped Hamiltonian $\derH_0$ for $\psi_{\lambda_0}$ we can define a smooth family of Hamiltonians over $\CM$ by $\derH = \sum_{\sfa} f^{(\sfa)} \alpha^{(\sfa)}(\derH_0)$.

% In this section we define an invariant of a family of states using the second definition in terms of globally defined $\derG \in \Omega^1(\CM,\mfkDal)$.  

To construct these obstructions we will use a bi-complex of differential forms taking values in the complex $\No=\left(C_{\bullet}(\mfkdal)\partialarrow\mfkDal\right)$:
\beq \label{eq:OCcomplex}
\ldots \stackrel{\partial}{\ra} \Omega^{\bullet}(\CM, C_{1}(\mfkdal))\stackrel{\partial}{\ra} \Omega^{\bullet}(\CM,C_0(\mfkdal))\stackrel{\partial}{\ra} \Omega^{\bullet}(\CM,\mfkDal) \stackrel{\partial}{\ra} 0,
\eeq
with the second differential being the de Rham differential $d$. In this section it will be convenient to use a cohomological rather than homological grading on $\No$ and accordingly denote its degree $k$-component by $\No^k$. Then $d$ has bi-degree $(1,0)$, while $\partial$ has bi-degree $(0,1)$. We will also shift the grading on $\No$ so that $\mfkDal$ sits in degree $0$ and the bracket has degree $0$. Then $\No^\bullet$ becomes a non-positively graded DGFLA. 

The bi-complex (\ref{eq:OCcomplex}) is a graded tensor product of the DGFLA $\No^\bullet$ and the supercommutative algebra $\Omega^\bullet(\CM,\RR)$. Therefore it has a natural graded bracket which makes it into a graded Lie algebra. Explicitly, the bracket of decomposable elements $\chg\otimes\phi$ and $\chh\otimes \theta$, $\chg,\chh\in\No^\bullet$, $\phi,\theta\in\Omega^\bullet(\CM,\RR),$ is
\begin{equation}
[\chg\otimes\phi,\chh\otimes \theta]=(-1)^{|\phi||\chh|}[\chg,\chh]\otimes (\phi\wedge\theta).
\end{equation}
The space   $\Omega^\bullet(\CM,\No^\bullet)$ equipped with the total differential $d+\partial$ and the bracket is a DGFLA. 

When $\Omega^\bullet(\CM,\No^\bullet)$  is regarded as a complex with respect to $\partial$, it has a sub-complex consisting of chains and derivations preserving $\psi$ pointwise on $\CM$:
\beq \label{eq:PsiOCcomplex}
\ldots \stackrel{\partial}{\ra} \Omega^{\bullet}(\CM, C_{1}(\mfkdpal))\stackrel{\partial}{\ra} \Omega^{\bullet}(\CM,C_0(\mfkdpal))\stackrel{\partial}{\ra} \Omega^{\bullet}(\CM,\mfkDpal) \stackrel{\partial}{\ra} 0.
\eeq
We will denote it $\Omega^\bullet(\CM,\No^\bullet_\psi)$. This sub-complex is not preserved by $d$. But if the family $(\CM,\psi)$ is  parallel with respect to $\derG\in\Omega^1(\CM,\mfkDal)$, then it is preserved by a ``covariant differential'' $D:= d + \{\derG, \cdot\}$. Indeed, for any $\chb \in \Omega^m(\CM,\No^q_{\psi})$ and any $\CA \in \SAal$ we have
\begin{multline}
\lal \{ D \chb, \CA \} \ral_{\psi} = \lal D \{ \chb, \CA \} \ral_{\psi} + (-1)^{q+m} \lal \{ \chb, D \CA \} \ral_{\psi} = \\ = d \lal \{ \chb, \CA \} \ral_{\psi}  =  0.
\end{multline}
$D$ has bi-degree $(1,0)$ and supercommutes with $\partial$ provided we flip the sign of $\partial$ on odd-degree forms. It also satisfies $D^2 = \{ \derF,\cdot \}$, where the ``curvature'' $\derF := d \derG + \frac12\{\derG,\derG\}\in\Omega^2(\CM,\mfkDal)$ is covariantly constant: $D \derF = 0$. In addition, 
\beq
\lal \derF (\CA) \ral_{\psi} = \lal D^2 \CA \ral_{\psi} = d^2 \lal \CA \ral_{\psi} = 0.
\eeq
Therefore $\derF\in\Omega^2(\CM,\mfkDpal)$.
\begin{remark}
The above properties of $D$ and $\derF$ mean that $\Omega^\bullet(\CM,\No^\bullet_\psi)$ equipped with $D+\partial$ is a curved differential graded Lie algebra (CDGLA) with curvature $\derF$, while the inclusion of $(\Omega^\bullet(\CM,\No^\bullet_\psi),D+\partial)$ into $(\Omega^\bullet(\CM,\No^\bullet),d+\partial)$ is a morphism of CDGLAs.
\end{remark}

\begin{theorem}\label{thm:contrhoOmegaC}
Let $(\CM,\psi)$ be a smooth family of gapped states parameterized by a compact connected manifold $\CM$. Then the complex (\ref{eq:PsiOCcomplex}) is exact with respect to $\p$. Moreover, there is a continuous linear map $h^\psi_\CM$ from  $\Omega^{\bullet}(\CM,Z^k(\No_\psi))$ to $\Omega^{\bullet}(\CM,\No^{k-1}_\psi)$ which is right inverse to $\partial$.
\end{theorem}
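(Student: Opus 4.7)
The strategy is to reduce to the pointwise statement of Theorem \ref{thm:poincareH} by building a contracting homotopy locally on smoothly contractible patches via families of LGPs, and then gluing with a partition of unity.

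Since $\CM$ is compact, choose a finite open cover $\{\CU^{(\sfa)}\}_{\sfa \in I}$ by smoothly contractible sets with a subordinate partition of unity $\{f^{(\sfa)}\}$. By the proposition preceding the theorem, for each $\sfa$ there is a base state $\psi^{(\sfa)}_0$ and a smooth family of LGPs $\beta^{(\sfa)}_\m$, $\m\in\CU^{(\sfa)}$, with $\psi_\m = \psi^{(\sfa)}_0 \circ \beta^{(\sfa)}_\m(1)$. Theorem \ref{thm:poincareH} applied to $\psi^{(\sfa)}_0$ provides a continuous linear contracting homotopy $h_0^{(\sfa)}:Z^k(\No^{\psi_0^{(\sfa)}})\to \No^{k-1}_{\psi_0^{(\sfa)}}$ with $\partial\circ h_0^{(\sfa)}=\Id$.

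From $\psi_\m = \psi^{(\sfa)}_0 \circ \beta^{(\sfa)}_\m(1)$ and the $*$-automorphism property, one checks by a short direct calculation that the LGA-action of $\beta^{(\sfa)}_\m(1)$ on chains and derivations (as defined at the end of Section \ref{ssec:automorphisms}) is a linear bijection $\No^{\psi_\m}\to \No^{\psi_0^{(\sfa)}}$ commuting with $\partial$, with inverse the action of $\beta^{(\sfa)}_\m(1)^{-1}$. Hence
\beq
h^{(\sfa)}_\m := \beta^{(\sfa)}_\m(1)^{-1}\circ h_0^{(\sfa)} \circ \beta^{(\sfa)}_\m(1)
\eeq
is a well-defined linear map $Z^k(\No^{\psi_\m}) \to \No^{k-1}_{\psi_\m}$ satisfying $\partial\circ h^{(\sfa)}_\m = \Id$. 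Because $\m \mapsto \beta^{(\sfa)}_\m(1)$ is smooth as a map into the LGA-group and its action on the \Frechet\ spaces $C_\bullet(\mfkdal)$ and $\mfkDal$ is jointly continuous and smooth by Prop. \ref{prop:GAsmoothness}, the family $\{h^{(\sfa)}_\m\}$ assembles, by pointwise action on the $\No$-factor, into a continuous linear map
\beq
h^{(\sfa)}:\Omega^\bullet(\CU^{(\sfa)},Z^k(\No_\psi))\longrightarrow \Omega^\bullet(\CU^{(\sfa)},\No^{k-1}_\psi).
\eeq

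Finally, define the global map by
\beq
h^\psi_\CM(\chb) := \sum_{\sfa \in I} f^{(\sfa)}\, h^{(\sfa)}\bigl(\chb\vert_{\CU^{(\sfa)}}\bigr),
\eeq
extending each summand by zero outside $\CU^{(\sfa)}$. Since $\partial$ acts only on the $\No$-factor and $\sum_\sfa f^{(\sfa)}=1$, for any cycle $\chb$ we obtain $\partial h^\psi_\CM(\chb) = \sum_\sfa f^{(\sfa)} \chb = \chb$, which yields both the exactness of (\ref{eq:PsiOCcomplex}) with respect to $\partial$ and the continuous linear right inverse $h^\psi_\CM$. The main technical obstacle is step three: verifying that conjugation by a smoothly $\m$-dependent LGA gives a \Frechet-continuous operation on $\No$, so that $h^{(\sfa)}$ is jointly continuous. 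This reduces to the smoothness results for the LGA-action in Prop. \ref{prop:GAsmoothness}, together with the equicontinuity of the integral-transform building blocks $t^\psi$ and $\CI^\psi$ that enter the definition of $h_0^{(\sfa)}$ in the proof of Theorem \ref{thm:poincareH}.
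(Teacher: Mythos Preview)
Your proof is correct and follows essentially the same route as the paper: reduce to Theorem \ref{thm:poincareH} at a base state via a finite contractible cover and smooth families of LGPs, conjugate the pointwise contracting homotopy by the LGA, and glue with a partition of unity. Note that your conjugation direction $\beta^{(\sfa)}_\m(1)^{-1}\circ h_0^{(\sfa)}\circ \beta^{(\sfa)}_\m(1)$ is the correct one (since $\psi_\m=\psi_0^{(\sfa)}\circ\beta^{(\sfa)}_\m(1)$ implies $\beta^{(\sfa)}_\m(1)$ maps $\No^{\psi_\m}$ to $\No^{\psi_0^{(\sfa)}}$); the paper's displayed formula has the two factors swapped.
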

\begin{remark}
Since all states in a smooth family are related by LGAs, to check whether a family is gapped it is sufficient to check whether any particular state in the family is gapped.
\end{remark}
\begin{proof}
Let us fix a finite smoothly contractible  cover $\{\CU^{(\sfa)}\}$ and families of LGPs $\beta^{(\sfa)}_\m$, such that $\psi_\m = \psi_0 \circ \beta^{(\sfa)}_\m(1)$ for some $\psi_0$. Since $\psi_0$ is gapped, the complex (\ref{eq:PsiOCcomplex}) becomes  exact if we replace the family $\psi_\m$ with the constant family $\psi_0$. The corresponding right inverse to $\partial$ is the map $h^{\psi_0}$ from Theorem  \ref{thm:poincareH}. Then the right inverse  to $\partial$ for the restriction of (\ref{eq:PsiOCcomplex}) to $\CU^{(\sfa)}$ is
$h^\psi_{\CU^{(\sfa)}}=\beta^{(\sfa)}(1)\circ h^{\psi_0}\circ \beta^{(\sfa)}(1)^{-1}$. Then picking a partition of unity $f^{(\sfa)}$ subordinate to the cover we get a right inverse to $\partial$ on the whole $\CM$ by letting
$h^\psi_\CM=\sum_{\sfa} f^{(\sfa)}h^\psi_{\CU^{(\sfa)}}.$

%Let $\chF \in \Omega^q(\CM,C_n(\mfkdpal))$ be such that $\p \chF = 0$. Using LGAs $\beta^{(\sfa)}_\m(1)$ and the fact that the LGAs act on the complex (\ref{eq:OCcomplex}) by automorphisms of the DGFLA structure, we can  construct $\chG^{(\sfa)} \in \Omega^q(\CU^{(\sfa)},C_{n+1}(\mfkdpal))$ such that on $\CU^{(\sfa)}$ we have $\chF = \p \chG^{(\sfa)}$. Then picking a partition of unity $f^{(\sfa)}$ subordinate to the cover we can write $\chF = \p \chG$ for $\chG = \sum_{\sfa} f^{(\sfa)} \chG^{(\sfa)}\in \Omega^q(\CM,C_{n+1}(\mfkdpal))$. Similarly, one can show that any $\derF \in \Omega^q(\CM,\mfkDpal)$ can be written as $\partial\chG$ for some $\chG\in\Omega^q(\CM,C_0(\mfkdpal))$. The contracting homotopy on $\Omega^{\bullet}(\CM,C_{\bullet}(\mfkdpal))$ is induced by the contracting homotopy at $\psi_0$.

\end{proof}

\subsection{Higher Berry classes}   \label{ssec:descendantsBerry}

For $d=0$ we have $\mfkDal=\mfkdal$. A smooth family of pure states can be identified with a smooth family of self-adjoint rank-1 projectors $P$ in $\SAal$ parameterized by $\CM$. The gapped condition is vacuous for $d=0$. Then $\derF$ is a closed  2-form on $\CM$ with values in $\mfkDal$ which commutes with $P$. If one defines a 2-form $f=\langle\derF\rangle_\psi={\rm Tr} P\derF$, then $f$ is closed and purely imaginary. If we identify $\derG$ with the adiabatic  connection arising from a family of gapped Hamiltonians for which $\psi$ is the family of ground states, then $f$ is the curvature of the Berry connection.

For $d>0$ the expression $\langle\derF\rangle_\psi$ does not make sense since the derivation $\derF$ need not be summable (i.e. the Berry curvature is divergent in the thermodynamic limit). Instead, following \cite{Kitaev_talk,kapustin2020higherB}, we will define descendants of $\derF$ and use them to construct an element of $\Omega^{d+2}(\CM,\mfkdpal)$ whose average will be a closed $(d+2)$-form on $\CM$. 

%Let $(\CM,\psi)$ be a smooth family of gapped states parameterized by a compact connected manifold $\CM$. At each point $\m\in\CM$ we have the complex 
%\beq
%... \ra C_1(\mfkd^{\psi_\m}_{al}) \stackrel{\partial}{\ra} C_0(\mfkd^{\psi_\m}_{al}) \stackrel{\partial}{\ra} \mfkDal.
%\eeq
%Since the $\partial$-cohomology of $\noethcom^{\bullet}_{\psi}$ is trivial, the cohomology of this complex as a vector space can be identified with $\mfkDal/\mfkDal^{\psi_\m}$, elements of which describe possible deformations of the state in a given phase. We denote the corresponding sheaf of non-positively graded DGFLA  by $\defcom_{\psi}^{\bullet}$, so that  $\defcom_{\psi}^{0}=\mfkDal$ and $\defcom_{\psi}^{-n-1}=C_{n}(\mfkdpal)$, $n\geq 0$.

Let $\mathbf{d} = d + \partial$. If $(\CM,\psi)$ is parallel with respect to $D=d+\derG$, then using Theorem  \ref{thm:contrhoOmegaC} we can recursively build $\chg^{(n)} \in \Omega^{n+2}(\CM,\No_{\psi}^{-n-1}),\, n\geq 0,$ such that $\derG^{\bullet} := \derG + \sum_{n=0}^{\infty} \chg^{(n)}$ satisfies the Maurer-Cartan (MC) equation
\beq \label{eq:descent}
\mathbf{d} \derG^{\bullet} + \frac12 \{\derG^{\bullet},\derG^{\bullet}\} = 0.
\eeq
Indeed, the MC equation is equivalent to the following system of equations:
\begin{align}
\derF &=-\partial \chg^{(0)}, & \\ 
D \chg^{(n)}+\frac12 \sum_{k=0}^{n-1} \left\{ \chg^{(k)},\chg^{(n-1-k)}\right\}&=-\partial \chg^{(n+1)}, & n=0,1,2,\ldots
\end{align}
It is straightforward to check that if the first $n$ equations are satisfied, then the left-hand side of the $(n+1)$-st equation is annihilated by $\partial$. Therefore by Theorem  \ref{thm:contrhoOmegaC} a solution for $\chg^{(n+1)}$ exists. 

Any solution to the MC equation 
defines a differential ${\mathbf D} = \mathbf{d} + \derG^{\bullet}$ on $\Omega^\bullet(\CM,\No^\bullet)$ which preserves $\Omega^\bullet(\CM,\No_\psi^\bullet)$ and satisfies ${\mathbf D}^2=0$. The differential ${\mathbf D}$ turns $\Omega^{\bullet}(\CM,\CN^\bullet_{\psi})$ into a DGFLA. Note also that
\beq
d \lal \chg^{(n)} \ral_{\psi} = -  \lal \p  \chg^{(n+1)} \ral_{\psi},
\eeq 
where we have used $\lal \{ \chg^{(k)} ,\cdot \} \ral_{\psi} = 0$. 

For a $d$-dimensional system and any conical partition $(A_0,...,A_d)$ of $\RR^d$ we have an evaluation operation $\lal (\,\cdot\,)_{A_0...A_d} \ral: \Omega^{\bullet}(\CM,\No^{\bullet}) \to \Omega^{\bullet}(\CM,i\RR)$ that takes value $\lal \cha_{A_0...A_d} \ral_{\psi}$ if $\cha \in \Omega^{\bullet}(\CM,\No^{-d-1})$ and is 0 otherwise.

Let $f:=\lal \derG^{\bullet}_{A_0 A_1 ... A_{d}} \ral_{\psi} \in\Omega^{d+2}(\CM,i\RR)$. We have
\beq
d f = d \lal \chg^{(d)}_{A_0 A_1 ... A_{d}} \ral_{\psi} = - \lal (\p \chg^{(d+1)})_{A_0 A_1 ... A_{d}} \ral_{\psi} = 0.
\eeq
Therefore $f$ defines a de Rham cohomology class $[f] \in H^{d+2}(\CM,i\RR)$. Clearly, $f$ is the same for all orderings of $(A_0,...,A_d)$ which correspond to the same orientation of bases on $S^{d-1}$.  Note also that $f$ is locally computable: if we  truncate the sum defining $\chg^{(d)}_{A_0 A_1 ... A_{d}}$ by replacing each $A_a$ with  $A_a\cap B_p(r)$, this will modify $f$ only by an $\Or$ quantity.

One and the same smooth family of states can be parallel with respect to many different connections $\derG\in\Omega^1(\CM,\mfkDal)$, and for a fixed connection $\derG$, the solution of the descent equation (\ref{eq:descent}) is also far from unique. We will now show that the class $[f]$ is not affected by these choices and defines an invariant of $(\CM,\psi)$.

\begin{theorem}  \label{thm:[f]}
The class $[f] \in H^{d+2}(\CM,i\RR)$ defines an invariant of the family of states $(\CM,\psi)$. In particular, it does not depend on the choice of $\derG^{\bullet}$ or a conical partition $(A_0,...,A_d)$ (provided it corresponds to some fixed orientation of $S^{d-1}$).
\end{theorem}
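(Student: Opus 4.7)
\emph{Overall strategy.} The form $f$ depends on three choices: a base connection $\derG\in\Omega^1(\CM,\mfkDal)$ making $\psi$ parallel, a Maurer--Cartan solution $\derG^\bullet$ extending $\derG$, and a conical partition $(A_0,\ldots,A_d)$. My plan is to show that $[f]\in H^{d+2}(\CM,i\RR)$ is unchanged under each of these separately, in each case by a mapping-cylinder argument: given two candidates $f_0,f_1$ on $\CM$, I construct a closed form $\tilde f$ on $\CM\times[0,1]$ whose restrictions to the boundaries are $f_0$ and $f_1$. Since the inclusions $\CM\hookrightarrow\CM\times\{0,1\}\subset\CM\times[0,1]$ are homotopic, this gives $[f_0]=[f_1]$ in $H^{d+2}(\CM,i\RR)$.

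\emph{Independence from $\derG$ and $\derG^\bullet$.} Pull $\psi$ back to $\CM\times[0,1]$ via the projection; the resulting family $\tilde\psi$ is again a smooth family of gapped states. The 1-forms in $\Omega^1(\CM\times[0,1],\mfkDal)$ making $\tilde\psi$ parallel form an affine space modeled on $\Omega^1(\CM\times[0,1],\mfkDpal)$, so I can pick a $\tilde\derG$ that restricts to $\derG_0$ at $t=0$ and to $\derG_1$ at $t=1$. Running the descent recursion of the main construction on $\CM\times[0,1]$ produces some MC solution $\tilde\derG^\bullet$. To force its boundary restrictions to equal the prescribed $\derG^\bullet_0,\derG^\bullet_1$, I would invoke a relative version of Theorem \ref{thm:contrhoOmegaC} proved by induction on the descent level: at each level $n$, the discrepancy between the freshly constructed $\tilde\chg^{(n)}$ and the prescribed boundary data is automatically in $\ker\partial$, hence by Theorem \ref{thm:contrhoOmegaC} is $\partial$-exact, and can be absorbed into a redefinition of $\tilde\chg^{(n+1)}$. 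The form $\tilde f:=\lal\tilde\derG^\bullet_{A_0\ldots A_d}\ral_{\tilde\psi}$ is then closed on $\CM\times[0,1]$ by the same Stokes computation that proved $df=0$ in the excerpt, and its boundary restrictions are $f_0,f_1$.

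\emph{Independence from the partition.} The space of conical partitions with a fixed orientation of $S^{d-1}$ is smoothly path-connected—one can continuously move the apex and isotope the triangulation of $S^{d-1}$ without changing orientation—so it suffices to treat a smooth 1-parameter family $(A_0(t),\ldots,A_d(t))$ and show that $\dot f(t):=\frac{d}{dt}\lal\chg^{(d)}_{A_0(t)\ldots A_d(t)}\ral_\psi$ is $d$-exact on $\CM$. Infinitesimally moving the partition boundaries corresponds to a contraction of $\chg^{(d)}$ with a lower-dimensional "interface" region configuration, which fits naturally into the {\Cech}-type exact sequence of Remark \ref{rmk:cosheafcomplex}. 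Using the descent equation $-\p\chg^{(d+1)}=D\chg^{(d)}+\frac12\sum_{k=0}^{d-1}\{\chg^{(k)},\chg^{(d-1-k)}\}$ together with the fact that each $\chg^{(k)}$ lies in $C_k(\mfkdpal)$ and $\derG\in\Omega^1(\CM,\mfkDpal)$ (so bracket terms vanish under $\lal\cdot\ral_\psi$), one rewrites the interface contraction as $d$ applied to the analogous interface contraction of $\chg^{(d-1)}$.

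\emph{Main obstacle.} The hardest step is the relative form of Theorem \ref{thm:contrhoOmegaC} used in the interpolation argument, namely showing that the descent construction admits any consistent prescribed boundary data. This requires the inductive absorption scheme sketched above and careful control that the homotopy $h^\psi_{\CM\times[0,1]}$ can be chosen compatibly with the boundary. The partition-independence computation is also delicate, since the combinatorial rearrangement of sums under a moving conical partition must be tracked coherently in the {\Cech}-type complex with correct orientation signs; the underlying idea is a lattice analogue of the Stokes theorem, but the bookkeeping is the main technical burden.
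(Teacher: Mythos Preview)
Your mapping-cylinder approach to independence from $\derG$ and $\derG^\bullet$ is a valid alternative, but the paper's route is quite different and more elementary. The paper interpolates \emph{spatially} rather than parametrically: since $\derG-\tilde\derG=\partial\chk$ with $\chk\in\Omega^1(\CM,C_0(\mfkdpal))$, it sets $\hat\chh_j=\chh_j+\chi_A(j)\chk_j$ for a half-space $A=\{x>0\}$, obtaining a connection $\hat\derG$ that agrees with $\derG$ for $x\ll 0$ and with $\tilde\derG$ for $x\gg 0$; running the descent once with $\hat\derG$ and then sliding the apex of the conical partition to $x\to\pm\infty$ (using partition-independence, which the paper therefore proves \emph{first}) yields $[f]=[\hat f]=[\tilde f]$. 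A similar spatial interpolation, iterated level by level, handles the choice of MC solution. This avoids manifolds with boundary and the relative form of Theorem~\ref{thm:contrhoOmegaC}; the cost is that the logical order is reversed from yours. In your scheme the inductive absorption step is slightly misstated: the boundary discrepancy at level $n$ is $\partial$-closed hence $\partial$-exact, and must be corrected by modifying $\tilde\chg^{(n)}$ itself (by $\partial$ of an extension of the primitive), not $\tilde\chg^{(n+1)}$.

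Your partition-independence argument has a genuine gap. The contraction $\chg^{(d)}_{A_0(t)\ldots A_d(t)}$ depends only on the finite sets $A_a(t)\cap\Lambda$, so as the conical boundaries move continuously the form $f(t)$ is \emph{piecewise constant} in $t$, jumping only when a lattice site crosses a wall; there is no infinitesimal variation to compute, and the \Cech\ complex of Remark~\ref{rmk:cosheafcomplex} is tied to a fixed partition and does not encode its deformation. What must actually be shown is that each discrete jump is $d$-exact. The paper does this directly: reassigning a finite region $B$ from $A_1$ to $A_0$ gives, using skew-symmetry and $A_0\cup\cdots\cup A_d=\RR^d$,
\[
\Delta f=\lal\chg^{(d)}_{B(A_0+A_1)A_2\ldots A_d}\ral_\psi=-\lal(\partial\chg^{(d)})_{BA_2\ldots A_d}\ral_\psi=d\lal\chg^{(d-1)}_{BA_2\ldots A_d}\ral_\psi.
\]
Independence from the choice of triangulation of $S^{d-1}$ then follows by combining this with the local computability of $f$ (so that moving the apex far away makes the difference between two triangulations an $\Or$ perturbation of the cohomology class, hence zero).
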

\begin{proof}
Suppose we have changed the regions $A_0,\ldots,A_d$ by reassigning a finite region $A_0 \to A_0 + B$ and $A_1 \to A_1 - B$. Then
\beq
\Delta f = \lal \chg^{(d)}_{B(A_0+A_1)A_2...A_d}  \ral_{\psi} = - \lal \p \chg^{(d)}_{B A_2...A_d}  \ral_{\psi} = d \lal \chg^{(d-1)}_{B A_2...A_d}  \ral_{\psi}
\eeq 
Thus $[f]$ does not depend on the choice of $A_0,\ldots,A_d$ for a fixed triangulation of $S^{d-1}.$ The same argument combined with local computability implies that $[f]$ does not depend on the choice of the triangulation of $S^{d-1}$. 

Let us fix a map $h^{\psi}_\CM$ as in Theorem \ref{thm:contrhoOmegaC}. Using $h^\psi_\CM$, for any two choices $\derG$,$\tilde{\derG}$ of the connection on a given family of states $(\CM,\psi)$ we can construct solutions $\chg^{\bullet}$, $\tilde{\chg}^{\bullet}$ of the MC  equation eq. (\ref{eq:descent}). Let $\chh$ and $\tilde \chh$ be the elements of $\Omega^1(\CM,C_0(\mfkdal))$ such that $\derG=\partial\chh$ and $\tilde\derG=\partial\tilde\chh$. Since $\lal [\derG-\tilde\derG, \cdot] \ral = 0$, we have $\derG - \tilde\derG = \p \chk$ for some $\chk \in \Omega^1(\CM,C_0(\mfkdpal))$. For a half-space $A$ defined by $x>0$ for some linear coordinate $x$ on $\RR^d$ we let $\hat \chh_j = \chh_j + \chi_{A}(j) \chk_j$. Then $\hat \derG = \p \hat \chh$ is a connection which preserves the family  $(\CM,\psi)$ and interpolates between $\derG$ for $x\ll 0$ and  $\tilde{\derG}$ for $x\gg 0$. Starting with $\hat \derG$ and using $h^\psi_\CM$ we can construct a solution $\hat{\chg}^{\bullet}$ of the descent equation that interpolates between $\chg^{\bullet}$ and $\tilde{\chg}^{\bullet}$. The corresponding class $[\hat f]\in H^{d+2}(\CM,\ZZ)$ is independent of the choice of conical partition, and therefore by choosing the apex of a conical partition with sufficiently large and negative $x$ (resp. sufficiently large and positive $x$) we can make it arbitrary close to the class $[f]$ corresponding to $\derG^\bullet$ (resp. the class $[\tilde f]$ corresponding to $\tilde\derG^\bullet$). Hence $[f]=[\hat f]=[\tilde f]$.

It is left to show that for a fixed conical partition $(A_0,...,A_d)$ and a  fixed $\derG$ the class $[f]$ does not depend on the choice of the solution of the MC equation eq. (\ref{eq:descent}). Let $\chg^{\bullet}$ be any solution of the MC equation, and let $\chg_k^{\bullet}$ be a solution obtained from $\derG$, $\chg^{(0)}$, ...,$\chg^{(k-1)}$ using $h^{\psi}_{\CM}$. Let $[f]$ and $[f_k]$ be the corresponding classes. Note that $[f_0]$ depends only on $\derG$ and $h^{\psi}_{\CM}$. Similarly to the argument from the previous paragraph, since $\p(\chg_{k+1}^{(k)}-\chg_{k}^{(k)})=0$ for any $k$ we can construct $\hat{\chg}_{k,k+1}^{\bullet}$ that interpolates between $\chg_k^{\bullet}$ and $\chg_{k+1}^{\bullet}$. Hence we have $[f_k] = [f_{k+1}]$. By definition $[f] = [f_{d+1}]$, and therefore $[f] = [f_0]$, which does not depend on the choice of $\chg^{\bullet}$.
\end{proof}

\begin{remark}
Let $\CO^{\times}$ be the sheaf of smooth $U(1)$-valued functions on $\CM$ and $\Omega^n$ be the sheaf of smooth purely imaginary $n$-forms on $\CM$.
Recall that a degree-$2$ Deligne-Beilinson $2$-cocycle is a 2-cocycle of the totalization of the \Cech\ cochain complex with coefficients in the complex of sheaves
\beq
[\CO^{\times} \xrightarrow[]{d\, \log} \Omega^1 \xrightarrow[]{d} \Omega^2].
\eeq
In the physics literature, such 2-cocycles are called 2-form gauge fields, and it is well-known that any 2-form gauge field determines a class in $H^3(\CM,\ZZ)$.
For a family $(\CM,\psi)$ of 1d states in the trivial phase one can construct a Deligne-Beilinson 2-cocycle and show that the corresponding class in $H^3(\CM,\ZZ)$ maps to $[f/2\pi i]$. Thus for families of 1d states in a trivial phase the class $[f]$ is quantized. The details will be discussed elsewhere.
\end{remark}

\subsection{Equivariant higher Berry classes } \label{ssec:descendantsSymmetryG}

\subsubsection{General construction}

Let $G$ be a compact connected Lie group with a Lie algebra $\mfkg$. In this section we define equivariant higher Berry classes for gapped states and their smooth families in the presence of $G$-symmetry. They generalize the Hall conductance of $U(1)$-invariant gapped 2d states defined in \cite{kapustin2020hall} and the Thouless pump \cite{Thouless} and its analogs in higher dimensions \cite{kapustin2020higherB}. 

We assume that each on-site Hilbert space $\CV_j$ is acted upon by a unitary representation of $G$. Let $\gamma:G \to \Aut(\SA)$ be the corresponding homomorphism. We denote the derivations corresponding to the infinitesimal generators of the symmetry by $\derQ \in \mfkDal \otimes \mfkg^{*}$, that is on $G$ we have $d \gamma^{(g)}(\,\cdot\,) = \gamma^{(g)} (\derQ^{(g)}_{\theta}(\,\cdot\,))$, where $\theta = g^{-1} d g \in \Omega^1(G,\mfkg)$ is the Maurer-Cartan form.

% For a manifold $\CM$ with a smooth action $L_g:\CM \to \CM, \, g \in G$ of the Lie group $G$ we denote the $\RR$-valued equivariant cohomology by $H^{\bullet}_G(\CM,\RR)$. In the Cartan model this cohomology can be defined as the cohomology of the complex of the equivariant differential forms $\l \Omega^{\bullet}(\CM,\RR) \otimes \text{Sym}^{\bullet}\,\mfkg^* \r^G$ with respect to the Cartan differential $d_C = d + \iota_v$ satisfying $d_C^2 = \CL_{v}$.

Let $\CM$ be a manifold with a smooth $G$-action $L_g:\CM \to \CM, \, g \in G$. Let $v \in \Gamma(T \CM) \otimes \mfkg^*$ be the vector fields corresponding to this action. If we choose a basis $r^a$ for $\mfkg^*$, then we can write $v=\sum_a v_a\otimes r^a$, where $v_a$ is a vector field on $\CM$.   There is an induced $G$-action on $\Omega^{\bullet}(\CM,\RR)$ via the pullback $\omega \mapsto L_{g^{-1}}^{*} \omega$. Infinitesimally, it is given by
$\CL_v = \sum_a \CL_{v^a} \otimes r^a$ where $\CL_{v_a}=\iota_{v_a} d+d\iota_{v_a}$ is the Lie derivative along $v_a$.

Let $\text{Sym}^{\bullet}\,\mfkg^*$ be the ring of polynomial functions on $\mfkg$. We regard it as a graded vector space, with linear functions on $\mfkg$ sitting in degree $2$. $G$ acts on it via the co-adjoint action $\Ad^*_g$. To define the Cartan model for the equivariant cohomology $H_G^{\bullet}(\CM,\RR)$ we consider the graded vector space 
$ \Omega^{\bullet}(\CM,\RR) \otimes \text{Sym}^{\bullet}\,\mfkg^*$ equipped with degree-1 maps $d \otimes 1$, $\iota_v = \sum_a \iota_{v^a} \otimes r^a$, and a degree-2 map $\CL_v=\iota_v d+d\iota_v$. Then the complex of the Cartan model $\Omega_G^{\bullet}(\CM,\RR)$ is defined to be $ (\Omega^{\bullet}(\CM,\RR) \otimes \text{Sym}^{\bullet}\,\mfkg^*)^G$, where $(\,\cdot\,)^G$ denoted the $G$-invariant part, with the differential being $d_C = d + \iota_v$. Note that $d_C^2 = \CL_v$ vanishes when restricted to  $\Omega_G^{\bullet}(\CM,\RR)$.

% Let $\CM$ be a manifold with a smooth $G$-action $L_g:\CM \to \CM, \, g \in G$. Let $v \in \Gamma(T \CM) \otimes \mfkg^*$ be the vector fields corresponding to this action. The contraction operation and the Lie derivative are denoted $\iota_{v}$ and $\CL_v$, respectively. We denote the $\RR$-valued equivariant cohomology by $H^{\bullet}_G(\CM,\RR)$. In the Cartan model this cohomology can be defined as the cohomology of the complex of the equivariant differential forms $\l \Omega^{\bullet}(\CM,\RR) \otimes \text{Sym}^{\bullet}\,\mfkg^* \r^G$ with respect to the Cartan differential $d_C = d + \iota_v$. By definition, equivariant differential forms are elements of $\Omega^{\bullet}(\CM,\RR) \otimes \text{Sym}^{\bullet}\,\mfkg^*$ that are annihilated by $\CL_v$. Thus $d_C^2 = \CL_{v}$ vanishes on $\l \Omega^{\bullet}(\CM,\RR) \otimes \text{Sym}^{\bullet}\,\mfkg^* \r^G$.

Let $(\CM,\psi)$ be a smooth family of gapped states over a compact connected manifold $\CM$.
\begin{definition}
A family $(\CM,\psi)$ is called $G$-equivariant if $\psi_m\circ\gamma^{(g)}=\psi_{L_{g^{-1}}(m)}$ for all $m\in \CM$ and all $g\in G$.
\end{definition} 
\noindent
We can define equivariant analogs of  differential forms taking values in observables, derivations and chains. They  are annihilated by $\CL_v - \derQ$. There is an averaging operation that projects spaces of such differential forms to their $G$-equivariant subspaces:
\beq
(\,\cdot\,)^{G} = \int_{G} d \mu_{G}(g) \, \gamma^{(g)}(L_{g}^*(\Ad^*_{g^{-1}}(\,\cdot\,))),
\eeq
where $\mu_{G}$ is the Haar measure with the normalization $\int_G d \mu_{G}(g) = 1$. The existence of this operation  ensures that $G$-equivariant versions of the complexes (\ref{eq:OCcomplex}) and (\ref{eq:PsiOCcomplex}) are exact.

Given a smooth $G$-equivariant family of gapped states $(\CM,\psi)$, we consider the complex of equivariant differential forms $\Omega_G^{\bullet}(\CM,\No^{\bullet})$ which is the totalization of $(\Omega^{\bullet}(\CM,\No^{\bullet}) \otimes \text{Sym}^{\bullet}\,\mfkg^*)^G$ with elements of $\mfkg^{*}$ being assigned degree $2$. Let $\mathbf{d}_C = d_C + \partial$. Then using the $G$-equivariant version of Theorem \ref{thm:contrhoOmegaC} we can recursively build  
$$
\chg^{(n)} \in \bigoplus_{0 \leq 2k \leq n+2} ( \Omega_G^{n+2-2k}(\CM,\No_{\psi}^{-n-1}) \otimes \text{Sym}^{k}\,\mfkg^*)^G,\,\,\, n\geq 0,$$
such that $\derG^{\bullet} := \derG + \sum_{n=0}^{\infty} \chg^{(n)}$ satisfies 
\beq \label{eq:descentEquivariant}
\mathbf{d}_C \derG^{\bullet} + \frac12 \{\derG^{\bullet},\derG^{\bullet}\} + \derQ = 0
\eeq
so that for ${\mathbf D}_C = \mathbf{d}_C + \derG^{\bullet}$ we have ${\mathbf D}_C^2 = \CL_v - \derQ $ that vanishes on $\Omega_G^{\bullet}(\CM,\No^{\bullet})$.

Similarly to the previous subsection, we can define the evaluation operation $\lal (\,\cdot\,)_{A_0...A_d} \ral: \Omega_G^{\bullet}(\CM,\No^{\bullet}) \to \Omega_G^{\bullet}(\CM,i\RR)$. For a conical partition $(A_0,...,A_d)$ we define an equivariant differential form
\beq
f = \lal \derG^{\bullet}_{A_0...A_d} \ral_{\psi}\in\Omega^\bullet_G(\CM,i\RR)
\eeq
which is closed with respect to $d_C$.
In general, unlike in the non-equivariant case, this form is not homogeneous when regarded as an ordinary form. The form $f$ represents a class of total degree $d+2$ in the equivariant cohomology:
\beq
[f]=[\lal \derG^{\bullet}_{A_0...A_d} \ral_{\psi}] \in H_G^{\bullet}(\CM,i\RR).
\eeq
In the same way as in Theorem \ref{thm:[f]} one can show that this class is independent of the choice of $(A_0,...,A_d)$ (provided it corresponds to some fixed orientation of $S^{d-1}$) and $\derG^{\bullet}$, and therefore defines an invariant of the family $(\CM,\psi)$.

\subsubsection{Special case: Hall conductance}
Let $d=2$, $\CM = \text{pt}$ and $G=U(1)$. The corresponding state $\psi$ must be $U(1)$-invariant. For $G=U(1)$ we can identify $(\Sym^{k+1}\mfkg^*)^G \cong \mathbb{R}$ via the isomorphism that sends the minimal integral element $t^{\otimes (k+1)}$ to $1$. It is easy to see from the Maurer-Cartan  equation eq. (\ref{eq:descentEquivariant}) that  $\derG^{\bullet}$ only has components of  even chain degree. Let $\chm^{(2k)}$ be the component of $\derG^{\bullet}$ of chain degree $2k$. The lowest two components of eq. (\ref{eq:descentEquivariant}) are 
\begin{align}
\derQ = - \p \chm^{(0)},\label{degreezero}\\
\frac12 \{\chm^{(0)},\chm^{(0)}\} = - \p \chm^{(2)}.\label{degreetwo}
\end{align}
In more detail, eq. (\ref{degreezero}) says that the derivation $\derQ$ has the form
\begin{equation}
\derQ(\CA)=-\sum_{j\in\Lambda} [\chm^{(0)}_j,\CA],   
\end{equation}
where for all $j\in\Lambda$ the observable $
\chm^{(0)}_j\in\mfkdpal$ is $U(1)$-invariant and does not excite $\psi$. Eq. (\ref{degreetwo}) reads in components:
\begin{equation}
[\chm^{(0)}_j,\chm^{(0)}_k]=-\sum_{l\in\Lambda} \chm^{(2)}_{ljk},
\end{equation}
where the observables $\chm^{(2)}_{ljk}\in\mfkdpal$ are $U(1)$-invariant and do not excite $\psi$. 
The contraction of $\chm^{(2)}$ with a conical partition $A_0,A_1,A_2$ defines an invariant
\beq
\sigma^{(2)} := 4 \pi i \lal \chm^{(2)}_{A_0 A_1 A_{2}} \ral_{\psi} \in H_{U(1)}^{4}(\text{pt},{\mathbb R}) \cong \RR.
\eeq
For ground states of gapped Hamiltonians in two dimensions $(\sigma^{(2)}/2\pi)$ coincides with the Hall conductance as explained in \cite{ThoulessHall}.

\begin{remark} \label{rmk:HallQuant}
It was shown by M. Hastings and S. Michalakis  \cite{hastingsmichalakis} that the Hall conductance of any gapped $U(1)$-invariant system on a torus of size $L$ with a non-degenerate ground state is integral up to $\OL$ corrections. Building on this work as well as \cite{bachmann2019many}, it was shown in \cite{ThoulessHall} that for invertible infinite-volume gapped states one has  $\sigma^{(2)} \in 2 \ZZ$ in the bosonic case and $\sigma^{(2)} \in \ZZ$ in the fermionic case.
\end{remark}

\subsubsection{Special case: non-abelian Hall conductance}

The case of $d=2$, $\CM = \text{pt}$ and arbitrary Lie group $G$ is similar, but now $\chm^{(0)} \in \left( C_0(\mfkdpal) \otimes \mfkg^* \right)^G$, satisfying $\derQ = - \p \chm^{(0)}$, and $\chm^{(2)} \in \left( C_2(\mfkdpal) \otimes \text{Sym}^2 \mfkg^* \right)^G$ solving
\begin{equation}
[\chm^{(0)}_j,\chm^{(0)}_k]=-\sum_{l\in\Lambda} \chm^{(2)}_{ljk},
\end{equation}
The contraction of $\chm^{(2)}$ with a conical partition $A_0,A_1,A_2$ defines an invariant
\beq \label{eq:Ghall}
\sigma^{(2)} := 4 \pi i \lal \chm^{(2)}_{A_0 A_1 A_{2}} \ral_{\psi} \in H_{G}^{4}(\text{pt},{\mathbb R}) \cong (\text{Sym}^2 \mfkg^*)^G.
\eeq
For a simple Lie algebra $\mfkg$ we have $(\text{Sym}^2 \mfkg^*)^G \cong \RR$. This agrees with the expectation that the response of a  gapped 2d system with a symmetry $G$ to a background gauge field is described by a 3d Chern-Simons action whose coefficient (level) is a topological invariant. 

\subsubsection{Special case: higher-dimensional generalizations of the Hall conductance}

More generally, let us take $\CM=\text{pt}$ but consider an arbitrary $d$ and an arbitrary compact Lie group $G$. Let $\psi$ be a $G$-invariant gapped state. As before,  $\derG^{\bullet}$ only has components of  even chain degree. Let $\chm^{(2k)}$ be the component of $\derG^{\bullet}$ of chain degree $2k$. For $d=2k$ the contraction with $A_0,...,A_d$ defines an invariant 
\beq
i\lal \chm^{(2k)}_{A_0...A_{2k}} \ral_{\psi} \in (\Sym^{k+1}\mfkg^*)^G \cong H_G^{2k+2}(\text{pt},{\mathbb R})
\eeq
% The factor $i$ is inserted to make the invariant real rather than purely imaginary.  

% For $G=U(1)$ we can identify $(\Sym^{k+1}\mfkg^*)^G \cong \mathbb{R}$ via the isomorphism that sends the minimal integral element $t^{\otimes (k+1)}$ to $1$. Then $\lal \chm^{(2k)}_{A_0...A_{2k}} \ral_{\psi}$ defines an $\RR$-valued invariant \beq \label{eq:Ghall}
% \sigma^{(2k)}:= 4 \pi i \lal \chm^{(2k)}_{A_0...A_{2k}} \ral_{\psi}.
% \eeq
% For ground states of gapped Hamiltonians in two dimensions $(\sigma^{(2)}/2\pi)$ coincides with the Hall conductance \cite{ThoulessHall}.

% \begin{remark} \label{rmk:HallQuant}
% It was shown by M. Hastings and S. Michalakis  \cite{hastingsmichalakis} that the Hall conductance of any gapped $U(1)$-invariant system on a torus of size $L$ with a non-degenerate ground state is integral up to $\OL$ corrections. Building on this work as well as \cite{bachmann2019many}, it was shown in \cite{ThoulessHall} that for invertible infinite-volume gapped states one has  $\sigma^{(2)} \in 2 \ZZ$ in the bosonic case and $\sigma^{(2)} \in \ZZ$ in the fermionic case.
% \end{remark}

\begin{remark} \label{rmk:effectiveCS}
It is often assumed that at long distances a gapped system with a Lie group symmetry can be effectively described by a field theory with the effective action that contains Chern-Simons terms. From that perspective the invariants $\lal \chm^{(2k)}_{A_0...A_{2k}} \ral_{\psi}$ should take values in the space of Chern-Simons forms of degree $2k+1$. By Chern-Weil theory, the space of Chern-Simons forms of degree $2k+1$ is $H^{2k+2}(BG,\RR)$ which agrees with the above result. 
%$HFor example, $\sigma^{(2k)}$ should correspond to the coefficients in front of $\int A d A ... d A$ for $U(1)$ gauge field $A$. 
\end{remark}

\subsubsection{Special case: Thouless pump and its generalizations}

Let $G=U(1)$ act trivially on $\CM$ and $\psi$ be a smooth family of $U(1)$-invariant states parameterized by $\CM$. We may identify $\mfkg^*\cong\RR$. Let $\cht^{(k)} \in ( \Omega^k(\CM,\No_{\psi}^{-k-1}) \otimes \mfkg^*)^G$ be the component of $\derG^{\bullet}$ of chain degree $k$ and form  degree $k$, $k=0,1,\ldots$. The Maurer-Cartan equation (\ref{eq:descentEquivariant}) implies an infinite number of equations for $\cht^{(k)}$, the first three of which look as follows:
\begin{align}
\derQ&=-\partial \cht^{(0)},\\
D\cht^{(0)}&=-\partial \cht^{(1)},\\
D\cht^{(1)}+ \{\chg^{(0)},\cht^{(0)}\}&=-\partial\cht^{(2)}.
\end{align}

For $d=1$ and $\CM = S^1$ the invariant $\int_{S^1} \lal \cht^{(1)}_{A_0 A_1} \ral_{\psi}$ computes the charge pumped through any point of the system in the process of the adiabatic evolution along a loop $S^1$ in the parameter space.

For $d=k$ and $\alpha\in H_k(\CM,\ZZ)$ the invariant  $\int_{\alpha} \lal \cht^{(k)}_{A_0 A_1...A_k} \ral_{\psi}$ is a  higher-dimensional generalization of the Thouless pump introduced in \cite{kapustin2020higherA}.

\subsection{A family of states with a nonzero higher Berry class}       \label{ssec:BerryFrom2dSPT}

Examples of states with a non-trivial equivariant Berry class (e.g. with a non-trivial Hall conductance) are well known. In this section we give an example of a family of 1d states with a nonzero non-equivariant higher Berry class. This family is associated with a compact connected Lie group $G$ and a $G$-invariant 2d state in the trivial phase but with a nonzero non-abelian Hall conductance (such 2d states are known as Symmetry Protected Topological states). Other examples of families with a nonzero higher Berry class recently appeared in \cite{flow}.

Let us consider a two-dimensional lattice system $(\Lambda,\SA)$ with an action of $G$ as defined in Section \ref{ssec:descendantsSymmetryG}.
Since $\derQ = \p \chq$ for $ \chq \in C_0(\mfkdal)\otimes \mfkg^*$, for any $\Gamma \subset \RR^2$ we can define an  action of $G$ generated by $\derQ|_{\Gamma} = \chq_{\Gamma}$. We denote the corresponding homomorphism by $\gamma_{\Gamma}:G \to \text{Aut}(\SA)$. We denote the left-invariant Maurer-Cartan form by $\theta := g^{-1} \, d g  \in \Omega^{1}(G,\mfkg)$.

Let $\psi_0$ be a factorized pure state. Let $\beta$ be an LGP such that the state $\omega = \psi_0 \circ \beta(1)$ is also $G$-invariant, with a $G$-action defined by $\derQ\in \left(\mfkDal\otimes \mfkg^*\right)^G$. The equivariant Berry class eq. (\ref{eq:Ghall}) for $\omega$ defines an invariant quadratic form $\sigma^{(2)}$ on $\mfkg$. This form is the non-abelian analog of the Hall conductance.

Let $(A_0,A_1,A_2)$ be a conical partition. We define a smooth family of states $(\CM,\psi)$ with $\CM = G$, such that at a point $g \in G$ we have $\psi_g = \psi_0 \circ \beta(1) \circ \gamma^{(g)}_{A_0} \circ \beta(1)^{-1}$. Note that though the state is defined on a two-dimensional lattice, far from the boundary of $A_0$ it is almost factorized. Hence we may call it ``quasi-one-dimensional".

We have
\beq
d \lal \CA \ral_{\psi} = \lal \{\beta(1)(\gamma_{A_0}^{-1}(\chq_{A_0})),\CA\} \ral_{\psi}.
\eeq
Since $\derQ \in (\mfkDal^{\omega}\otimes \mfkg^*)^G$, there are $G$-invariant $\tilde{\chq} \in C_0 (\mfkd^{\omega}_{al}) \otimes \mfkg^*$ and $\chk \in C_1 (\mfkdal) \otimes \mfkg^*$ such that $\chq = \tilde{\chq} - \p \chk$. Therefore $(\CM,\psi)$ is a family of quasi-one-dimensional states parallel with respect to $\derG = \beta(1)(\gamma_{A_0}^{-1}(\chk_{A_0 \overline{A}_0}(\theta))) \in \Omega^1(G,\mfkDal)$.

We have
\begin{multline}
\gamma_{A_0} \circ \beta(1)^{-1} \l d \derG + \frac12 \{\derG,\derG\} \r = \\ = \{\chq_{A_0}(\theta),\chk_{A_0 \overline{A}_0}(\theta)\} + \chk_{A_0 \overline{A}_0}(d \theta) + \frac12 \{\chk_{A_0 \overline{A}_0}(\theta),\chk_{A_0 \overline{A}_0}(\theta)\} = \\ = -
\frac12 \{\tchq_{\overline{A}_0}(\theta),\tchq_{A_0}(\theta)\} = -
\frac12 \{\tchq_{A_1}(\theta),\tchq_{A_0}(\theta)\}-
\frac12 \{\tchq_{A_2}(\theta),\tchq_{A_0}(\theta)\}.
\end{multline}
% \begin{multline}
% w^{(g)}(\derF) = - \{ \chq_A^a \theta_a, \chk_{A \overline{A}}^b \theta_b \} + \frac12 f^{abc} \chk^c_{A \overline{A}} \theta_a \theta_b + \frac12 \{ \chk_{A \overline{A}}^a \theta_a,\chk_{A \overline{A}}^b \theta_b \} = \\
% = \frac12 \l [\tchq_A^a,\tchq_A^b] - f^{a b c} \tchq_A^c \r \theta_a \theta_b = - \frac12 [\tchq^a_A,\tchq^b_{\overline{A}}] \theta_a \theta_b.
% \end{multline}
Hence the contraction of the Berry class $[f] \in H^3(\CM,i \RR)$ for a family $(\CM,\psi)$ with $[\CM]$ is given by
\begin{multline}
\lal [f],[\CM] \ral = \int_G f = \\ = \frac12  \lal \int_G \{\chq_{A_0}(\theta)-\chk_{A_0 \overline{A}_0}(\theta),\{\tchq_{A_1}(\theta),\tchq_{A_0}(\theta)\}\} \ral_{\omega}  = \\ =  -\frac14  \lal \int_G \{\tchq_{A_1}(\theta),\{\tchq_{A_0}(\theta),\tchq_{A_0}(\theta)\}\} \ral_{\omega} = \\ = - \frac{1}{6} \lal \int_G \{\tchq_{A_1}(\theta),\tchq_{A_0}(d\theta)\} \ral_{\omega} = \frac{1}{12 \pi i} \int_G \lal \theta , d \theta \ral_{\sigma^{(2)}}.
\end{multline}
Note that all derivations inside the averages $\lal\, \cdot\,\ral_{\omega}$ belong to the subspace of summable derivations $\mfkd^{\omega}_{al}$, and therefore the average is well-defined. If $\sigma^{(2)}$ does not vanish, the Berry class $[f]$ is non-trivial.

\newpage
\appendix
\section{{\Frechet} spaces}  \label{app:frechet}

A seminorm on a real or complex vector space $V$ is a map $V\ra\RR$, $v\mapsto \|v\|$ such that $\|v\|\geq 0$ for all $v\in V$, $\|v+v'\|\leq \|v\|+\|v'\|$ for all $v,v'\in V$, and $\|c v\|=|c|\|v\|$ for all $v\in V$ and all scalars $c$. A seminorm is a norm if $\|v\|=0$ implies $v=0$.

A \Frechet\ space is a complete Hausdorff topological vector space whose topology is determined by a countable family of seminorms $\|\cdot\|_\alpha,$ $\alpha\in\NN_0$. A base of neighborhoods of zero for such a topology consists of sets \begin{equation}
U_{(\alpha_1,\eps_1)\ldots(\alpha_n,\eps_n)}=\{v\in V:\, \|v\|_{\alpha_i}<\eps_i,\, i=1,\ldots,n\},
\end{equation}
where $n\in\NN$, $\alpha_i\in\NN_0$, and $\eps_i>0$. Any finite-dimensional Euclidean vector space is a special case where all the seminorms happen to be the same and equal to the Euclidean norm. 

In this paper we will be often dealing with a situation where the seminorms satisfy $\|\cdot\|_0\leq \|\cdot\|_1\leq\|\cdot\|_2\leq\ldots$. One calls such \Frechet\ spaces graded \Frechet\ spaces. Then the sets $U_{\alpha,\eps}=\{v\in V :\, \|v\|_\alpha <\eps\}$, $\alpha\in\NN_0,$ also form a base of neighborhoods of zero. A linear map $f:V\ra V'$ between graded \Frechet\ spaces is continuous iff for any $\alpha\in\NN_0$ there is a $\beta\in\NN_0$ and a constant $C_\alpha$ such that $\|f(v)\|_\alpha\leq C_\alpha \|v\|_\beta$. The Cartesian product of two graded \Frechet\ spaces $V,V'$ is also a graded \Frechet\ space, with the seminorms $\|(v,v')\|_\alpha= \|v\|_\alpha+\|v'\|'_\alpha$.

Different families of seminorms on $V$ may define the same topology; in that case one says that the families are equivalent. A family of seminorms $\|\cdot\|'_{\beta}$, $\beta\in\NN_0$ is equivalent to a family $\|\cdot\|_\alpha$, $\alpha\in\NN_0$, if for any $\beta$ there is an $\alpha$ and a constant $C_{\beta}$ such that $\|\cdot\|'_{\beta} \leq C_{\beta} \|\cdot\|_\alpha$, and vice versa, for any $\alpha$ there is an $\beta$ and a constant $C'_\alpha$ such that $\|\cdot\|_\alpha\leq C'_\alpha\|\cdot\|'_\beta$. 

If $X$ is a compact topological space   and $V$ is a graded \Frechet\ space, then the space $C(X,V)$ of continuous $V$-valued functions on $X$ is also a graded \Frechet\ space. The corresponding family of seminorms is $\|f\|_\alpha=\sup_{x\in X} \|f(x)\|_\alpha$, $\alpha\in\NN_0$. In the case when $X=[a,b]\subset\RR$ elements of $C([a,b],V)$ are called continuous curves in $V$. Most basic rules of calculus (such as the existence of integrals of continuous functions, the  Fundamental Theorem of Calculus, the Mean Value Theorem, the continuous dependence of integrals of continuous functions on parameters, etc.) hold in the setting of continuous functions on regions in $\RR^n$ valued in a \Frechet\ space $V$, see \cite{Hamilton} for a review. 

%If $X$ is non-compact, but contains a countable family of compact subsets $K_0\subset K_1\subset\ldots$ such that $\cup_n K_n=X$, then $C(X,V)$ is a \Frechet\ space whose topology can be define by a family of seminorms $\|f\|_{n,\alpha}=\sup_{x\in K_n} \|f(x)\|_\alpha $. One can turn this \Frechet\ space into a graded \Frechet\ space by defining an equivalent family of seminorms $\|f\|'_\beta=\sum_{n+\alpha=\beta}\|f\|_{n,\alpha}.$ In this paper we only consider non-compact spaces of the form $X=Y\times\RR$ where $Y$ is compact, and then set $K_n=Y\times [-n,n]$.

If $X$ is a non-compact topological space which is a union of compact subsets $K_0\subset K_1\subset \ldots$ such that $K_i\subseteq {\rm int}(K_{i+1})$ for all $i$, then $C(X,V)$ is a \Frechet\ space whose topology can be defined using seminorms $\|f\|_{n,\alpha}=\sup_{x\in K_n} \|f(x)\|_\alpha$, $n,\alpha\in\NN_0$. This topology is independent of the choice of the compact subsets $K_i$. Similar seminorms are also useful when defining \Frechet\  topology on spaces of smooth functions and differential forms on manifolds. Let $\CM$ be a compact manifold of dimension $M$ and $V$ be a graded  \Frechet\ space. A function $f:\CM\ra V$ is called smooth iff derivatives of all orders exist and are continuous. In particular, a smooth function $f:[a,b]\ra V$ is called a smooth curve in $V$. The space of smooth $V$-valued functions on $\CM$ is denoted $C^\infty(\CM,V)$. One can define a \Frechet\ topology on $C^\infty(\CM,V)$ as follows. First, we choose an atlas $\left\{\left(U^{(\sfa)},\kappa^{(\sfa)}:U^{(\sfa)}\ra\RR^M\right)\right\}$ for $\CM$. Then we define a family of seminorms labeled by a chart index $\sfa$, a compact subset $K_n^{(\sfa)}\subset U^{(\sfa)}$,  $k\in\NN_0$ , and $\alpha\in\NN_0$:
\beq
\|f\|_{\sfa,n,k,\alpha}=\max_\sfa \max_{|I|\leq k} \sup_{\m\in K_n^{(\sfa)}} \|\partial_I f(\m)\|_\alpha .
\eeq
Here $I=\{i_1,\ldots,i_M\}\in \NN_0^M$, is a multi-index and $|I|=\sum_{q=1}^M i_q$. Similarly, one can define a \Frechet\ topology on the space of smooth $p$-forms $\Omega^p(\CM,V)$ by regarding the restriction of a $p$-form $\omega$ to $U^{(\sfa)}$ as a collection of $\binom{M}{p}$ $V$-valued functions. One can show that the topologies thus defined do not depend on the choice of the atlas.

We will also need the notion of a possibly nonlinear smooth map from a \Frechet\ space $W$ to a \Frechet\ space $V$.  One says that $f:W\ra V$ is continuously differentiable if the directional derivative
\beq
df(w,\Delta w)=\lim_{t\ra 0}\frac{f(w+t\Delta w)-f(w)}{t}
\eeq
exists and is a continuous function on $W\times W$.
Iterating this definition, one says that a function $f:W\ra V$ is smooth if directional derivatives of all orders exist and are continuous functions on $W\times W\times\ldots$. 
In particular, continuous linear maps are smooth. So are continuous bilinear maps from $W\times W'$ to $V$. Composition of smooth maps is smooth, other basic rules of calculus also hold true \cite{Hamilton}. We will use these notions to construct certain smooth maps from a manifold $\CM$ to a \Frechet\ space $V$ as  compositions of smooth maps from $\CM$ to a \Frechet\ space $W$ and smooth maps from $W$ to $V$. 
%This is not the only possible notion of smoothness, however. $f:U\ra V$ is called conveniently smooth if it maps smooth curves in $U$ to smooth curves in $V$ \cite{convenient}. This condition is often easier to check than smoothness. If $W,V$ are general locally convex spaces, a conveniently smooth map between them need not be continuous and thus may not be smooth in the ordinary sense. But for \Frechet\ spaces convenient smoothness is equivalent to smoothness \cite{convenient}. So for our purposes we need not distinguish them.

\section{Algebra of almost local observables} \label{app:Aal}

% \begin{lemma}
% \label{lma:goodoldfr}
% Let $\CA$ be an almost almost local on $j$ observable. Then there is a monotonically positive decreasing (MDP) function $f(r)=\Or$, that depends on seminorms $\|\CA\|_{j,n}$ only, and a decomposition $\CA = \sum_{r=1}^{\infty} \CA^{(n)}$, such that $\CA^{(r)} \in \SA_{B_j(r)}$ and $\|\CA^{(r)}\| \leq f(r)$.
% \end{lemma}

% \begin{proof}

% \end{proof}

Fix $j\in\Lambda$. Since $\|\cdot\|'_{j,\alpha}$ are norms and $\SAal$ is complete in these norms, the topology induced by $\|\cdot\|'_{j,\alpha}$ makes $\SAal$ into a complete Hausdorff space, and therefore into a \Frechet\ space.
% \begin{prop}
% The topology on $\SAal\subset\SA$ defined by the family of norms $\lVert\cdot\rVert'_{j,\alpha}$, $\alpha\in\NN$, makes $\SAal$ into a \Frechet\ space.
% \end{prop}
% \begin{proof}
% Since $\|\cdot\|'_{j,\alpha}$ are norms, $\SAal$ is a Hausdorff space. To show that it is complete, consider a Cauchy sequence $\CA^{(n)}\in\SAal, n\in\NN$. Since $\SA$ is complete with respect to $h_{j,0}$, the sequence  converges to $\CA\in\SA$. To show that $\CA\in\SAal$, fix $\alpha\in \NN$ and $M>0$ such that $f_{j,\alpha}(\CA_n-\CA_m)\leq 1$ for all $n,m\geq M$. Then for all $n,m\geq M$ and all $r\geq 0$ we have
% \begin{equation}
% \inf_{\CB\in\SA_j(r)} \lVert \CA_n-\CA_m-B\rVert \leq (1+r)^{-\alpha}. 
% \end{equation}
% Passing to the limit $m\ra \infty$ we get $f_{j,\alpha}(\CA_n-\CA)\leq 1$ for all $n\geq M$.
% \end{proof}
\begin{prop}
$\SAal$ is a \Frechet\ algebra.
\end{prop}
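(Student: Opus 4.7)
The plan is to establish that multiplication $\SAal\times\SAal\to\SAal$ is jointly continuous in the \Frechet\ topology defined by the norms $\|\cdot\|'_{j,\alpha}$; completeness is already in hand, so this is what remains. Concretely, I will prove a submultiplicative estimate of the shape
\beq
\|\CA\CB\|'_{j,\alpha}\leq C_\alpha \|\CA\|'_{j,\alpha}\,\|\CB\|'_{j,\alpha},\qquad \alpha\in\NN,
\eeq
for some constants $C_\alpha$ depending only on $\alpha$, which in particular shows $\SAal$ is closed under multiplication and that the product is continuous.

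The key step is to control $f_j(\CA\CB,r)$. Given $r\geq 0$, pick local approximants $\CA'\in\SA_{B_j(r/2)}$ and $\CB'\in\SA_{B_j(r/2)}$ with $\|\CA-\CA'\|\leq f_j(\CA,r/2)+\eps$ and $\|\CB-\CB'\|\leq f_j(\CB,r/2)+\eps$ (taking $\eps\downarrow 0$ at the end). Then $\CA'\CB'\in\SA_{B_j(r/2)}\subset \SA_{B_j(r)}$, and from $\CA\CB-\CA'\CB'=(\CA-\CA')\CB+\CA'(\CB-\CB')$ together with $\|\CA'\|\leq\|\CA\|+f_j(\CA,r/2)$ we obtain
\beq
f_j(\CA\CB,r)\leq f_j(\CA,r/2)\|\CB\|+\bigl(\|\CA\|+f_j(\CA,r/2)\bigr)f_j(\CB,r/2).
\eeq
Multiplying by $(1+r)^\alpha\leq 2^\alpha(1+r/2)^\alpha$ and taking the supremum over $r$, I use $\sup_r(1+r/2)^\alpha f_j(\cdot,r/2)\leq \|\cdot\|'_{j,\alpha}$ and $f_j(\cdot,r)\leq \|\cdot\|$ to get
\beq
\sup_r(1+r)^\alpha f_j(\CA\CB,r)\leq 2^\alpha\|\CA\|'_{j,\alpha}\|\CB\|+2^{\alpha+1}\|\CA\|\,\|\CB\|'_{j,\alpha}.
\eeq
Adding the operator-norm submultiplicativity $\|\CA\CB\|\leq\|\CA\|\|\CB\|$ and using $\|\cdot\|\leq \|\cdot\|'_{j,\alpha}$ yields the required bound with $C_\alpha = 1+2^{\alpha+2}$. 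The involution is trivially continuous since $\|\CA^*\|=\|\CA\|$ and $f_j(\CA^*,r)=f_j(\CA,r)$, so $\|\CA^*\|'_{j,\alpha}=\|\CA\|'_{j,\alpha}$.

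Finally, since the characterizations in Lemma~\ref{lma:Aal} show that $\SAal$ is closed under multiplication (the product of two observables respectively $b$- and $b'$-localized at $j$ is readily seen to be $b''$-localized at $j$ with $b''\in\Orfm$ by the same splitting), the estimate above actually proves joint continuity of multiplication on the \Frechet\ space $\SAal$, completing the proof. The only real content is the halving trick $r\mapsto r/2$ in the approximation step; everything else is bookkeeping.
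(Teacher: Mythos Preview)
Your proof is correct, but the halving step is unnecessary and the paper's argument is slightly cleaner. The point you seem to have missed is that $\SA_{B_j(r)}$ is already an algebra, so if $\CA',\CB'\in\SA_{B_j(r)}$ then $\CA'\CB'\in\SA_{B_j(r)}$ automatically; there is no need to take approximants on the smaller ball $B_j(r/2)$. Using approximants on $B_j(r)$ directly, the paper obtains
\[
f_j(\CA\CB,r)\leq f_j(\CA,r)f_j(\CB,r)+\|\CA\|f_j(\CB,r)+\|\CB\|f_j(\CA,r)\leq \tfrac32\bigl(\|\CA\|f_j(\CB,r)+\|\CB\|f_j(\CA,r)\bigr),
\]
and hence $\|\CA\CB\|'_{j,\alpha}\leq \tfrac32\|\CA\|'_{j,\alpha}\|\CB\|'_{j,\alpha}$ with an $\alpha$-independent constant. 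Your route through $r/2$ costs you the factor $2^\alpha$ and produces $C_\alpha$ growing with $\alpha$, which is harmless for continuity but obscures the fact that the family of norms is (up to a fixed constant) submultiplicative. Otherwise the structure of the two arguments is the same.
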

\begin{proof}
For any $\CA,\CA'\in\SAal$ and all $r\geq 0$ one has
\begin{multline}
f_j(\CA\CA',r)\leq f_j(\CA,r) f_j(\CA',r)+\lVert \CA\rVert f_j(\CA',r)+\lVert \CA'\rVert f_j(\CA,r) \leq \\ \leq \frac{3}{2} \l \lVert \CA\rVert f_j(\CA',r)+\lVert \CA'\rVert f_j(\CA,r) \r.
\end{multline}
Therefore for any $\alpha\in\NN$ one has
% $$
% f_{j,\alpha}(\CA\CA')\leq \frac{5}{2}f_{j,\alpha}(\CA)\lVert \CA'\rVert + \frac{5}{2}f_{j,\alpha}(\CA')\lVert \CA\rVert
% $$
\beq
\|\CA \CA'\|'_{j,\alpha} \leq \frac{3}{2} \|\CA \|'_{j,\alpha} \|\CA'\|'_{j,\alpha}
\eeq
that implies joint continuity of the multiplication. 
\end{proof} 

\begin{prop}
The topology on $\SAal$ defined by the norms $\|\cdot\|'_{j,\alpha}$ for a fixed $j\in\Lambda$ is independent of the choice of $j$.
\end{prop}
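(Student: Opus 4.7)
The plan is to prove equivalence of the two families of norms directly by showing that for every $\alpha \in \NN$ there is a constant $C_{jk,\alpha}$ such that
\[
\|\cdot\|'_{k,\alpha} \leq C_{jk,\alpha}\, \|\cdot\|'_{j,\alpha},
\]
and symmetrically with $j$ and $k$ swapped. Since the bounds use the same $\alpha$ on both sides (only the constant depending on $|j-k|$ changes), this will immediately imply that the two topologies agree.

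The key geometric input is the triangle inequality: for any $r \geq 0$ we have $B_j(r) \subset B_k(r+|j-k|)$, so $\SA_{B_j(r)} \subset \SA_{B_k(r+|j-k|)}$. Passing to the infima in the definition~(\ref{eq:fjAr}) yields
\[
f_k(\CA, r+|j-k|) \leq f_j(\CA, r) \qquad \text{for all } r\geq 0,\ \CA\in\SA.
\]
Rewritten with $r' = r+|j-k|$, this says $f_k(\CA,r') \leq f_j(\CA, r'-|j-k|)$ whenever $r' \geq |j-k|$, while the trivial estimate $f_k(\CA,r') \leq \|\CA\|$ handles the remaining range $0 \leq r' < |j-k|$.

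Now I would multiply by $(1+r')^\alpha$ and take a supremum. Using the elementary inequality $(1+a+b)^\alpha \leq (1+a)^\alpha(1+b)^\alpha$ for $a,b\geq 0$, in the regime $r'\geq |j-k|$ we obtain
\[
(1+r')^\alpha f_k(\CA,r') \leq (1+|j-k|)^\alpha \, (1+r'-|j-k|)^\alpha f_j(\CA, r'-|j-k|) \leq (1+|j-k|)^\alpha \|\CA\|'_{j,\alpha},
\]
while in the regime $r' < |j-k|$ we bound $(1+r')^\alpha f_k(\CA,r') \leq (1+|j-k|)^\alpha \|\CA\|$. Combining and adding $\|\CA\|$ gives
\[
\|\CA\|'_{k,\alpha} \leq \bigl(1+(1+|j-k|)^\alpha\bigr)\|\CA\|'_{j,\alpha},
\]
which is the desired estimate. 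The reverse inequality follows by swapping the roles of $j$ and $k$.

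No step presents a genuine obstacle; the only mild point is remembering to split the supremum over $r$ into the two regimes $r\geq |j-k|$ and $r<|j-k|$, since the shift inequality for $f_k$ only applies in the first regime. Because $\Lambda$ is a Delone set, $|j-k|$ is finite for any pair of sites, so the constant $C_{jk,\alpha}$ is finite, and the equivalence of the two topologies on $\SAal$ follows.
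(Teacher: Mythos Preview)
Your proof is correct and follows essentially the same approach as the paper: both use the inclusion $B_j(r)\subset B_k(r+|j-k|)$ to get $f_k(\CA,r+|j-k|)\leq f_j(\CA,r)$, then bound $(1+r')^\alpha$ via $(1+a+b)^\alpha\leq (1+a)^\alpha(1+b)^\alpha$. The paper is slightly terser (it absorbs your small-$r$ regime into the same supremum and obtains the marginally sharper constant $(1+|j-k|)^\alpha$), but the argument is the same.
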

\begin{proof}
%It is easy to see that the seminorms $$ for different $j$ are %related as follows:
%$$
%f_{j',k}\leq (1+D)^k f_{j,0}+\sum_{n=0}^k \binom{k}{n} D^{k-n} %f_{j,n},
%$$
%where $D=|j-j'|$. This proves that the subspace $\SAal$ is independent of the choice of $j$, and moreover that the topologies defined by these two families of 
For any $j,k\in\Lambda$ with $R=|j-k|$ we have $f_j(\CA,r+R)\leq f_k(\CA,r)$. Hence
\beq
\|\CA\|'_{j,\alpha} \leq \|\CA\| + \sup_r (1+r+R)^{\alpha} f_k(\CA,r) \leq (1+R)^{\alpha} \|\CA\|'_{k,\alpha}.
\eeq
Therefore the families of norms $\{\|\cdot\|'_{j,\alpha}\}$ and $\{\|\cdot\|'_{k,\alpha}\}$ are equivalent.
\end{proof}

\begin{lemma} \label{lma:chainapprox}
Let $j,k\in\Lambda$. Suppose an observable $\CA \in \SA$ satisfies $f_{j}(\CA,r)\leq \eps_1$ and $f_{k}(\CA,r)\leq \eps_2$. Let $\CB \in \SA_{B_j(r) \cap B_j(r)}$ be a best possible approximation of $\CA$ on $B_j(r) \cap B_k(r)$. Then
\beq
\|\CA - \CB\| \leq \eps_1 + \eps_2 + \min(\eps_1,\eps_2).
\eeq
\end{lemma}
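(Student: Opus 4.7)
The plan is to exhibit an explicit observable in $\SA_{B_j(r)\cap B_k(r)}$ whose distance from $\CA$ realizes the bound $\eps_1+2\eps_2$ (and by symmetry another achieving $2\eps_1+\eps_2$), so that the optimal $\CB$ automatically satisfies the stated inequality.

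First I would pick $\CA_1\in \SA_{B_j(r)}$ and $\CA_2\in \SA_{B_k(r)}$ with $\|\CA-\CA_1\|\le \eps_1$ and $\|\CA-\CA_2\|\le \eps_2$. The infima in \eqref{eq:fjAr} are attained because $B_j(r)\cap\Lambda$ is finite (the lattice is uniformly discrete), but if desired one may instead take approximants with error $\eps_i+\delta$ and send $\delta\to 0$ at the end. Set $X=B_j(r)\cap B_k(r)$ and consider the candidate
\[
\tilde \CB := \Pi_{\overline{B_k(r)}}(\CA_1).
\]
Since $\CA_1\in\SA_{B_j(r)}$, applying the averaging $\Pi_{\overline{B_k(r)}}$ integrates out every site in $B_j(r)\setminus B_k(r)$, so $\tilde\CB$ has support in $B_j(r)\cap B_k(r)\cap\Lambda$, i.e.\ $\tilde \CB\in\SA_X$.

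Next I would compute using that $\Pi_{\overline{B_k(r)}}$ acts as the identity on $\CA_2\in\SA_{B_k(r)}$:
\begin{align*}
\CA-\tilde \CB &= (\CA-\CA_2) + \Pi_{\overline{B_k(r)}}(\CA_2-\CA_1)\\
&= (\CA-\CA_2) + \Pi_{\overline{B_k(r)}}(\CA_2-\CA) + \Pi_{\overline{B_k(r)}}(\CA-\CA_1).
\end{align*}
Since $\Pi_{\overline{B_k(r)}}$ does not increase the norm, taking norms yields $\|\CA-\tilde\CB\|\le \eps_2+\eps_2+\eps_1=\eps_1+2\eps_2$. By the symmetric construction $\tilde\CB':=\Pi_{\overline{B_j(r)}}(\CA_2)\in\SA_X$, one obtains $\|\CA-\tilde\CB'\|\le 2\eps_1+\eps_2$.

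Finally, because $\CB$ is the best approximation of $\CA$ in $\SA_X$,
\[
\|\CA-\CB\|\le \min\bigl(\|\CA-\tilde\CB\|,\|\CA-\tilde\CB'\|\bigr)\le \min(\eps_1+2\eps_2,\,2\eps_1+\eps_2)=\eps_1+\eps_2+\min(\eps_1,\eps_2),
\]
which is the claim. The only nontrivial point is verifying that $\Pi_{\overline{B_k(r)}}(\CA_1)$ really lies in $\SA_X$; this reduces to the standard property that $\Pi_\ell$ is a conditional expectation onto $\SA_{\Lambda\setminus\{\ell\}}$, combined with the fact that conditional expectations onto complementary site sets intersect in the obvious way, $\SA_{B_j(r)}\cap\SA_{B_k(r)}=\SA_{B_j(r)\cap B_k(r)}$. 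Everything else is a two-line triangle-inequality estimate.
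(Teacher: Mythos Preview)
Your proof is correct and follows essentially the same route as the paper's: both produce an explicit element of $\SA_{B_j(r)\cap B_k(r)}$ by applying a conditional expectation $\Pi$ to one of the two approximants $\CA_1,\CA_2$, then bound via the triangle inequality and invoke the $j\leftrightarrow k$ symmetry to pick up the $\min$. The only cosmetic difference is that the paper first passes through the best approximation of $\CA^{(1)}$ on $B_j(r)\cap B_k(r)$ and bounds that using the candidate $\Pi_{B_k\setminus B_{jk}}(\CA^{(2)})$, whereas you use the candidate $\Pi_{\overline{B_k(r)}}(\CA_1)$ directly; the resulting estimates are identical.
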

\begin{proof}
Let $\CA^{(1)}$ and $\CA^{(2)}$ be  best possible approximations of $\CA$ on $B_j:=B_j(r)$ and $B_k:=B_k(r)$, respectively, and let $\CB^{(1)}$ (resp. $\CB^{(2)}$) be a best possible approximation of $\CA^{(1)}$ (resp. $\CA^{(2)}$) on $B_{jk}:=B_j(r) \cap B_k(r)$. Then 
\begin{multline}
\|\CA-\CB\| \leq \|\CA-\CB^{(1)}\| \leq \|\CA-\CA^{(1)}\| + \|\CA^{(1)}-\CB^{(1)}\|\leq \\ \leq \eps_1 + \|\Pi_{B_k\backslash B_{jk}}(\CA^{(1)}-\CA^{(2)}) \| \leq 2\eps_1 + \eps_2.
\end{multline}
\end{proof}

\section{Brick expansion} \label{app:bricks}

A brick in $\RR^d$ is a subset of $\RR^d$ of the form $\{(x_1,\ldots,x_d): n_i\leq x_i < m_i,\ i=1,\ldots, d\}$, where $n_i$ and $m_i$ are integers satisfying  $n_i<m_i$. The empty subset is also regarded as a brick. A unit brick is a brick with $m_i=n_i+1$ for all $i$. The intersection of any two bricks is a brick. The set of all bricks in $\RR^d$ (including the empty set) is denoted  $\Br_d$. It is a poset with a partial order given by inclusion. This poset has a lower bound (the empty set) and is locally finite (i.e. for any two bricks $Y,Y'\in\Br_d$ the set $\{Z\in\Br_d, Y\leq Z\leq Y'\}$ is finite). 
%To each finite subset of $\Lambda$ we assign the intersection of all bricks which contain it. This gives a map ${\mathcal S}: {\rm Fin}(\Lambda)\ra \Br_d$. 

In Section 2, we defined the subspace $\mfkd^Y\subset\mfkd_Y$ as an orthogonal complement of 
\beq
\sum_{\substack{Z\in\Br_d\\ Z\subsetneq Y}}\mfkd_Z 
\eeq
with respect to the inner product $\langle\CA,\CB\rangle=\langle\CA^*\CB\rangle_\infty$. One can give a more explicit description of this subspace using a {\it Pauli basis} of $\SAl$. This is a basis obtained by choosing an orthonormal self-adjoint basis $\CE_j^k$, $k=0,\ldots,d_j^2-1$ for each $\SA_j$ (where the inner product is the same as above) so that $\CE^0_j$ is the identity element in $\SA_j$. The resulting basis elements of $\SAl$ can be labeled by functions $\nu:\Lambda\ra\NN_0$ with $\nu(j)<d^2_j$ which vanish outside of a finite set. The identity element in $\SAl$ corresponds to $\nu$ being identically zero. If we denote by $\supp(\nu) \in {\rm Fin}(\Lambda)$ the support of $\nu$, then a basis for $\mfkd_Y$ consists of those $\CE_\nu$ for which $\supp(\nu)$ is nonempty and $\supp(\nu)\subseteq Y$. A basis for $\mfkd^Y$ consists of those $\CE_\nu$ for which, in addition, $\supp(\nu)\nsubseteq Z$ for any brick $Z\subsetneq Y$. Note that since $\bD<1/2$, every unit brick (and therefore also every nonempty brick) contains at least one point of $\Lambda$. Therefore $\mfkd^Y$ is nonzero for every nonempty brick $Y$.
%for any $\CA \in \SAal$ and any $Y\in \Br_d$ we have
%\beq 
%\CA^{Y} := \sum_{\nu:\, {\mathcal S}(\supp(\nu)) = Y} \lal \CA \CE_{\nu} \ral_{\infty} \CE_{\nu}.
%\eeq
%On the other hand, the restriction of $\CA$ to $Y\in \Br_d$ can be written as
%\beq
%\CA\vert_Y=\sum_{\nu:\, {\mathcal S}(\supp(\nu)) \subseteq Y} \lal \CA \CE_{\nu} \ral_{\tau} \CE_{\nu}.
%\eeq
For any $\CA\in\mfkdl$ we have
\beq\label{eq:Yexpansion}
\CA\vert_Y=\sum_{Y'\in \Br_d, Y'\subseteq Y} \CA^{Y'} .
\eeq

Recall \cite{Jacobson} that to any locally finite poset $P$ which is bounded from below one can attach its \Mobius\ function $\mu_P: P\times P\ra\ZZ$. This function has the following property. Let  $f:P\ra V$ be any function with values in a vector space $V$. Let us define another function $g:P\ra V$ by $g(y)=\sum_{z\leq y} f(z)$. Then $f$ can be expressed through $g$ by
\beq
f(y)=\sum_{z\leq y} \mu_P(z,y) g(z).
\eeq
The function $\mu_P(z,y)$ is uniquely defined by this property if we demand $\mu(z,y)=0$ for $z\nleq y$. The \Mobius\ functions is multiplicative under Cartesian product: if $P,Q$ are locally finite posets bounded from below, and $P\times Q$ is given the obvious partial order, then
$\mu_{P\times Q}=\mu_P\cdot \mu_Q.$

\begin{prop}\label{lma:locmobiusestimate}
For any $\CA\in\mfkdl$ we have $\lVert \CA^Y\rVert\leq 4^d \lVert\CA\rVert$.
\end{prop}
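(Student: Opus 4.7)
The plan is to express the brick component $\CA^Y$ as a signed sum of at most $4^d$ conditional expectations $\CA|_X$, and then apply $\|\CA|_X\| \leq \|\CA\|$ termwise. This sidesteps having to compute the \Mobius\ function on $\Br_d$ explicitly (which is awkward because the brick poset is not quite a Cartesian product once the empty brick is adjoined).

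First, I would work in the Pauli basis $\{\CE_\nu\}$ of $\SAl$ described just above the proposition. To any multi-index $\nu$ with nonempty support, associate the minimal brick $Y(\nu)\in\Br_d$ containing $\supp(\nu)$: in each coordinate direction $i$ it is the interval $[\min_{j\in\supp(\nu)} j_i,\, \max_{j\in\supp(\nu)} j_i+1)$. The Pauli description of $\mfkd^Y$ recalled just before the proposition then reads $\CE_\nu\in\mfkd^Y$ iff $Y(\nu)=Y$, so that if $\CA=\sum_\nu a_\nu \CE_\nu$, then $\CA^Y=\sum_{\nu:\, Y(\nu)=Y} a_\nu \CE_\nu$.

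Next, for $Y=[n_1,m_1)\times\cdots\times[n_d,m_d)$ introduce the $2d$ faces $L_i:= Y\cap\{x:\, x_i=n_i\}$ and $R_i:= Y\cap\{x:\, x_i=m_i-1\}$. The key geometric observation is that, given $\supp(\nu)\subseteq Y$, one has $Y(\nu)=Y$ if and only if $\supp(\nu)$ meets each $L_i$ and each $R_i$: missing $L_i$ (resp.\ $R_i$) is equivalent to $\supp(\nu)$ being contained in the strictly smaller sub-brick $Y\setminus L_i$ (resp.\ $Y\setminus R_i$). Inclusion--exclusion over the $2d$ ``bad'' events of missing a face then yields
\[
\CA^Y \;=\; \sum_{S\subseteq\{L_1,R_1,\ldots,L_d,R_d\}} (-1)^{|S|}\, \CA\big|_{Y\setminus\bigcup_{F\in S} F},
\]
where $\CA|_X:=\Pi_{\overline{X}}(\CA)$; the degenerate value $\CA|_\emptyset=\langle\CA\rangle_\infty=0$ is harmless thanks to tracelessness.

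Having set this up, the bound is immediate: the sum has $2^{2d}=4^d$ terms, each term satisfies $\|\CA|_X\|\leq \|\CA\|$ because conditional expectation does not increase the norm, and the triangle inequality gives $\|\CA^Y\|\leq 4^d\|\CA\|$. The one subtlety I expect to require care is checking that the inclusion--exclusion formula remains correct when $Y$ is a unit interval in some direction, so that $L_i=R_i$ and some ``faces'' coincide: in that case several terms in the expansion collapse onto each other, but the redundancies cancel correctly precisely because $\CA|_\emptyset=0$. I would verify this in the one-dimensional unit-brick case by hand and then simply allow the same collapsing to occur in the general formula, so that the bound $4^d\|\CA\|$ is never violated (and is in fact not tight for unit bricks).
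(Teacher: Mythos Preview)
Your argument is correct and is essentially the paper's proof: both express $\CA^Y$ as an alternating sum of $4^d$ conditional expectations $\CA|_{Y'}$, where $Y'$ runs over the sub-bricks obtained by shrinking each side of $Y$ by $0$ or $1$ unit (the paper phrases this as M\"obius inversion on $\Br_d=\Br_1^{\times d}$, you as inclusion--exclusion over the $2d$ boundary faces, but the resulting terms coincide one-for-one). One small fix: your $L_i,R_i$ should be the unit-thickness slabs $Y\cap\{n_i\le x_i<n_i+1\}$ and $Y\cap\{m_i-1\le x_i<m_i\}$ rather than the hyperplane sections $\{x_i=n_i\}$, so that $Y\setminus L_i$ is actually a sub-brick and ``$\supp(\nu)$ meets $L_i$'' has the intended meaning for a general Delone set $\Lambda$.
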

\begin{proof}
Note first that the poset  $\Br_d$ is the Cartesian product of $d$ copies of $\Br_1$. The \Mobius\ function of $\Br_1$ is easily computed:
\beq
\mu_{\Br_1}([n',m'),[n,m)) = \begin{cases}
(-1)^{(n'-n)+(m-m')},\, \text{if } (n'-n),(m-m') \in \{0,1\},\\
0,\quad \text{otherwise}.
\end{cases}
\eeq
% $\mu_{\Br_1}([n',m'),[n,m))=(-1)^{(n'-n)+(m-m')}$ if $n'-n\in\{0,1\}$ and $m-m'\in \{0,1\}$ and vanishes otherwise. 
Therefore $\mu_{\Br_d}(Y',Y)\in \{0,1,-1\}$ for all $Y',Y\in \Br_d$ and is nonzero if and only if the integers $n_i,m_i$ and $n'_i,m'_i$ defining $Y$ and $Y'$ satisfy $m_i-m'_i\in \{0,1\}$ and $n'_i-n_i\in \{0,1\}$ for all $i$.
Applying the inversion formula to (\ref{eq:Yexpansion}) we get  
\beq \label{eq:mobiuslemma}
\CA^Y=\sum_{Y'\in \Br_d, Y'\subseteq Y} \mu_d(Y',Y)\CA\vert_{Y'},
\eeq
Since the sum on the r.h.s. contains exactly $4^d$ terms, and since $\lVert\CA\vert_Y\rVert\leq \lVert\CA\rVert$, we get the desired estimate.
\end{proof}

Similarly, if for any $\CA\in\mfkdal$ we define $\CA^Y$ to be the $\mfkd^Y$ component of $\CA\vert_Y\in\mfkd_Y$, then we have the following estimate.

\begin{prop}
\label{lma:bricklemma}
For any $a\in\Orfm$ and any $\CA \in \mfkdal$ which is $a$-localized at $j\in\Lambda$ we have 
\beq \label{eq:brickestimate}
\| \CA^Y \| \leq b(\diam(\{j\}\cup Y))
\eeq
where 
\beq
b(r)=2^{2d+1} a(\max(0,r/(2\sqrt d)-2)).
\eeq
\end{prop}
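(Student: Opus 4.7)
The strategy is to combine the M\"obius inversion from Proposition \ref{lma:locmobiusestimate} with a local approximation of $\CA$. First I would check that the formula $\CC^Y = \sum_{Y'\subseteq Y} \mu_d(Y',Y)\, \CC|_{Y'}$ extends verbatim from $\mfkdl$ to any $\CC\in\mfkdal$: this is because the underlying identity $\CC|_{Y'}=\sum_{Z\subseteq Y'} \CC^Z$ only uses the finite-dimensional decomposition $\mfkd_{Y'}=\bigoplus_{Z\subseteq Y'}\mfkd^Z$ and the fact that conditional expectations factor correctly through these sub-bricks. Since at most $4^d$ M\"obius coefficients are nonzero and each restriction is norm non-increasing, one obtains the uniform bound $\|\CC^Y\|\leq 4^d\|\CC\|$ for all $\CC\in\mfkdal$.

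Set $D = \diam(\{j\}\cup Y)$ and $r_0 = \max(0,\,D/(2\sqrt d) - 2)$. Using $a$-localization, for any $\varepsilon>0$ pick a traceless anti-self-adjoint $\CB\in\SA_{B_j(r_0)}$ with $\|\CA-\CB\|\leq a(r_0)+\varepsilon$ (take $\CB=0$ when $r_0=0$). Decomposing $\CA^Y = (\CA-\CB)^Y + \CB^Y$, the first term is bounded by $4^d(a(r_0)+\varepsilon)$ by the previous step, so the proof reduces to showing $\CB^Y = 0$. Sending $\varepsilon\to 0$ then yields $\|\CA^Y\|\leq 4^d a(r_0)\leq 2^{2d+1} a(r_0)$.

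To prove $\CB^Y=0$, argue contrapositively: if $\CB^Y\neq 0$, then the Pauli-basis expansion of $\CB$ must contain a monomial whose support sits inside $B_j(r_0)\cap Y\cap\Lambda$ but in no proper sub-brick of $Y$. In particular $B_j(r_0)\cap Y\cap\Lambda$ must contain lattice points $p^{i,-}, p^{i,+}$ with $p^{i,-}_i\in[n_i,n_i+1)$ and $p^{i,+}_i\in[m_i-1,m_i)$ in each direction $i$, writing $Y=\prod_i[n_i,m_i)$. Let $y^*$ be a vertex of $\overline Y$ farthest from $j$, and build $\hat y\in\RR^d$ coordinatewise by $\hat y_i = p^{i,-}_i$ if $y^*_i=n_i$ and $\hat y_i = p^{i,+}_i$ if $y^*_i=m_i$.

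Then $|\hat y_i-y^*_i|\leq 1$ for each $i$ gives $|\hat y - y^*|\leq\sqrt d$, while $p^{i,\pm}\in B_j(r_0)$ gives $|j_i-\hat y_i|\leq r_0$ and hence $|j-\hat y|\leq r_0\sqrt d$, so $|j-y^*|\leq(r_0+1)\sqrt d$. A short convexity check gives $|j-y^*|\geq D/2$: if $D$ is realized by two points $y_1,y_2$ of $\overline Y$, the triangle inequality at $j$ yields $\max_k|j-y_k|\geq D/2$, while if $D=|j-y|$ for some $y\in\overline Y$ then $|j-y^*|\geq|j-y|=D$. Combining, $r_0\geq D/(2\sqrt d)-1$, strictly exceeding the chosen $r_0\leq D/(2\sqrt d)-2$, a contradiction. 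The subtle point is the construction of $\hat y$: the naive estimate $D\leq\dist(j,Y)+\diam(Y)$ only gives $r_0\gtrsim D/(1+2\sqrt d)$, which is too weak when $j$ is far from $Y$; splicing coordinates of the $p^{i,\pm}$ according to the farthest vertex $y^*$ produces the correct $D/\sqrt d$ scaling, and the convexity bound $|j-y^*|\geq D/2$ closes the argument.
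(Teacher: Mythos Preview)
Your proof is correct and follows the same high-level strategy as the paper: approximate $\CA$ by a local $\CB\in\SA_{B_j(r_0)}$, bound $\|(\CA-\CB)^Y\|$ via the $4^d$ M\"obius estimate, and show $\CB^Y=0$. The difference is only in the geometry of this last step. The paper introduces an integer point $J$ with $|J-j|<1/2$, sets $K=\max_i\{|m_i-J_i|,|n_i-J_i|\}$, and builds an auxiliary brick $Z'=\prod_i[J_i-K+1,J_i+K-1)$; since $B_j(K-2)\subset Z'$ while $Y$ has a unit sub-brick outside $Z'$, the intersection $Y\cap Z'$ is a proper sub-brick of $Y$ containing the support of $\CB|_Y$, forcing $\CB^Y=0$. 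Your route instead extracts from the Pauli support of a nonzero $\CB^Y$ lattice points $p^{i,\pm}$ touching each pair of opposite faces of $Y$, splices their coordinates along the farthest vertex $y^*$ to produce $\hat y$ with $|j-\hat y|\leq r_0\sqrt d$ and $|\hat y-y^*|\leq\sqrt d$, and combines this with $|j-y^*|\geq D/2$ to contradict $r_0\leq D/(2\sqrt d)-2$. Both arguments are equally elementary; the paper's is a bit cleaner (one auxiliary brick, no coordinate splicing), while yours makes the role of the farthest vertex explicit. One small slip: when $r_0=0$ you take $\CB=0$, but then $\|\CA-\CB\|=\|\CA\|\leq 2a(0)$ rather than $a(0)+\varepsilon$, so your intermediate claim $\|\CA^Y\|\leq 4^d a(r_0)$ is off by a factor of $2$ there; the extra factor in $2^{2d+1}=2\cdot 4^d$ absorbs it, so the final bound survives.
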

\begin{proof}
% Let 
% \beq
% \kappa_d (f)(r)=4^d f(\max(0,r/(2\sqrt d)-2)).
% \eeq
% It is easy to check that $\kappa_d$ is a continuous map $\Orf\ra\Orf$ which maps $\Orfm$ to itself. We will now show that it satisfies the remaining condition.

Let $J=(J_1,\ldots,J_d)$ be a point of $\ZZ^d$ such that $|J-j|< 1/2$ (such a point exists because we normalized the metric so that $\bD<1/2$). Suppose a nonempty brick $Y\in \Br_d$ is defined by integers $n_i<m_i$, $i=1,\ldots,d$. Let $K=\text{max}\{|m_1-J_1|,...,|m_d-J_d|,|n_1-J_1|,...,|n_d-J_d|\}$. Let $Z$ be the brick $[J_1-K,J_1+K)\times\ldots \times [J_d-K,J_d+K)$ and let  $Z'=[J_1-K+1,J_1+K-1)\times\ldots \times [J_d-K+1,J_d+K-1)$. Obviously, $Z$ contains both $Y$ and $Z'$. It is also easy to see that $B_j(r)\subset Z$ for all $r< K-1/2$, and $B_j(r)\subset Z'$ for all $r< K-3/2$. Note also that $Y$ contains a unit brick which does not intersect $Z'$, and thus $Y$ contains at least one point of $\Lambda$ which is not in $Z'$.

Since ${\rm diam}(\{j\} \cup Y)\leq 2K \sqrt d,$ for $K< 2$ we get ${\rm diam}(\{j\} \cup Y)<4\sqrt d,$ and thus $b({\rm diam}(\{j\} \cup Y))=2^{2d+1} a(0)$. Note also that $\|\CA\vert_Y\|\leq \|\CA\|\leq 2 a(0)$. Therefore by Prop. \ref{lma:locmobiusestimate} we have  $\|\CA^Y\|\leq 2^{2d+1} a(0)$ and the condition (\ref{eq:brickestimate}) is satisfied.

It remains to consider the case $K\geq 2$.
For any $r>0$ let $\CB^{(r)}$ be a best possible approximation of $\CA$ on a ball $B_j(r)$. Since $B_j(K-2)\subset Z'$, $\CB^{(K-2)}\vert_Y$ is supported on the brick $Y\cap Z'$. Since $Y$ contains points of $\Lambda$ which are not in $Y\cap Z'$, $\left(\CB^{(K-2)}\right)^Y=0$ and thus  
$\CA^Y=(\CA-\CB^{(K-2)})^Y$. Therefore by Prop.  \ref{lma:locmobiusestimate} we get $\|\CA^Y\|\leq 2^{2d+1} a(K-2)$. Since ${\rm diam}(\{j\} \cup Y)\leq 2K \sqrt d,$ the condition (\ref{eq:brickestimate}) is satisfied.
\end{proof}

\begin{corollary}
For any $\CA\in \SAal$ the sum $\sum_{Y \in \Br_d} \CA^Y$ \Frechet-converges to $\CA$.
\end{corollary}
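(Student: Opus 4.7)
The plan is to first show that the series converges absolutely in every seminorm defining the \Frechet\ topology on $\SAal$, and then to identify the limit with $\CA$ using partial sums indexed by growing bricks.

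Fix a site $j\in\Lambda$ and an index $\alpha$. Since $\CA\in\SAal$, it is $a$-localized at $j$ for some $a\in\Orfm$, and Proposition \ref{lma:bricklemma} gives $\|\CA^Y\|\le b(D_Y)$ with $D_Y:=\diam(\{j\}\cup Y)$ and some $b\in\Orf$. Because $\CA^Y$ is supported on $Y\subseteq B_j(D_Y)$, one has $f_j(\CA^Y,r)=0$ for $r\ge D_Y$ and $f_j(\CA^Y,r)\le\|\CA^Y\|$ for $r<D_Y$, so
\[
\|\CA^Y\|'_{j,\alpha}\le\bigl(1+(1+D_Y)^\alpha\bigr)\,b(D_Y).
\]
The number of bricks $Y\in\Br_d$ with $D_Y\le D$ is bounded by a polynomial in $D$ (at most $O(D^{2d})$, since each such brick is determined by $2d$ integer corner coordinates of absolute value $O(D)$), so together with the rapid decay of $b$ this yields $\sum_Y\|\CA^Y\|'_{j,\alpha}<\infty$. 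Hence $\sum_Y\CA^Y$ converges absolutely, and therefore unconditionally, in the \Frechet\ topology to some $\tilde\CA\in\SAal$.

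To identify $\tilde\CA=\CA$, I would choose an increasing sequence of bricks $B_R\in\Br_d$ centered near $j$ with $\bigcup_R B_R=\RR^d$, and examine the finite partial sums $S_R:=\sum_{Y\subseteq B_R}\CA^Y$. Inside the finite-dimensional algebra $\SA_{B_R}$ the brick decomposition applies to $\CA\vert_{B_R}$, and for any $Z\subseteq B_R$ the compatibility $\Pi_{\overline{Z}}\circ\Pi_{\overline{B_R}}=\Pi_{\overline{Z}}$ yields $(\CA\vert_{B_R})\vert_Z=\CA\vert_Z$; taking $\mfkd^Z$-components gives $(\CA\vert_{B_R})^Z=\CA^Z$, so $S_R=\CA\vert_{B_R}=\Pi_{\overline{B_R}}(\CA)$. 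The standard almost-local estimate $\|\CA-\Pi_{\overline{B_R}}(\CA)\|\le 2\,f_j(\CA,r)$ valid for any $r$ with $B_j(r)\subseteq B_R$ then shows $S_R\to\CA$ in norm. But \Frechet\ convergence of the series also forces $S_R\to\tilde\CA$ in norm, so $\tilde\CA=\CA$.

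The only real content is the brick-counting estimate which, combined with Proposition \ref{lma:bricklemma}, upgrades the norm bound on $\CA^Y$ to absolute convergence in every \Frechet\ seminorm; after that, the identification of the limit is routine once one observes $S_R=\CA\vert_{B_R}$.
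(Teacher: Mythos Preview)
Your argument is correct and follows essentially the same route as the paper: bound $\|\CA^Y\|'_{j,\alpha}$ via Proposition~\ref{lma:bricklemma}, use the polynomial brick count to get absolute \Frechet\ convergence, and identify the limit through the partial sums $S_R=\CA\vert_{B_R}$. Your identification step is in fact more explicit than the paper's, which simply cites eq.~(\ref{eq:Yexpansion}); your verification that $(\CA\vert_{B_R})^Z=\CA^Z$ via $\Pi_{\overline{Z}}\circ\Pi_{\overline{B_R}}=\Pi_{\overline{Z}}$ is exactly what is needed to justify that citation for almost local (rather than strictly local) $\CA$.
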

\begin{proof}
By the above lemma, $\|\CA^Y\|_{j,\alpha}\leq (1+r)^\alpha b(r),$ where $r={\rm diam}(\{j\}\cup Y)$ and $b(r) \in \Orfm$. This proves convergence. By eq. (\ref{eq:Yexpansion}), the sum is $\CA$. 
\end{proof}

\section{Chains} \label{app:chains}

\subsection{Other families of norms} \label{app:normfamilies}

In the body of the paper $\mfkdal$ was defined as a completion of $\mfkdl$ with respect to a family of norms $\|\cdot\|_{j,\alpha}$ where $j\in\Lambda$ was fixed and $\alpha$ ranged over $\NN_0$ . Sometimes it is useful to consider two other natural families of norms on $\mfkdl$ labeled by the same data: 
\beq
\|\CA\|^{cev}_{j,\alpha} := \sup_{r} (1+r)^{\alpha} \|\CA-\CA|_{B_j(r)}\|.
\eeq
\beq
\|\CA\|^{br}_{j,\alpha} := \sup_{Y \in \mathbb{B}_d} (1+\diam(\{j\} \cup Y))^{\alpha} \|\CA^Y\|.
\eeq
For a fixed $j$, all three family of norms are non-decreasing with $\alpha$. 
\begin{prop} \label{prop:normequivalence}
The families of norms $\{\|\cdot\|_{j,\alpha}\}$, $\{\|\cdot\|^{cev.}_{j,\alpha}\}$, and $\{\|\cdot\|^{br}_{j,\alpha}\}$ define the same topology on $\mfkdl$, and thus give rise to the same completion $\mfkdal$. 
\end{prop}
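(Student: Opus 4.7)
The plan is to prove the equivalence in two stages: first that $\{\|\cdot\|_{j,\alpha}\}$ and $\{\|\cdot\|^{cev}_{j,\alpha}\}$ are equivalent with no shift of $\alpha$, then that $\{\|\cdot\|_{j,\alpha}\}$ and $\{\|\cdot\|^{br}_{j,\alpha}\}$ are equivalent up to a $d$-dependent shift of $\alpha$. Taken together these yield the equivalence of all three families, hence the coincidence of the induced topologies on $\mfkdl$ and of the resulting completions.

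For the first stage, one direction is immediate from the fact that $\CA|_{B_j(r)}=\Pi_{\overline{B_j(r)}}(\CA)$ lies in $\SA_{B_j(r)}$: then $f_j(\CA,r)\leq\|\CA-\CA|_{B_j(r)}\|$, hence $\|\CA\|_{j,\alpha}\leq\|\CA\|^{cev}_{j,\alpha}$. Conversely, for any $\CB\in\SA_{B_j(r)}$ one has $\Pi_{\overline{B_j(r)}}(\CB)=\CB$, and since $\Pi_{\overline{B_j(r)}}$ does not increase the norm, writing $\CA-\CA|_{B_j(r)}=(\CA-\CB)-\Pi_{\overline{B_j(r)}}(\CA-\CB)$ gives
\beq
\|\CA-\CA|_{B_j(r)}\|\leq\|\CA-\CB\|+\|\Pi_{\overline{B_j(r)}}(\CA-\CB)\|\leq 2\|\CA-\CB\|;
\eeq
taking the infimum over $\CB$ yields $\|\CA\|^{cev}_{j,\alpha}\leq 2\|\CA\|_{j,\alpha}$.

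For the second stage, the bound $\|\cdot\|^{br}_{j,\alpha}\leq C_\alpha\|\cdot\|_{j,\alpha}$ is a direct application of Prop.~\ref{lma:bricklemma} with $a=f_j(\CA,\cdot)\in\Orfm$: the proposition gives $\|\CA^Y\|\leq 2^{2d+1}f_j(\CA,r')$ with $r':=\max(0,\diam(\{j\}\cup Y)/(2\sqrt d)-2)$, and since $1+\diam(\{j\}\cup Y)\leq(1+4\sqrt d)(1+r')$, multiplying by $(1+\diam(\{j\}\cup Y))^\alpha$ and taking the supremum over $Y$ yields the desired inequality.

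For the reverse bound I would use the brick expansion to construct an explicit local approximation. Since $\CA\in\mfkdl$ the sum $\CA=\sum_Y\CA^Y$ is finite, so $\CA^{(r)}:=\sum_{Y\subseteq B_j(r)}\CA^Y$ lies in $\mfkd_{B_j(r)}\subseteq\SA_{B_j(r)}$, giving
\beq
f_j(\CA,r)\leq\|\CA-\CA^{(r)}\|\leq\sum_{Y\not\subseteq B_j(r)}\|\CA^Y\|\leq\|\CA\|^{br}_{j,\alpha'}\sum_{\diam(\{j\}\cup Y)\geq r}(1+\diam(\{j\}\cup Y))^{-\alpha'}.
\eeq
Organizing the tail sum by integer diameter shells and counting that a brick in $\Br_d$ is specified by $2d$ integers confined to an $O(R)$-sized box around $j$, one gets at most $O(R^{2d-1})$ bricks in each shell $[R,R+1)$, so the tail is $O((1+r)^{2d-\alpha'})$ provided $\alpha'>2d$. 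Choosing $\alpha'=\alpha+2d+1$ then yields $\|\CA\|_{j,\alpha}\leq C'_\alpha\|\CA\|^{br}_{j,\alpha+2d+1}$. The main obstacle is precisely this combinatorial tail estimate; the remaining steps are immediate from the definitions and Prop.~\ref{lma:bricklemma}.
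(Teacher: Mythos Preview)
Your proof is correct and follows essentially the same approach as the paper: both directions between $\|\cdot\|_{j,\alpha}$ and $\|\cdot\|^{br}_{j,\alpha}$ use exactly the same ingredients (Prop.~\ref{lma:bricklemma} one way, the brick tail estimate $\sum_{Y}(1+\diam(\{j\}\cup Y))^{-(2d+1)}<\infty$ the other). The only organizational difference is that you prove $\|\cdot\|^{cev}_{j,\alpha}\leq 2\|\cdot\|_{j,\alpha}$ directly via the contractivity of $\Pi_{\overline{B_j(r)}}$, whereas the paper closes the loop as a $3$-cycle by bounding $\|\CA-\CA|_{B_j(r)}\|\leq 2\sum_{Y\nsubseteq B_j(r)}\|\CA^Y\|$ and then applying the same tail estimate; your extra elementary step is a mild sharpening (no shift between $\|\cdot\|_{j,\alpha}$ and $\|\cdot\|^{cev}_{j,\alpha}$) but not a different route.
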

\begin{proof}
To show the equivalence of two non-decreasing families of norms, we need to show that each norm from the first family is upper-bounded by a multiple of a norm from the second family, and vice versa. 

It follows directly from the definition that $\|\CA\|_{j,\alpha}\leq \|\CA\|^{cev}_{j,\alpha}$ for all $\alpha\in\NN_0$ and all $\CA\in\mfkdl$.

Prop. \ref{lma:bricklemma} implies
\beq
\|\CA^Y\|\leq 2^{2d+1} f_j\left(\CA,{\rm max}\left(0,\frac{1}{2 \sqrt{d}} \diam(\{j\} \cup Y)-2\right)\right).
\eeq 
Hence $\|\CA\|^{br}_{j,\alpha}\leq  C_{\alpha}\|\CA\|_{j,\alpha}$ for some $C_{\alpha}>0$.

%Let us fix $j$ and $Y \in \mathbb{B}_d$. There is a ball $B_{j}(r)$ of radius $r>\frac13 \diam(\{j\} \cup Y)-2$, such that for any observable $\CA \in \SA_{B_{j}(r)}$ we have $\CA^Y = 0$. Therefore by Lemma \ref{lma:locmobiusestimate} for any $\CA \in \mfkdal$ we have
%\beq
%\|\CA^Y\| \leq 4^d f_j(\CA,r).
%\eeq
%Hence
%\beq \label{eq:bricknormbound}
%\|\CA\|^{br.}_{j,\alpha} \leq 4^d C_{\alpha} %\|\CA\|_{j,\alpha}
%\eeq
%for some constant $C_{\alpha}$ that depends on $\alpha$ only, that implies the dominance of (1) over (3).

Finally, let us show that $\|\CA\|^{cev}_{j,\alpha}$ is upper-bounded by a multiple of $\|\CA\|^{br}_{j,\alpha+2d+1}$. 
%Let $Z_r$ be a maximal brick that is contained inside $B_j(r)$. 
For any observable $\CA \in \mfkdl$ and any brick $Y\subset B_j(r)$ we have $\CA^Y\vert_{B_j(r)}=\CA^Y$. Therefore 
\beq
\CA-\CA\vert_{B_j(r)}=\sum_{Y\nsubseteq B_j(r)}\left(\CA^Y-\CA^Y\vert_{B_j(r)}\right),
\eeq
and thus
\beq
\| \CA - \CA|_{B_j(r)} \| \leq 2\sum_{Y \nsubseteq B_j(r)} \|\CA^Y\|.
\eeq
Since for any $Y\nsubseteq B_j(r)$ we have $\diam(\{j\} \cup Y)\geq r$, we get
\beq
\|\CA\|^{cev}_{j,\alpha} \leq 2 C_d  \|\CA\|^{br}_{j,\alpha+2d+1},
\eeq
where we have used
\beq
\sum_{Y  \in \Br_d} (1+\diam(Y \cup \{j\}))^{-(2d+1)} \leq C_d
\eeq
for some constant $C_d$ that depends on $d$ only.
\end{proof}

Similarly, in addition to $\{\|\cdot\|_{\alpha}\}$ on $C_q(\mfkdal)$ we can introduce families of norms:
\beq 
\|\cha\|^{cev}_{\alpha} = \sup_{a \in \{0,1,...,q\}}\sup_{j_0,...,j_q \in \Lambda} \| \cha_{j_0...j_q}\|^{cev}_{j_a,\alpha},
\eeq
\beq \label{eq:bricknorms}
\|\cha\|^{br}_{\alpha} = \sup_{a \in \{0,1,...,q\}}\sup_{j_0,...,j_q \in \Lambda} \| \cha_{j_0...j_q}\|^{br}_{j_a,\alpha}.
\eeq
Prop. \ref{prop:normequivalence} implies that all these families of norms are equivalent with the following dominance relations
\beq \label{eq:dominancerelations}
\|\cha\|_{\alpha} \leq \|\cha\|^{cev}_{\alpha},\quad \|\cha\|^{cev}_{\alpha} \leq C \|\cha\|^{br}_{\alpha+2d+1},\quad
\|\cha\|^{br}_{\alpha} \leq C_{\alpha} \|\cha\|_{\alpha}.
\eeq

% \begin{lemma}
% The families of norms (1) $\{\|\cdot\|_{\alpha}\}$, (2) $\{\|\cdot\|^{cev.}_{\alpha}\}$, (3) $\{\|\cdot\|^{br.}_{\alpha}\}$ are all equivalent.
% \end{lemma}

% \begin{proof}
% It follows directly from the definition . 

% On the other hand for any observable $\CA \in \mfkdal$ we have
% \beq
% f_j(\CA,r) \leq \sum_{Y \nsubseteq Z_r} \|\CA^Y\|,
% \eeq
% where $Z_r$ is a maximal brick that is contained inside $B_j(r)$, that implies the dominance of (2) over (1).
% \end{proof}

% For any $\chF\in C_0(\SAal)$ and any $\CA\in\SAal$ we defined
% \beq
% \delta_\chF(\CA)=\sum_j [\chF_j,\CA].
% \eeq
% The infinite sum on the r.h.s. converges in norm  because
% $$
% \lVert [\chF_j,\CA]\rVert \leq 2 \lVert \chF\rVert_0 f_{j_0}(\CA,r/2)+2 f_j(\chF_j,r/2)\lVert \CA\rVert + 2 f_{j_0}(\CA,r/2) f_j(\chF_j,r/2),
% $$
% where $j_0\in\Lambda$ is a fixed site and $r=|j-j_0|$. In fact, a stronger statement holds.

\subsection{Continuity of chain maps} \label{app:complexcontinuity}

\begin{prop} \label{prop:partialContinuity}
The boundary operator $\p_q: C_{q}(\mfkdal) \to C_{q-1}(\mfkdal)$ for $q \geq 0$ is well defined and continuous.
\end{prop}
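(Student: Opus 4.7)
The plan is to establish well-definedness and continuity separately for $q\geq 1$ and $q=0$, in both cases combining rapid decay of UAL chains with the polynomial growth of $\Lambda$.

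For $q\geq 1$, fix $\chA\in C_q(\mfkdal)$ and sites $j_1,\ldots,j_q\in\Lambda$. By skew-symmetry it suffices to control $\|\cdot\|_{j_1,\alpha}$ of the putative value $(\partial_q\chA)_{j_1\ldots j_q}=\sum_{j_0\in\Lambda}\chA_{j_0 j_1\ldots j_q}$. Given $r\geq 0$, I would split the sum into a near contribution $|j_0-j_1|\leq r/2$ and a far contribution $|j_0-j_1|> r/2$. In the near regime, each summand admits a best approximation in $\SA_{B_{j_1}(r)}$ with error $f_{j_1}(\chA_{j_0 j_1\ldots j_q},r)\leq \|\chA\|_\beta/(1+r)^\beta$, and the number of summands is at most $C_d\, r^d$ by uniform discreteness. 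In the far regime, for $L=|j_0-j_1|>r/2$ the balls $B_{j_0}(L/3)$ and $B_{j_1}(L/3)$ are disjoint, so Lemma \ref{lma:chainapprox} together with the traceless condition produces the operator-norm bound $\|\chA_{j_0 j_1\ldots j_q}\|\leq C\|\chA\|_\beta(1+L)^{-\beta}$; approximating each such term by zero and summing over spherical shells of width one with $O(L^{d-1})$ sites gives a tail of size $C\|\chA\|_\beta (1+r)^{d-\beta}$. Assembling the two estimates,
\begin{equation}
f_{j_1}\!\left((\partial_q\chA)_{j_1\ldots j_q},r\right)\leq C\|\chA\|_\beta\cdot\frac{r^d+1}{(1+r)^\beta},
\end{equation}
so choosing $\beta\geq\alpha+d+1$ yields $\|\partial_q\chA\|_\alpha\leq C_\alpha\|\chA\|_{\alpha+d+1}$. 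The same bounds imply that the partial sums defining $(\partial_q\chA)_{j_1\ldots j_q}$ are Cauchy in every \Frechet\ seminorm, hence $\partial_q\chA\in C_{q-1}(\mfkdal)$.

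For $q=0$ the target is $\mfkDal$ equipped with the brick norms $\|\cdot\|^{br}_\alpha$, and the formula reads $(\partial_0\chA)^Y=\sum_{j\in\Lambda}\chA_j^Y$. Here I would apply Proposition \ref{lma:bricklemma} to each $\chA_j$: it is $b$-localized at $j$ with $b(r)\leq\|\chA\|_\beta(1+r)^{-\beta}$, giving $\|\chA_j^Y\|\leq C\|\chA\|_\beta(1+\diam(\{j\}\cup Y))^{-\beta}$. The set of $j\in\Lambda$ with $\diam(\{j\}\cup Y)\leq L$ lies in a ball of radius $L$ around any point of $Y$ and so has cardinality at most $C_d L^d$; a summation-by-parts argument valid for $\beta>d$ then produces $\|(\partial_0\chA)^Y\|\leq C\|\chA\|_\beta(1+\diam Y)^{d-\beta}$, and taking $\beta\geq\alpha+d+1$ gives $\|\partial_0\chA\|^{br}_\alpha\leq C_\alpha\|\chA\|_{\alpha+d+1}$.

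The main technical point in both cases is the use of the double-localization bound from Lemma \ref{lma:chainapprox} (and its brick avatar, Proposition \ref{lma:bricklemma}) to control the operator norm of $\chA_{j_0 j_1\ldots j_q}$ when $j_0$ is far from the remaining indices; without this the naive bound $\|\chA_{j_0 j_1\ldots j_q}\|\leq 2\|\chA\|_0$ is insufficient to beat the volume growth of $\Lambda$. Once the double-localization estimate is in place, everything else reduces to the elementary convergence $\sum_{L\geq 1}L^{d-1}(1+L)^{-\beta}<\infty$ for $\beta>d$.
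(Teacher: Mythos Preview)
Your argument is correct. For $q=0$ it is essentially the paper's proof: the paper writes the same brick estimate $\sum_{j}(1+\diam(Y\cup\{j\}))^{-(d+1)}\leq C$ and obtains $\|\partial_0\chA\|^{br}_\alpha\leq C\|\chA\|^{br}_{\alpha+d+1}$, which is your bound after invoking the equivalence of $\|\cdot\|_\alpha$ and $\|\cdot\|^{br}_\alpha$.

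For $q\geq 1$ you take a genuinely different route. The paper works entirely in the brick norms and in one line bounds $\sum_{j_0}(1+\diam(Y\cup\{j_a\}))^\alpha\|\chA^{Y}_{j_0\ldots j_q}\|$ by $C\|\chA\|^{br}_{\alpha+d+1}$, implicitly combining the index-$0$ and index-$a$ brick bounds to control both the prefactor and the sum. You instead stay with the localization seminorms $f_{j_1}(\cdot,r)$, split the $j_0$-sum into near and far parts, and for the far part use Lemma~\ref{lma:chainapprox} on disjoint balls (plus tracelessness) to get $\|\chA_{j_0 j_1\ldots j_q}\|\leq C\|\chA\|_\beta(1+|j_0-j_1|)^{-\beta}$. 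Both methods yield the same index shift $\alpha\mapsto\alpha+d+1$. Your approach has the advantage of being self-contained in the $f_j$-seminorms without passing through the brick decomposition; the paper's approach is shorter once the brick norms and their equivalence (Proposition~\ref{prop:normequivalence}) are in place, and it treats $q=0$ and $q>0$ uniformly.
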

\begin{proof}
For $q=0$, let $\cha \in C_0(\mfkdal)$. Then
\beq
\|\p \cha\|^{br}_{\alpha} \leq \sup_{Y \in \mathbb{B}_d} \sum_{j \in \Lambda} (1+\diam(Y \cup \{j\}))^{\alpha}  \| \cha^{Y}_j\| \leq C \|\cha\|^{br.}_{\alpha+d+1}
\eeq
where we have used
\beq
\sum_{j  \in \Lambda} (1+\diam(Y \cup \{j\}))^{-(d+1)} \leq C
\eeq
for some constant $C$ that depends on the lattice only.

Similarly, for $q>0$, let $\cha \in C_q(\mfkdal)$. Then
\begin{multline} \label{eq:partialNormEstimate}
\|\p \cha\|^{br}_{\alpha} \leq \\ \leq \sup_{Y \in \mathbb{B}_d} \sup_{a \in \{1,...,q\}} \sup_{j_1,...,j_q \in \Lambda} \sum_{j_0  \in \Lambda} (1+\diam(Y \cup \{j_a\}))^{\alpha}  \| \cha^{Y}_{j_0...j_q}\| \leq \\ \leq C' \|\cha\|^{br}_{\alpha+d+1}.
\end{multline}
for some constant $C'$ that depends on the lattice only. Thus, the map $\p_q$ is well-defined and continuous for any $q \geq 0$. 
\end{proof}

Recall that on UL chains we have a map $h_q : C_q (\mfkdl) \to C_{q+1} (\mfkdl)$ for $q \geq -1$ defined by
\beq
h_q(\cha)_{j_0\ldots j_{q+1}}=\sum_{Y\in \Br_d}\sum_{k=0}^{q+1} (-1)^k \frac{\chi_Y(j_k)}{|Y\cap\Lambda|} \cha^{Y}_{j_0\ldots \widehat j_k \ldots j_{q+1}}
\eeq
for $q \geq 0$ and
\beq
h_{-1}(\derA)_{j_0}=\sum_{Y\in \Br_d} \frac{\chi_Y(j_0)}{|Y\cap\Lambda|} \derA^{Y}
\eeq
for $q=-1$, that gives a contracting homotopy for the augmented complex $C_{\bullet}(\mfkdl) \to \mfkDl$, i.e. $h_{q-1}\circ\partial_q +\partial_{q+1}\circ h_q={\rm id}$ and $\p_0 \circ h_{-1} = {\rm id}$.

\begin{prop}  \label{prop:contrHoContinuity}
The map $h_q$ extends to a continuous linear map $h_q : C_q (\mfkdal) \to C_{q+1} (\mfkdal)$, that gives a contracting homotopy for the augmented complex $C_{\bullet}(\mfkdal) \to \mfkDal$.
\end{prop}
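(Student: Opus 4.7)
The plan is to extend $h_q$ to UAL chains by the same formula (now an infinite sum over bricks), establish continuity in the brick norms $\|\cdot\|^{br}_\alpha$ of eq.~(\ref{eq:bricknorms}), and then deduce the homotopy identity by density of $C_\bullet(\mfkdl)$ in $C_\bullet(\mfkdal)$.

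First I would verify that $h_q(\cha)_{j_0\ldots j_{q+1}}$ is well-defined as a norm-convergent sum. Since $\cha^Y\in\mfkd^Y$ for each brick $Y$, and the subspaces $\mfkd^Y$ are mutually orthogonal, the partial sums form a Cauchy sequence in $\SA$ as soon as $\sum_Y \|\cha^Y_{\ldots}\|$ converges. Using any index $k'\neq k$ as reference in the brick norm, $\|\cha^Y_{j_0\ldots\widehat j_k\ldots j_{q+1}}\|\leq \|\cha\|^{br}_\beta(1+\diam(Y\cup\{j_{k'}\}))^{-\beta}$, and since only bricks with $j_k\in Y$ contribute, summability follows from the standard estimate $\sum_{Y: j_k\in Y}(1+\diam(Y))^{-\beta}<\infty$ for $\beta>2d$. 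Skew-symmetry of $h_q(\cha)$ is immediate from the $(-1)^k$ and the hat notation. The case $q=-1$ is analogous but easier.

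The key step is the continuity estimate. Fix $a\in\{0,\ldots,q+1\}$, $Z\in\Br_d$, and $j_0,\ldots,j_{q+1}\in\Lambda$. Orthogonality of the $\mfkd^Y$ implies that the brick-$Z$ component of $h_q(\cha)_{j_0\ldots j_{q+1}}$ equals
\beq
(h_q(\cha)_{j_0\ldots j_{q+1}})^Z=\sum_{k:\,j_k\in Z}\frac{(-1)^k}{|Z\cap\Lambda|}\,\cha^Z_{j_0\ldots\widehat j_k\ldots j_{q+1}}.
\eeq
If $j_a\notin Z$ then necessarily $k\neq a$ in every summand, so $j_a$ remains as an index of the surviving chain and one bounds each term by $\|\cha\|^{br}_\alpha(1+\diam(Z\cup\{j_a\}))^{-\alpha}$. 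If $j_a\in Z$ then $\diam(Z\cup\{j_a\})=\diam(Z)\leq\diam(Z\cup\{j_{k'}\})$ for any other index $k'\neq a$ (which exists since $q+2\geq 2$), so the same bound holds with $j_{k'}$ as reference. Summing at most $q+2$ terms and using $|Z\cap\Lambda|\geq 1$ gives
\beq
\|h_q(\cha)\|^{br}_\alpha\leq (q+2)\,\|\cha\|^{br}_\alpha,
\eeq
and an analogous argument yields $\|h_{-1}(\derA)\|^{br}_\alpha\leq\|\derA\|^{br}_\alpha$. By the equivalence of norm families (Prop.~\ref{prop:normequivalence}), these estimates imply that $h_q$ extends to a continuous linear map $C_q(\mfkdal)\to C_{q+1}(\mfkdal)$ for all $q\geq -1$.

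Finally, the homotopy identity $h_{q-1}\circ\partial_q+\partial_{q+1}\circ h_q=\Id$ (with $\partial_0\circ h_{-1}=\Id$) holds on the dense subspace $C_\bullet(\mfkdl)\subset C_\bullet(\mfkdal)$ by the computation already carried out in the main text. Since both $\partial$ and $h$ are continuous on the completions (by Prop.~\ref{prop:partialContinuity} and the estimate above), the identity extends to $C_\bullet(\mfkdal)\to\mfkDal$, so $h$ provides a continuous contracting homotopy there as well. The only real subtlety is the case analysis in the continuity estimate: the subcase $j_a\in Z$ with $k=a$ requires re-selecting a reference index $k'\neq a$, which is why the argument needs the auxiliary input that the surviving chain has at least one remaining index to play the role of a brick-norm reference.
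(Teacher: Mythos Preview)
Your proof is correct and follows the same approach as the paper: the key inequality $\|h_q(\cha)\|^{br}_\alpha\leq (q+2)\|\cha\|^{br}_\alpha$ (and $\|h_{-1}(\derA)\|^{br}_\alpha\leq\|\derA\|^{br}_\alpha$) is exactly what the paper records as its entire proof. You have supplied the details the paper omits---the brick-component identity $(h_q(\cha)_{j_0\ldots j_{q+1}})^Z=\sum_{k:j_k\in Z}(-1)^k|Z\cap\Lambda|^{-1}\cha^Z_{j_0\ldots\widehat j_k\ldots j_{q+1}}$, the case analysis on whether $j_a\in Z$, the well-definedness of the infinite brick sum, and the density argument for the homotopy identity---all of which are sound.
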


\begin{proof}
For $\cha \in C_{q}(\mfkdal)$ with $q\geq 0$ we have
\beq
\|h_q(\cha)\|^{br}_{\alpha} \leq (q+2) \|\cha\|^{br}_{\alpha}
\eeq
Similarly, for $\derA \in \mfkDal$
\beq
\|h_{-1}(\derA)\|^{br}_{\alpha} \leq \|\derA\|^{br}_{\alpha}.
\eeq
\end{proof}

\begin{prop} \label{prop:bracketContinuity}
The bracket $\{ \cdot, \cdot \}: C_{p}(\mfkdal) \times C_{q}(\mfkdal) \to C_{p+q+1}(\mfkdal)$ for $p,q \geq -1$ is well defined and jointly continuous.
\end{prop}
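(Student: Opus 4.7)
The statement has two cases: pure-chain brackets ($p,q \geq 0$) and mixed brackets involving a derivation ($p = -1$ or $q = -1$). I handle them separately.

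For pure chains, $\{\chf,\chg\}_{j_0\ldots j_{p+q+1}}$ is a finite signed sum of commutators $[\chf_{i_0\ldots i_p},\chg_{i_{p+1}\ldots i_{p+q+1}}]$ over permutations of the base sites. The estimate of $\|\{\chf,\chg\}\|_\alpha$ combines two ingredients. First, the commutator inequality
\beq
f_j([\CA,\CB],r) \leq 2\|\CA\|f_j(\CB,r) + 2\|\CB\|f_j(\CA,r) + 2f_j(\CA,r)f_j(\CB,r),
\eeq
which is essentially the \Frechet -algebra estimate from Appendix~B, applied at $j = j_a$ to each commutator. For the factor containing $j_a$ among its base indices, the localization seminorm $f_{j_a}$ is controlled directly by $\|\chf\|_\alpha$ or $\|\chg\|_\alpha$; for the ``other'' factor, I would use either the trivial bound $f_{j_a}(\CA,r) \leq \|\CA\|$ or the translation inequality $f_{j_a}(\CA,r) \leq f_{i_b}(\CA, r - |j_a - i_b|)$. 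Second, a global operator-norm decay
\beq
\|[\chf_{i_0\ldots i_p},\chg_{i_{p+1}\ldots i_{p+q+1}}]\| \leq C_\alpha (1+L/3)^{-\alpha}\bigl(\|\chf\|_0\|\chg\|_\alpha + \|\chg\|_0\|\chf\|_\alpha\bigr),
\eeq
where $L$ is the minimum distance between the $\chf$-sites and the $\chg$-sites, obtained by approximating each factor by a local observable on disjoint balls of radius $L/3$. Combining these two via a case-split on $r$ versus $L$ yields a uniform bound $\|\{\chf,\chg\}\|_\alpha \leq C'_\alpha \|\chf\|_\beta \|\chg\|_\beta$ for some $\beta$ depending on $\alpha$, establishing joint continuity on $C_p(\mfkdal) \times C_q(\mfkdal)$.

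For the mixed case, say $p = -1$, one has $\{\derF,\chg\}_{j_0\ldots j_q} = \sum_{Y\in\Br_d}[\derF^Y,\chg_{j_0\ldots j_q}]$. Each summand is estimated as in the pure-chain case, with $\derF^Y$ playing the role of a strictly local $0$-chain supported on $Y$. Absolute convergence of the sum over bricks follows from the brick-norm decay $\|\derF^Y\| \leq \|\derF\|^{br}_\beta (1+\diam Y)^{-\beta}$ together with the elementary summability $\sum_{Y\in\Br_d}(1+\diam(\{j\}\cup Y))^{-(d+1)} \leq C$ already used in the proof of Prop.~\ref{prop:partialContinuity}. This yields a bound of the form $\|\{\derF,\chg\}\|_\alpha \leq C''_\alpha \|\derF\|^{br}_{\alpha+d+1}\|\chg\|_{\alpha+d+1}$. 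The case $q = -1$ is symmetric, and $p = q = -1$ (bracket of two derivations) reduces to a double brick sum analyzed in the same way.

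The hard part will be controlling the \Frechet\ seminorm $\|\cdot\|_{j_a,\alpha}$ of a single commutator when the ``other'' factor's localization is tied to a remote base site $i_b$: naive estimates pick up an unwanted factor $(1+|j_a - i_b|)^\alpha$ and fail to be uniform in the site configuration. The resolution is precisely to exploit the global operator-norm decay in $L$, which cancels that growing factor in the small-$r$ regime ($r \lesssim L$), while the localization-based bound handles the complementary regime ($r \gtrsim L$). All the individual estimates are of the same flavor as Lemma~\ref{lma:chainapprox}, the \Frechet -algebra property of $\mfkdal$ in Appendix~B, and the continuity proof of $\partial$ in Prop.~\ref{prop:partialContinuity}.
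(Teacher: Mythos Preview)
Your outline is correct, but your route to the pure-chain estimate differs from the paper's. The paper never performs a case split in $r$ versus the site separation. Instead it proves a single two-center commutator estimate
\[
\|[\CA,\CB]\|_{j,\alpha}\leq C_\alpha\,\|\CA\|_{j,\alpha+2}\,\|\CB\|_{k,\alpha+2}
\]
for \emph{arbitrary} centers $j,k\in\Lambda$, via a shell decomposition: write $\CA=\sum_{n\geq 0}\CA^{(n)}$ with $\CA^{(n)}\in\SA_{B_j(n+1)}$ the difference of successive best approximations (so $\|\CA^{(n)}\|\leq 2f_j(\CA,n)$), and similarly $\CB=\sum_m\CB^{(m)}$ centered at $k$. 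The key point is that $[\CA^{(n)},\CB^{(m)}]=0$ whenever $|j-k|\geq n+m+2$, and otherwise it is supported in $B_j(2(n+m+2))$. Summing the resulting double series gives the displayed bound directly, with the $+2$ shift coming from $\sum_n (1+n)^{-2}<\infty$. From this two-center estimate the chain-bracket bound $\|\{\chf,\chg\}\|_\alpha\leq C_{\alpha,p,q}\|\chf\|_{\alpha+2}\|\chg\|_{\alpha+2}$ is immediate, since for each permutation term the center $j_a$ lies among the base sites of one factor and any base site of the other factor serves as $k$.

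Your approach achieves the same two-center control by a different mechanism: the \Frechet-algebra commutator inequality for the localization seminorm, supplemented by an operator-norm decay in the separation to kill the $(1+|j_a-i_b|)^\alpha$ loss in the small-$r$ regime. This is a legitimate argument and is arguably more elementary, but it is more laborious and yields a larger index shift than the paper's $\alpha\mapsto\alpha+2$. (One minor imprecision: your $L$ should be the distance from the chosen center $j_a$ to the nearest base site of the \emph{other} factor in a given permutation term, not the global minimum over all cross-pairs; otherwise the translation bound does not line up with the case split.) For the mixed and derivation-derivation cases your treatment is essentially the same as the paper's, which writes $\derA=\partial h_{-1}(\derA)$ and feeds the result through the chain-bracket estimate and the continuity of $\partial$.
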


\begin{proof}
Let $\CA,\CB \in \mfkdal$ and $j,k \in \Lambda$. By the definition eq. (\ref{eq:fjAr}) we can choose $\CA_n \in \SA_{B_j(n)}$, such that $\|\CA-\CA_n\| \leq f_{j}(\CA,n)$. Then for $\CA^{(n)} = \CA_{n+1}-\CA_{n}$ we have 
\beq 
\CA=\sum_{n\in\NN_0} \CA^{(n)},
\eeq
with $\CA^{(n)}\in \SA_{B_{j}(n+1)}$ and $\| \sum_{m \geq n}^\infty\CA^{(m)}\| \leq f_{j}(\CA,n)$. Similarly we define $\CB^{(n)}$ so that $\| \sum_{m\geq n}^{\infty} \CB^{(m)}\| \leq f_k(\CB,n)$.
Let $\mathcal{C}^{(n,m)}=[\CA^{(n)},\CB^{(m)}]$. Clearly, $\mathcal{C}^{(n,m)}=0$ if $|j-k| \geq n+m+2.$
On the other hand, for $|j-k| < n+m+2$ the observable $\mathcal{C}^{(n,m)}$ is localized on a ball of radius $2(m+n+2)$ centered at $j$ and has a norm bounded from above by $8 f_j(\CA,n) f_{k}(\CB,m)$. Then we have an estimate
% for $|j-k|\leq n+m+2$ we have an  estimate
\beq
\sup_r (1+r)^\alpha f_{j}(\mathcal{C}^{(n,m)},r)\leq 8(1+2(m+n+2))^\alpha f_j(\CA,n) f_k(\CB,m). 
\eeq
Hence
\beq
\|[\CA,\CB]\|_{j,\alpha}\leq 8 \sum_{n,m \in \NN_0} (1+2(m+n+2))^\alpha f_j(\CA,n) f_{k}(\CB,m)
\eeq
Moreover, for any $\alpha \geq 0$ we have an estimate
\beq
\sum_{n\in\NN_0} (n+1)^{\alpha} f_{j}(\CA,n)\leq \|\CA\|_{j,\alpha+2}\sum_{n\in\NN} 1/(n+1)^2.
\eeq
Therefore 
\beq
\|[\CA,\CB]\|_{j,\alpha} \leq C_{\alpha} \|\CA\|_{j,\alpha+2} \|\CB\|_{k,\alpha+2} 
\eeq
where $C_{\alpha}$ is some constant depending on $\alpha$ only.

Similarly, for any $\cha \in C_{p\geq0}(\mfkdal)$ and $\chb \in C_{q \geq0}(\mfkdal)$ we have
\beq  \label{eq:bracketNormEstimate}
\|\{\cha,\chb \} \|_{\alpha} \leq C_{\alpha,p,q} \|\cha\|_{\alpha+2} \|\chb\|_{\alpha+2}.
\eeq
for some constant $C_{\alpha,p,q}$.

For $\derA \in \mfkDal$ and $\chb \in C_q(\mfkdal)$ using eq. (\ref{eq:partialNormEstimate}) we have
\beq\label{eq:bracketnormderivation}
\|\{ \derA,\chb \} \|^{br}_{\alpha} = \|[\p( h_{-1}(\derA)),\chb]\|^{br}_{\alpha} \leq C_{\alpha,q} \| \derA \|^{br}_{\alpha+d+3} \| \chb \|_{\alpha+d+3}
\eeq
for some constant $C_{\alpha,q}$, where $h_{-1}$ is a contracting homotopy from Proposition \ref{prop:contrHoContinuity}.

Finally, for $\derA, \derB \in \mfkDal$ we have
\beq
\|\{ \derA,\derB \} \|^{br}_{\alpha} = \|\p \{ \derA,h_{-1}(\derB) \} \|^{br}_{\alpha} \leq C_{\alpha} \| \derA \|^{br}_{\alpha+2d+4} \| \derB \|^{br}_{\alpha+2d+4}.
\eeq
Thus the map $\{\cdot,\cdot\}$ is well-defined and continuous.

% First, let us assume $q,p \geq 0$. Let $\cha \in C_{q}(\mfkdal)$. We can choose approximating sequence $\cha^{(n)} \in C_{q}(\mfkdl)$ of range $n$ such that $\| \sum_{n \geq k} \cha^{(n)}_{j_0...j_q} \| \leq f(\cha_{j_0...j_q},k)$. Similarly, let $\chb \in C_{p}(\mfkdal)$ with an approximating sequence $\chb^{(n)} \in C_{p}(\mfkdl)$. 

% Let $\CB^{(n,m)}_{j_0...j_{p+q+1}} = [\cha^{(n)}_{j_0...j_q},\chb^{(m)}_{j_{q+1}...j_{p+q+1}}]$. 
% $\max_{a\leq q, b >q}|j_a-j_b|\geq n+m$ we have $\CB^{(n,m)}_{j_0...j_{p+q+1}}=0$, while for $\max_{a\leq q, b >q}|j_a-j_b| < n+m$ we have the following estimate
% \beq
% \sup_r (1+r)^{\alpha} f(\CB^{(n,m)}_{j_0...j_{p+q+1}},r) \leq 8 (1+2n+2m)^{\alpha} f(\cha,n) f(\chb,m)
% \eeq
% Hence for $\chc = \{\cha,\chb\}$ we have
% \beq
% \|\chc\|_{\alpha} \leq C \sum_{n,m \in \NN} (...)^d(1+2n+2m)^{\alpha} f(\cha,n) f(\chb,m)
% \eeq
% where $C$ is some constant that depends on $p,q,d$ and the lattice only.
\end{proof}

% \begin{corollary} \label{prop:bracketContinuity}
% The bracket $\{ \cdot, \cdot \}: \mfkDal \times C_{q}(\mfkdal) \to C_{q}(\mfkdal)$ for $q \geq -1$ is well defined, continuous and smooth.
% \end{corollary}

% \begin{proof}
% Let $h_{-1}$ be a contracting homotopy from Proposition \ref{prop:contrHoContinuity}.
% For $q \geq 0$, let $\cha \in C_q(\mfkdal)$ and $\chf \in C_0(\mfkdal)$. Using eq. (\ref{eq:partialNormEstimate}), eq. (\ref{eq:bracketNormEstimate}) and norm estimates from the proof of Proposition \ref{prop:normequivalence} we have
% \beq

% \eeq
% \end{proof}

\begin{prop}    \label{prop:contractionContinuity}
The contraction $\chb_{A_0...A_q} \in \mfkDal$ of $\chb \in C_q(\mfkdal)$ with regions $A_0,...,A_q$ is a well-defined and continuous.
\end{prop}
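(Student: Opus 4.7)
The plan is to prove, for every $\alpha' \in \NN_0$, a bound of the form
\[
\|\chb_{A_0\ldots A_q}\|^{br}_{\alpha'} \leq C_{\alpha'}\, \|\chb\|^{br}_{\alpha}
\]
with $\alpha$ chosen sufficiently large depending on $\alpha'$, $q$, and $d$. Such an inequality simultaneously gives absolute convergence of the defining sum for $\chb^Y_{A_0 \ldots A_q}$ in the finite-dimensional space $\mfkd^Y$ for each brick $Y$, membership of the resulting function $Y \mapsto \chb^Y_{A_0\ldots A_q}$ in $\mfkDal$, and continuity of the contraction map $C_q(\mfkdal) \to \mfkDal$; by Prop.~\ref{prop:normequivalence} the brick family defines the same \Frechet\ topology as the original seminorms, so working with $\|\cdot\|^{br}_\bullet$ throughout loses nothing.

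The key step is to convert the one-sided decay of $\chb$ into simultaneous decay in all indices by a geometric-mean trick. From the definition of $\|\cdot\|^{br}_{\alpha}$, for each $a \in \{0,\ldots,q\}$ one has
\[
\|\chb^Y_{j_0\ldots j_q}\| \leq (1+\diam(\{j_a\}\cup Y))^{-\alpha}\,\|\chb\|^{br}_{\alpha}.
\]
Taking the geometric mean over $a$ yields
\[
\|\chb^Y_{j_0\ldots j_q}\| \leq \|\chb\|^{br}_{\alpha}\prod_{a=0}^{q}(1+\diam(\{j_a\}\cup Y))^{-\alpha/(q+1)},
\]
and bounding each sum over $j_k \in A_k$ by the corresponding sum over $j_k \in \Lambda$ factorizes the estimate into $q+1$ independent copies of
\[
S(Y,\gamma) := \sum_{j \in \Lambda}(1+\diam(\{j\}\cup Y))^{-\gamma}, \qquad \gamma = \alpha/(q+1),
\]
so the whole problem reduces to a uniform-in-$Y$ upper bound on $S(Y,\gamma)$ that grows only polynomially in $(1+\diam Y)$.

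To estimate $S(Y,\gamma)$ I would split the sum according to whether $j \in Y$ or $j \notin Y$. The $j \in Y$ part has $|Y \cap \Lambda| \leq C_d(1+\diam Y)^d$ terms by the Delone property, each equal to $(1+\diam Y)^{-\gamma}$. For $j \notin Y$, the inequality $\diam(\{j\}\cup Y) \geq \max(\diam Y,\dist(j,Y))$ gives $(1+\diam(\{j\}\cup Y))^{-\gamma} \leq (1+\diam Y)^{-\gamma/2}(1+\dist(j,Y))^{-\gamma/2}$, and a standard shell-count (the number of lattice points with $\dist(j,Y) \in [r,r+1)$ is at most $C_d'(1+\diam Y + r)^{d-1}$), combined with the splitting of the resulting series at $r = \diam Y$, yields, for $\gamma > 2d$, a polynomial bound in $(1+\diam Y)$ of degree strictly less than $\gamma/2$. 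Putting everything together produces
\[
\|\chb^Y_{A_0\ldots A_q}\| \leq C\,\|\chb\|^{br}_{\alpha}\,(1+\diam Y)^{(q+1)d - \alpha/2},
\]
so any $\alpha$ with $\alpha/2 \geq \alpha' + (q+1)d$ and $\alpha/(q+1) > 2d$ gives the target estimate. The only step that requires care is the combination of the geometric-mean trick with the shell-counting bound: neither the one-sided brick decay at each index nor the decay in $\dist(j,Y)$ alone is strong enough to defeat the $(1+\diam Y)^{(q+1)d}$ growth produced by $q+1$ unrestricted lattice sums, and the two must be interleaved.
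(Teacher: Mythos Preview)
Your argument is correct. The geometric-mean step, the factorization into $q+1$ copies of $S(Y,\gamma)$, and the shell-counting estimate all go through; for $\gamma=\alpha/(q+1)>2d$ one indeed obtains $S(Y,\gamma)\leq C(1+\diam Y)^{d-\gamma/2}$ (in fact the sharper $S(Y,\gamma)\leq C$ already holds for $\gamma>d$, but your weaker bound suffices), and raising to the $(q+1)$-st power produces the exponent $(q+1)d-\alpha/2$ you claim.

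The paper takes a more economical route. Instead of treating all $q+1$ sums at once, it observes that a \emph{single} partial contraction $\chb\mapsto\chb_{A_0}$ maps $C_q(\mfkdal)\to C_{q-1}(\mfkdal)$ with $\|\chb_{A_0}\|^{br}_{\alpha}\leq C\,\|\chb\|^{br}_{\alpha+d+1}$, by precisely the same estimate that proves continuity of $\partial$ in Prop.~\ref{prop:partialContinuity}. Iterating $q+1$ times gives $\|\chb_{A_0\ldots A_q}\|^{br}_{\alpha}\leq C^{q+1}\|\chb\|^{br}_{\alpha+(q+1)(d+1)}$. The lemma underlying both arguments is the same, namely the uniform bound $\sum_{j\in\Lambda}(1+\diam(\{j\}\cup Y))^{-(d+1)}\leq C$; the paper invokes it once per index, implicitly using the minimum of the two brick bounds (at the summed index $j_0$ and at a surviving index $j_a$) to control the chain norm of the intermediate partial contraction, whereas your geometric mean replaces the minimum by a product over all indices. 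This is why you can do everything in one shot but lose a factor of two in the seminorm shift ($\alpha\approx 2\alpha'+2(q+1)d$ versus the paper's $\alpha'+(q+1)(d+1)$). Both approaches are valid; the paper's is shorter because it recycles the $\partial$-estimate rather than reproving the lattice sum from scratch.
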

\begin{proof}
By the same argument as in the proof of the Proposition \ref{prop:partialContinuity}, we have
\beq
\|\chb_{A_0}\|^{br.}_{\alpha} \leq C \|\chb\|^{br.}_{\alpha+d+1}
\eeq
where $\chb_{A_0} \in C_{q-1}(\mfkdal)$ is a (possibly partial) contraction defined by 
\beq
(\chb_{A_0})^Y_{j_1...j_q} := \sum_{j_0 \in A_0} \chb^Y_{j_0...j_q}. 
\eeq
Therefore
\beq
\|\chb_{A_0...A_q}\|^{br.}_{\alpha} \leq C^{q+1} \|\chb\|^{br.}_{\alpha+(q+1)(d+1)}.
\eeq
\end{proof}

\begin{prop} \label{prop:contractionConical}
For any conical partition $(A_0,...,A_d)$ of $\RR^d$ the contraction $(\cdot)_{A_0...A_d}$ is en element of $\mfkdal$ and defines a linear continuous map $(\cdot)_{A_0...A_d}: C_d(\mfkdal) \to \mfkdal$.
\end{prop}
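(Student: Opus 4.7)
The plan is to establish a geometric separation lemma for conical partitions, combine it with Lemma \ref{lma:chainapprox} and Prop.\ \ref{lma:bricklemma} to control the brick components of the contraction, and then conclude \Frechet\ convergence in $\mfkdal$ via the brick-based norms $\|\cdot\|^{br}_{j^*,\alpha}$ from Prop.\ \ref{prop:normequivalence}. The separation lemma I will need states: there exist $c, R_0 > 0$ depending on $(A_0,\ldots, A_d)$ such that for any $(j_0, \ldots, j_d) \in A_0 \times \cdots \times A_d$ with $\rho_{\max} := \max_a |j_a - p| \geq R_0$, one has $\diam\{j_0, \ldots, j_d\} \geq c \rho_{\max}$. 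Its proof uses the positive minimum angular separation $\delta > 0$ of the simplicial bases $\sigma_a \subset S^{d-1}$: by the law of cosines, for $j, j'$ in different cone parts at distances $\geq R_0$ from $p$, $|j-j'| \geq \max(|j-p|, |j'-p|) \sin\delta$, and the triangle inequality handles the residual case when some $j_b$ lies in the core $B_p(R_0)$. A direct corollary is that for a brick $Y \in \Br_d$ far from $p$ and any $j_a \in A_a$ whose cone does not contain $Y$, one has $\dist(j_a, Y) \geq c|Y-p|$.

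Next, for $\chb \in C_d(\mfkdal)$, I will combine Lemma \ref{lma:chainapprox} iterated over pairs with Prop.\ \ref{lma:bricklemma} to obtain $\|(\chb_{j_0\ldots j_d})^Y\| \leq b''(\diam(\{j_a\}\cup Y))$ for every $a$, where $b'' \in \Orfm$ is continuously determined by the seminorms of $\chb$. Taking the geometric mean of these $d+1$ bounds yields $\|(\chb_{j_0\ldots j_d})^Y\| \leq \prod_{a=0}^d b''(\diam(\{j_a\}\cup Y))^{1/(d+1)}$, which makes the sum over tuples factorize into $d+1$ one-dimensional sums over $j_a \in A_a$. The factor corresponding to the cone containing $Y$ decays only in $\diam(Y)$, while each of the remaining $d$ factors decays in $|Y-p| + \diam(Y)$ by the separation corollary. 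After the standard polynomial-order summation against lattice points, one obtains $\|(\chb_{A_0\ldots A_d})^Y\| \leq C_N \|\chb\|_\beta (1 + \diam(Y) + |Y-p|)^{-N}$ for any $N$, with $\beta$ depending on $N$ and $d$.

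Since $\diam(\{j^*\}\cup Y) \leq |j^*-p| + \diam(Y) + |Y-p|$, the above estimate implies $\|\chb_{A_0\ldots A_d}\|^{br}_{j^*,\alpha} \leq C_\alpha(1+|j^*-p|)^\alpha \|\chb\|_\beta < \infty$ for each $j^*$ and $\alpha$, which simultaneously shows that $\chb_{A_0\ldots A_d}$ lies in $\mfkdal$ (equivalently, is a summable element of $\mfkDal$) and that the map $C_d(\mfkdal) \to \mfkdal$ is continuous. The hard part will be the combinatorial--geometric bookkeeping of the middle paragraph: using the conical structure to upgrade the naive decay in $\diam(Y)$ (which one gets for any region from Prop.\ \ref{prop:contractionContinuity}) to decay in $\diam(Y) + |Y-p|$. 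Without this enhancement the contraction lands only in $\mfkDal$; the sharper almost-local conclusion genuinely requires the angular spread forced by the conical partition.
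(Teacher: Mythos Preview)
Your approach shares the key geometric input with the paper (the separation lemma for tuples in distinct cones), but the route you take is considerably more elaborate than necessary. The paper's proof works directly with $f_p(\chb_{A_0\ldots A_d},r)$: it splits the sum over tuples into those with all $j_a\in B_p(r/2)$ (there are at most $C r^{d(d+1)}$ of these, each contributing $f(\chb,r/2)$) and those with $\max_a|j_a-p|\geq r/2$, for which your separation lemma forces $\diam\{j_0,\ldots,j_d\}\geq C'R$ and hence, via Lemma~\ref{lma:chainapprox}, $\|\chb_{j_0\ldots j_d}\|\leq 3 f(\chb,C'R/2)$. Summing over dyadic shells in $R$ gives the bound $\|\chb_{A_0\ldots A_d}\|_{p,\alpha}\leq C_\alpha\|\chb\|_{\alpha+d(d+1)+2}$ directly, with no bricks and no factorization.

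There is also a genuine error in your geometric bookkeeping. The ``separation corollary'' you state --- that for $Y$ far from $p$ and $j_a$ in any cone not containing $Y$ one has $\dist(j_a,Y)\geq c|Y-p|$ --- is false. A brick near the common boundary of two cones can be arbitrarily close to points of both; worse, near a vertex of the spherical simplex a small brick can be close to $d$ of the $d+1$ cones at once. What \emph{is} true (and suffices) is that the $d+1$ closed simplices $\overline{\sigma_0},\ldots,\overline{\sigma_d}$ have empty common intersection, so by compactness there is $\delta>0$ such that every direction is at angular distance $\geq\delta$ from at least one $\sigma_a$; hence at least \emph{one} cone is uniformly far from $Y$. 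Consequently only one of your $d+1$ factors is guaranteed to decay in $|Y-p|$, not $d$ of them as you claim. Your factorized bound can still be pushed through with this weaker input (one factor gives $|Y-p|$-decay, the rest give $\diam(Y)$-decay via $\diam(\{j_a\}\cup Y)\geq\diam(Y)$), but the argument as written does not stand. Given how much simpler the paper's direct $f_p$ estimate is, I would recommend abandoning the brick route altogether.
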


\begin{proof}
Without loss of generality, we can assume that all $A_a$ are conical regions, since otherwise the map $(\cdot)_{A_0...A_d}$ differs from a contraction with a conical partition by a manifestly continuous linear map.

Let $p \in \RR^d$ be the apex of $(A_0,...,A_d)$. Note that the number of tuples $\{j_0,...,j_d\}$ such that $j_a \in A_a$ and $|j_a-p|\leq R$ is less than $C R^{d(d+1)}$ for some constant $C$. Also not that when at least one $j_a$ belongs to $B^c_p(R)$ we have $\diam(\{j_0,...,j_q\}) \leq C' R$ for some constant $C'$. In the latter case by Lemma \ref{lma:chainapprox} we have $\|\chb_{j_0...j_d}\| \leq 3 f(\chb,C'R/2)$ for any $\chb \in C_d(\mfkdal)$.

Hence for any $\chb \in C_d(\mfkdal)$ we have
\begin{multline}
f_p(\chb_{A_0...,A_d},r) \leq C (r/2)^{d(d+1)} f(\chb,r/2) + \\ + \sum_{n=0}^{\infty} C'' (1+n+r/2)^{d(d+1)} f(\chb,C'(n+r/2)/2)
\end{multline}
for some constant $C''$, that implies 
\beq
\|\chb_{A_0...,A_d}\|_{p,\alpha} \leq C_{\alpha} \|\chb\|_{\alpha + d(d+1) + 2}.
\eeq
\end{proof}

% \begin{lemma} \label{lma:contmapapproximation}
% Let $h:C_{q}(\mfkdal) \to C_{p}(\mfkdal)$ be a continuous linear map, and let $\Gamma_r = B_k(r)$ for some $k \in \Lambda$. Then for any $\cha \in C_{q}(\mfkdal)$ and any $j_0,...,j_p \in \Gamma_r \cap \Lambda$ we have $\|h(\cha|_{\Gamma_{2r}})_{j_0...j_p} - h(\cha)_{j_0...j_p}\| = \Or$.
% \end{lemma}

% \begin{proof}

% \end{proof}

\section{Some consequences of the Lieb-Robinson bound} 

\subsection{Reproducing functions} \label{app:reproducing}

We say that $f:\RR_{\geq 0} \to \RR_{\geq 0}$ is reproducing for $\Lambda$, if
\beq
C_f := \sup_{j,k \in \Lambda} \sum_{l \in \Lambda} \frac{f(|j-l|)f(|l-k|)}{f(|j-k|)} < \infty.
\eeq
Note that $1/(1+r)^{\nu}$ is reproducing for any $\Lambda \subset \RR^d$ if $\nu > d$, but not every $f \in \Orfm$ is reproducing.

\begin{lemma} \label{lma:weakreproducing}
For any $f \in \Orfm$ there  $\tilde f \in \Orfm$ that  upper-bounds $f$ and $A>0$ such that $\tilde f(r) \tilde f(s)/\tilde f(r+s) \leq A$ for all $r, s \in \RR_{\geq 0}$.
\end{lemma}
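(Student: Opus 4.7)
My plan is to construct $\tilde f$ of the form $\tilde f(r):=e\cdot e^{-\psi(r)}$, where $\psi:[0,\infty)\to[0,\infty)$ will be a non-decreasing concave function with $\psi(0)=0$ satisfying $\psi\leq -\log f+1$ and $\psi(r)/\log(1+r)\to\infty$. The point of this ansatz is that concavity together with $\psi(0)=0$ automatically forces subadditivity: writing $r=\tfrac{r}{r+s}(r+s)+\tfrac{s}{r+s}\cdot 0$ and applying concavity along the chord through the origin gives $\psi(r)\geq\tfrac{r}{r+s}\psi(r+s)$, and adding the symmetric bound for $\psi(s)$ yields $\psi(r+s)\leq\psi(r)+\psi(s)$. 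Therefore
\[
\tilde f(r)\tilde f(s)/\tilde f(r+s)=e\cdot\exp\!\bigl(\psi(r+s)-\psi(r)-\psi(s)\bigr)\leq e,
\]
so the constant $A=e$ will do. The remaining requirements on $\tilde f$ are immediate: monotonicity of $\psi$ makes $\tilde f$ non-increasing, the bound $\psi\leq -\log f+1$ yields $\tilde f\geq f$, and $\psi(r)/\log(1+r)\to\infty$ places $\tilde f$ in $\Orfm$.

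For the construction of $\psi$, set $F:=-\log f$, which is non-decreasing with $F(r)/\log(1+r)\to\infty$. I will choose inductively $R_0:=0$ and $R_n:=\max\bigl(2R_{n-1},\,\inf\{r\geq 0:F(r)\geq 2^n\}\bigr)$ for $n\geq 1$; each infimum is finite since $F\to\infty$. Then define $\psi(R_0):=0$, $\psi(R_n):=2^{n-1}$ for $n\geq 1$, and extend $\psi$ linearly on each interval $[R_{n-1},R_n]$. The inequality $R_n\geq 2R_{n-1}$ implies $R_{n+1}-R_n\geq R_n\geq 2(R_n-R_{n-1})$, so the consecutive slopes $2^{n-2}/(R_n-R_{n-1})$ form a non-increasing sequence and $\psi$ is concave. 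On $[R_{n-1},R_n]$ with $n\geq 2$ I obtain $\psi(r)\leq 2^{n-1}\leq F(R_{n-1})\leq F(r)$, while on $[0,R_1]$ the trivial bound $\psi\leq 1\leq F+1$ applies; hence $\psi\leq F+1$ globally.

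The main obstacle will be checking $\psi(r)/\log(1+r)\to\infty$. For any $M>0$ the hypothesis $F/\log(1+r)\to\infty$ provides $r_M$ with $F(r)\geq M\log(1+r)$ for $r\geq r_M$; applying this just to the left of $R_n^*:=\inf\{r:F(r)\geq 2^n\}$ (where $F<2^n$) gives $\log(1+R_n^*)\leq 2^n/M$ for $n$ sufficiently large, so letting $M\to\infty$ shows $\log(1+R_n^*)=o(2^n)$. The enlargement to $R_n=\max(2R_{n-1},R_n^*)$ only adds $\log 2+\log(1+R_{n-1})$ at each step, whose cumulative effect is $O(n)$---negligible compared to $o(2^n)$---so an easy induction gives $\log(1+R_n)=o(2^n)$ as well. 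Consequently $\psi(R_n)/\log(1+R_n)=2^{n-1}/\log(1+R_n)\to\infty$, and between breakpoints both $\psi$ and $\log(1+r)$ change only by bounded factors, so the limit extends to all $r\geq 0$.
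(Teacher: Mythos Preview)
Your overall strategy---build a nondecreasing concave $\psi$ with $\psi(0)=0$, bound it by $-\log f+1$, and deduce subadditivity from concavity---is sound, and the verification that $\psi(r)/\log(1+r)\to\infty$ is essentially correct. But the concavity step contains a genuine error. You claim that $R_n\geq 2R_{n-1}$ implies $R_{n+1}-R_n\geq R_n\geq 2(R_n-R_{n-1})$. The first inequality is fine; the second, however, rearranges to $2R_{n-1}\geq R_n$, which is the \emph{reverse} of your hypothesis and holds only with equality. Whenever $R_n^*>2R_{n-1}$ you get $R_n=R_n^*$ and hence $R_n-R_{n-1}>R_n/2$; if in addition $R_{n+1}^*\leq 2R_n$ so that $R_{n+1}=2R_n$, then $R_{n+1}-R_n=R_n<2(R_n-R_{n-1})$ and the slope strictly increases across $R_n$. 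Concretely, $R_1=1$, $R_2=10$, $R_3=20$ is consistent with your recursion and gives slopes $1/9$ on $[R_1,R_2]$ versus $1/5$ on $[R_2,R_3]$, so $\psi$ is not concave.

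The paper sidesteps this by not attempting a piecewise-linear construction at all: with $h:=-\log f$ it sets $\tilde h(r)=r\inf_{0<s\leq r}h(s)/s$, which makes $\tilde h(r)/r$ non-increasing by construction. Subadditivity then follows from exactly your chord argument, $\tilde h\leq h$ is immediate by taking $s=r$ in the infimum, and there is nothing to patch. If you want to rescue your approach you would need to modify the breakpoints (or the values $\psi(R_n)$) so that the successive gaps $R_{n+1}-R_n$ are forced to at least double---e.g.\ by enlarging the recursion to $R_n:=\max\bigl(R_{n-1}+2(R_{n-1}-R_{n-2}),\,R_n^*\bigr)$---and then redo the growth estimate with this stronger spacing.
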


\begin{proof}
Without loss of generality we can assume $f(0)=1/2$.

Let $h(r) := - (\log f(r))$. This is a monotonically increasing positive function. If $h(r)\geq Cr+D$ for some $C>0$, the function $\tilde f(r)=e^{-Cr-D}$ satisfies the required conditions (with $A=e^{-D}$). Otherwise, 
let $\tilde{h}(r) = r \inf_{0\leq s \leq r}(h(s)/s)$ and $\tilde{f} = e^{-\tilde{h}(r)}$. It is easy to check that $\tilde f$ satisfies the required conditions.
\end{proof}

\begin{lemma} \label{lma:Lambdareproducing}
Any $f \in \Orfm$ can be upper-bounded by $\tilde{f} \in \Orfm$ which is reproducing for $\Lambda$.
\end{lemma}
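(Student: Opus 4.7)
The plan is to combine Lemma \ref{lma:weakreproducing} with the fact that $(1+r)^{-(d+1)}$ is already reproducing for $\Lambda \subset \RR^d$. The idea is that if $g$ is an $\Orfm$ majorant of $f$ that decays at most exponentially (so $g(r)g(s) \le A g(r+s)$), then multiplying by a reproducing polynomial factor gives a reproducing function; but to preserve the majorization $\tilde f \geq f$, I first need to pre-amplify $f$ by the polynomial factor before applying Lemma \ref{lma:weakreproducing}.

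First I would define
\[
f_2(r) := \sup_{s\geq r}\, f(s)(1+s)^{d+1}
\]
and check that $f_2 \in \Orfm$: monotonicity is built in by the supremum, and the rapid decay of $f$ beats the polynomial factor so $f_2(r) \leq C_\alpha (1+r)^{-\alpha}$ for every $\alpha$. By construction $f_2(r) \geq f(r)(1+r)^{d+1}$. Next I apply Lemma \ref{lma:weakreproducing} to $f_2$ to produce $g \in \Orfm$ with $g \geq f_2$ and a constant $A>0$ such that $g(r)g(s) \leq A\, g(r+s)$ for all $r,s\geq 0$. Then I set
\[
\tilde f(r) := \frac{g(r)}{(1+r)^{d+1}}.
\]
Monotonicity and rapid decay of $\tilde f$ follow from those of $g$, and $\tilde f(r) \geq f_2(r)(1+r)^{-(d+1)} \geq f(r)$.

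It remains to verify the reproducing property. Using the triangle inequality $|j-l|+|l-k|\geq |j-k|$, monotonicity of $g$, and the sub-multiplicative bound from Lemma \ref{lma:weakreproducing}, I get pointwise
\[
\frac{g(|j-l|)g(|l-k|)}{g(|j-k|)} \leq \frac{A\, g(|j-l|+|l-k|)}{g(|j-k|)} \leq A.
\]
Therefore
\[
\sum_{l\in\Lambda}\frac{\tilde f(|j-l|)\tilde f(|l-k|)}{\tilde f(|j-k|)} \leq A\sum_{l\in\Lambda}\frac{(1+|j-k|)^{d+1}}{(1+|j-l|)^{d+1}(1+|l-k|)^{d+1}}.
\]
To bound the right-hand side uniformly in $j,k$, I split $\Lambda$ into $\{l: |j-l|\geq |j-k|/2\}$ and $\{l: |j-l|< |j-k|/2\}$. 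On the first set $(1+|j-l|)^{d+1}\geq (1+|j-k|/2)^{d+1}\geq 2^{-(d+1)}(1+|j-k|)^{d+1}$, so the contribution is at most $2^{d+1}\sup_k \sum_{l\in\Lambda}(1+|l-k|)^{-(d+1)}$; on the second set the triangle inequality forces $|l-k|\geq |j-k|/2$, giving a symmetric estimate. The finiteness of $\sup_k \sum_l (1+|l-k|)^{-(d+1)}$ follows from $\Lambda$ being a Delone set in $\RR^d$ (with at most $O(r^d)$ points in a ball of radius $r$ and the exponent $d+1 > d$).

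The main subtlety to watch is the order of operations: multiplying $f$ by $(1+r)^{d+1}$ destroys monotonicity, which is why the supremum definition of $f_2$ is needed before invoking Lemma \ref{lma:weakreproducing}. Once this is set up correctly, the rest is a routine two-region splitting argument, and the exponent $d+1$ is the minimal choice that makes both the local-finiteness sum and the worst-case ratio of polynomial factors simultaneously bounded.
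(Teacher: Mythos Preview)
Your proof is correct and follows essentially the same approach as the paper: combine the submultiplicative majorant from Lemma~\ref{lma:weakreproducing} with the reproducing polynomial weight $(1+r)^{-(d+1)}$. The only technical difference is the order of operations. The paper first applies Lemma~\ref{lma:weakreproducing} to $f$ to get $f'$, then sets $\tilde f(r)=B\sqrt{f'(r)}/(1+r)^{d+1}$ and chooses $B=\sqrt{\|f'\|_{2d+2}}$ to restore the majorization; you instead pre-amplify $f$ by $(1+r)^{d+1}$ (fixing monotonicity via the supremum), apply the lemma, and then divide. Your route avoids the square-root trick and the constant $B$, which is a bit cleaner, but both arguments are the same idea and of comparable length.
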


\begin{proof} 
By Lemma \ref{lma:weakreproducing} $f$ can be upper-bounded by $f' \in \Orfm$ such that for some $A>0$ we have $f'(r) f'(s) \leq A f'(r+s)$ for any $r, s \in \RR_{\geq 0}$. Let $\tilde{f}(r) = B \sqrt{f'(r)}/(1+r)^{d+1}$ with $B =\sqrt{\|f'\|_{2d+2}}$. Then $\tilde{f}(r) \geq f'(r)$ for all $r\geq 0$ and
\begin{multline}
\sum_{l \in \Lambda} \tilde{f}(|j-l|) \tilde{f}(|l-k|)  = B^2 \sum_{l \in \Lambda} \frac{\sqrt{f'(|j-l|)}}{(1+|j-l|)^{d+1}} \frac{\sqrt{f'(|l-k|)}}{(1+|l-k|)^{d+1}} \leq \\ \leq B^2 \sum_{l \in \Lambda} \frac{A^{1/2} \sqrt{f'(|j-k|)}}{(1+|j-l|)^{d+1}(1+|l-k|)^{d+1}} \leq \\ \leq C' A^{1/2} B^2 \frac{\sqrt{f'(|j-k|)}}{(1+|j-k|)^{d+1}}  = C' A^{1/2} B \tilde{f}(|j-k|)
\end{multline}
where $C'$ is some constant that depends on the lattice $\Lambda$ only.
\end{proof}

\begin{lemma} \label{lma:grepboundsfn}
For any sequence $\{f_n\},\,n \in \NN$ of functions $f_n \in \Orfm$ converging to $f \in \Orfm$, there is a reproducing for $\Lambda$ function $g \in \Orfm$ that upper-bounds $f$ and $f_n$ for any $n \in \NN$.
\end{lemma}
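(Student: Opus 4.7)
The plan is to produce a single function $F\in\Orfm$ dominating $f$ and every $f_n$ simultaneously, and then invoke Lemma \ref{lma:Lambdareproducing}.

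First I would use the fact that a Fréchet-convergent sequence is bounded in every seminorm. Since $f_n\to f$ in $\Orf$, the quantities
\beq
M_\alpha := \max\Bigl(\|f\|_\alpha,\ \sup_{n\in\NN}\|f_n\|_\alpha\Bigr)
\eeq
are finite for every $\alpha\in\NN_0$, and this unpacks to the uniform decay estimate $f(r),\, f_n(r) \leq M_\alpha(1+r)^{-\alpha}$ for all $r\geq 0$ and all $n$.

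Next I would set
\beq
F(r) := \max\Bigl(f(r),\ \sup_{n\in\NN} f_n(r)\Bigr).
\eeq
A pointwise supremum of non-negative monotonically decreasing functions is again non-negative and monotonically decreasing, so the same holds for $F$. The uniform bound from the previous step gives $F(r)\leq M_\alpha(1+r)^{-\alpha}$ for every $\alpha$, so $\|F\|_\alpha<\infty$ for all $\alpha$ and hence $F\in\Orfm$. By construction $F\geq f$ and $F\geq f_n$ for every $n\in\NN$.

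Finally, applying Lemma \ref{lma:Lambdareproducing} to $F$ produces a function $g\in\Orfm$ which is reproducing for $\Lambda$ and satisfies $g\geq F$, and so dominates $f$ and every $f_n$, as required. There is no real obstacle here; the only point needing care is that Fréchet convergence, rather than mere pointwise convergence, is what supplies the uniform seminorm control needed to keep $F$ inside $\Orfm$.
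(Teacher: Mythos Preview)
Your proof is correct and follows essentially the same approach as the paper: build a common upper bound in $\Orfm$ using the uniform seminorm control afforded by Fr\'echet convergence, then invoke Lemma~\ref{lma:Lambdareproducing}. Your version is in fact a bit cleaner, since you include $f$ in the supremum from the outset and observe directly that a pointwise supremum of monotone decreasing non-negative functions lies in $\Orfm$, whereas the paper only gets $\tilde g\in\Orf$ and then upgrades.
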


\begin{proof}
% If that is true for $f=0$, then for $f \neq 0$ we can 
% It is enough to show that for $f=0$. Indeed, if $f \neq 0$ there is a sequence $h_n \geq |f-f_n|$ that converges to $0$.
Let $\tilde{g}(r) := \sup_{n \in \NN} f_n(r)$. We have an estimate
\beq
\|\tilde g\|_\alpha\leq \sup_{n\in\NN}\|f_n-f\|_\alpha+\|f\|_\alpha
\eeq
Since $f_n$ converges to $f$, this implies $\tilde g\in\Orf$ and thus can be upper-bounded by a function from $\Orfm$. 
%Since the sequence $f_n$ converges, the supremum is achieved for some $n=n_r$ and therefore $\|\tilde{g}\|_{\alpha} = \|f_{n_{\alpha}}\|_{\alpha} < \infty$ for some $n_{\alpha}$. Thus $\tilde{g} \in \Orf$ and can be upper-bounded by a function from $\Orfm$. 
Hence by Lemma \ref{lma:Lambdareproducing} there exists $g \in \Orfm$ which is reproducing and upper-bounds $\tilde{g}$ and $f$.
\end{proof}

\subsection{Locally generated automorphisms}\label{app:LRbound}

\begin{prop} \label{lma:alphaODE}
For any $\derG \in C([0,1],\mfkDal)$ there is a family of automorphisms $\alpha_{\derG}:[0,1] \to \Aut(\SA)$ such that $\forall\CA \in \SAal$ and  $\forall s\in [0,1]$ we have $\alpha_{\derG}(s)(\CA) \in \SAal$ and the function $\alpha_{\derG}(\CA):[0,1] \to \SAal, s\mapsto \alpha_\derG(s)(\CA)$ is continuously differentiable and satisfies
\beq\label{eq:alphaGdiffequation}
\frac{d \alpha_{\derG}(s)(\CA)}{ds} = \alpha_{\derG}(s)(\derG(s)(\CA)).
\eeq
\end{prop}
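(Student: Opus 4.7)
The plan is to construct $\alpha_{\derG}(s)$ by approximation via uniformly local truncations and to establish preservation of $\SAal$ together with the ODE using the Lieb-Robinson bound in the form of \cite{nachtergaele2006propagation,matsuta2017improving}. First I would define, for each $n \in \NN$, a UL approximation $\derG_n \in C([0,1], \mfkDl)$ by truncating the brick expansion of $\derG$ to bricks of diameter at most $n$. Since $\derG_n(s)$ has finite range, the ODE $\frac{d}{ds}\alpha_{\derG_n}(s)(\CA) = \alpha_{\derG_n}(s)(\derG_n(s)(\CA))$ can be solved on each local subalgebra $\SA_X$ (support spreads outward at most at the finite speed $n$), yielding a norm-continuous family of $*$-automorphisms $\alpha_{\derG_n}(s)$ of $\SA$.

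Next, using Lemma \ref{lma:grepboundsfn} I would pick a reproducing majorant $g \in \Orfm$ for the localization profile of $\derG$ (valid uniformly in $s \in [0,1]$ by compactness). The refined Lieb-Robinson bound of \cite{matsuta2017improving} applied to the interactions $\derG_n$ then yields, for any $a$-localized $\CA \in \mfkdal$ at $j$ and any $\CB \in \SA_{B_j^c(r)}$,
\beq
\|[\alpha_{\derG_n}(s)(\CA), \CB]\| \leq C(s, a, g)\, \tilde{b}(r)\, \|\CB\|
\eeq
with $\tilde{b} \in \Orfm$ independent of $n$. Combined with the estimate (\ref{eq:commutatorlowerbound}) this implies $f_j(\alpha_{\derG_n}(s)(\CA), r) \leq C(s)\,\tilde{b}(r)$, so each $\alpha_{\derG_n}(s)(\CA)$ is $\tilde{b}$-localized at $j$ uniformly in $n$, and by characterization (2) of Lemma \ref{lma:Aal} the Fr\'echet seminorms $\|\alpha_{\derG_n}(s)(\CA)\|_{j,\alpha}$ are uniformly bounded in $n$ and jointly continuous in $s$.

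To pass to the limit I would use the Duhamel formula
\beq
\alpha_{\derG_m}(s)(\CA) - \alpha_{\derG_n}(s)(\CA) = \int_0^s \alpha_{\derG_m}(s-u)\big((\derG_m(u) - \derG_n(u))(\alpha_{\derG_n}(u)(\CA))\big)\, du.
\eeq
Only bricks of diameter greater than $\min(m,n)$ contribute to $\derG_m - \derG_n$, and by the $\Orfm$ decay built into $\mfkDal$ together with the seminorm bounds from the previous stage and the continuity estimate (\ref{eq:bracketnormderivation}), the integrand tends to zero faster than any inverse power of $\min(m,n)$ in every seminorm $\|\cdot\|_{j,\alpha}$. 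Hence $\{\alpha_{\derG_n}(s)(\CA)\}$ is Cauchy in the Fr\'echet topology of $\SAal$ and defines a limit $\alpha_\derG(s)(\CA) \in \SAal$. Because each $\alpha_{\derG_n}(s)$ is an isometric $*$-automorphism, the limit extends by norm density to a $*$-automorphism of $\SA$. Taking $n \to \infty$ in the integral form $\alpha_{\derG_n}(s)(\CA) = \CA + \int_0^s \alpha_{\derG_n}(u)(\derG_n(u)(\CA))\, du$ yields the analogous identity for $\alpha_\derG$, and continuity of the integrand (in the Fr\'echet topology of $\SAal$) then gives both continuous differentiability in $s$ and the desired ODE.

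The main obstacle is the uniform locality estimate in the second stage: an ordinary Lieb-Robinson bound tailored to exponentially decaying interactions gives only exponential decay, but to remain inside $\SAal$ we need rapid (faster than any polynomial) decay derived from rapidly decaying interactions. The refinement of \cite{matsuta2017improving} provides exactly this, and the reproducing-majorant construction of Lemma \ref{lma:grepboundsfn} is what allows the bound to be iterated and integrated in $s$ without loss of decay class. Once this uniform locality control is secured, the remaining stages are standard Duhamel and density arguments.
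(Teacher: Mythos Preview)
Your approach is correct in spirit and close to the paper's, but differs in two technical choices and contains one minor slip.

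\textbf{Truncation and convergence.} The paper truncates spatially, setting $\derG^{(n)}=\derG|_{\Gamma_n}$ for an exhausting sequence of bricks $\Gamma_n$, so that each $\alpha_{\derG^{(n)}}$ is a genuinely finite-volume dynamics; it then invokes Theorem~2.2 of \cite{nachtergaele2006propagation} for existence of the norm limit $\alpha_\derG(s)\in\Aut(\SA)$. You instead truncate by range (bricks of diameter $\le n$), obtaining infinite-volume but finite-range dynamics, and prove Cauchy convergence directly via Duhamel. Both schemes work; yours is a little more self-contained, the paper's lets one cite an existing theorem for the limit.

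\textbf{Which Lieb--Robinson bound.} For $s\in[0,1]$ the paper uses only the standard bound (Theorem~2.1 of \cite{bachmann2012automorphic}, time-dependent version in \cite{nachtergaelesimsyoung}) with a reproducing $g\in\Orfm$ chosen so that (\ref{eq:Gbound}) holds; this already yields $f_j(\alpha_\derG(s)(\CA),r)\in\Orf$ via the estimate (\ref{eq:LRbestimate}). The refined bound of \cite{matsuta2017improving} is not needed here---the paper reserves it for Lemma~\ref{lma:tildemap}, where one integrates $\alpha_\derH(t)$ over unbounded $t$. Your invocation of \cite{matsuta2017improving} is harmless but unnecessary at this stage.

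\textbf{Minor slip.} Your Duhamel identity $\alpha_{\derG_m}(s)-\alpha_{\derG_n}(s)=\int_0^s \alpha_{\derG_m}(s-u)\bigl((\derG_m(u)-\derG_n(u))(\alpha_{\derG_n}(u)(\cdot))\bigr)du$ is the time-independent form; for time-dependent $\derG$ the propagator $\alpha_{\derG_m}(s-u)$ should be replaced by the two-parameter cocycle $\alpha_{\derG_m}(s)\circ\alpha_{\derG_m}(u)^{-1}$ (equivalently, differentiate $u\mapsto \alpha_{\derG_m}(u)\circ\alpha_{\derG_n}(u)^{-1}\circ\alpha_{\derG_n}(s)$ as the paper does in the proof of Lemma~\ref{lma:GAjointcontinuity}). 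With this correction your Cauchy argument goes through unchanged.
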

\begin{proof}
To show this we invoke the version of the Lieb-Robinson bound from \cite{HastingsKoma,nachtergaele2006propagation,nachtergaelesimsyoung} for an interaction defined in terms of $\derG^Y$, $Y \in \Br_d$.

Let $h\in \Orfm$ be the function $h(r)=\sup_s\sup_{Y: \diam(Y)\geq r} \|\derG^{Y}(s)\|$. We can choose $g \in \Orfm$ (e.g. we can take $A (h(r))^{\alpha}$ for some constants $A$ and $0<\alpha<1$) such that
\beq \label{eq:Gbound}
\sup_{j,k \in \Lambda}\sup_s \sum_{\substack{Y \in \mathbb{B}_d\\ Y \ni j,k}} \frac{\|\derG^Y(s)\|}{g(|j-k|)} \leq 1.
\eeq 
Moreover, by Lemma \ref{lma:Lambdareproducing} $g$ can be chosen to be reproducing, with a reproducing constant $C_g>0$.

% for any $s \in \RR$ and for some reproducing\footnote{See Appendix \ref{app:reproducing} for a definition.} function $g \in \Orfm$. Indeed, for any $\derG \in C(\RR,\mfkDal)$, for which we have $\|\derG^Y(s)\| \leq h(\diam(Y))$ for some $h \in \Orfm$, we can always choose a function $g \in \Orfm$ such that the above equation holds, and as explained in Appendix \ref{app:reproducing} without loss of generality we can assume that $g(r)$ is reproducing. 

Let $\{\Gamma_n\}$ be an exhausting sequence of bricks. Let $\derG^{(n)} = \derG|_{\Gamma_n}$. It is easy to see that for any $\CA\in\SAl$ the sequence of $\SAal$-valued functions $\{\derG^{(n)}(s)(\CA)\}$ converges to $\derG(s)(\CA)$ in the \Frechet\ topology of $C([0,1],\SAal)$.

Let $\alpha_{\derG^{(n)}}$ be the automorphism of $\SA_{\Gamma_n}$ defined by
\beq\label{eq:alphaGndiffequation}
\frac{d \alpha_{\derG^{(n)}}(s)(\CA)}{ds} = \alpha_{\derG^{(n)}}(s)(\derG^{(n)}(s)(\CA)).
\eeq
and the initial condition $\alpha_{\derG^{(n)}}(0)={\rm id}$. It is well-known that such an automorphism exists and is unique (it is the holonomy of the parallel transport with respect to the connection $\frac{d}{ds}+\derG^{(n)}$ on a trivial bundle with fiber $\SA_{\Gamma_n}$). We extend $\alpha_{\derG^{(n)}}$ to the whole $\SAl$ in the obvious way. Theorem 2.1 from \cite{bachmann2012automorphic} (or more precisely, its version for time-dependent interactions from \cite{nachtergaelesimsyoung})  then guarantees the following estimate
%\footnote{Though in \cite{nachtergaele2006propagation} the interactions are assumed to be independent of the parameter $s$, the proof of the estimate works in exactly the same way for derivations depending on $s$.}
\beq \label{eq:LRbestimate2}
\frac{\|[\alpha_{\derG^{(n)}}(s)(\CA),\CB]\|}{\|\CA\| \|\CB\|} \leq  \frac{2}{C_g}(e^{2 C_g s}-\theta(R-r)) \sum_{\substack{j \in B_j(r)\\ k \in B^c_j(R)}} g(|j-k|) =: 2 h(r,R)
\eeq
for any $\CA \in \SA_{B_j(r)}$ and $\CB \in \SA_{B^c_j(R)}$, while Theorem 2.2 in \cite{nachtergaele2006propagation} proves the existence of the limit $\lim_{n \to \infty} \alpha_{\derG^{(n)}}(s) =: \alpha_{\derG}(s) \in \Aut(\SA)$ satisfying the same estimate. We have an estimate
\begin{multline} \label{eq:LRbestimate}
f_j(\alpha_{\derG}(s)(\CA),r) \leq f_j(\alpha_{\derG}(s)(\CA^{(r/2)}),r) + \| \CA- \CA^{(r/2)}\| \leq \\ 
\leq \| \alpha_{\derG}(s)(\CA^{(r/2)}) - (\alpha_{\derG}(s)(\CA^{(r/2)}))|_{B_j(r)}\| + f_j(\CA,r/2) \leq \\
\leq 2 \|\CA^{(r/2)}\| h(r/2,r) + f_j(\CA,r/2) \leq \\
\leq f_j(\CA,r/2) + 2 \|\CA\| h(r/2,r) + 2 f_j(\CA,r/2) h(r/2,r),
\end{multline}
where $\CA^{(r/2)}$ is a best possible approximation of $\CA$ on $B_j(r/2)$ and we used (\ref{eq:commutatorlowerbound}) to go from the second to the third line. This implies
\begin{equation}\label{eq:alphaGestimate}
\lVert \alpha_{\derG}(s)(\CA)\rVert'_{j,\beta}\leq C_{g,\beta} \lVert \CA\rVert'_{j,\beta}, \quad \beta \in \NN_0
\end{equation}
for some $C_{g,\beta}>0$. Therefore  $\alpha_{\derG}(s)(\CA) \in \SAal$ for any $\CA \in \SAal$.

Let $\CA\in\SAl$. Eq. (\ref{eq:alphaGndiffequation}) implies
\beq
\alpha_{\derG^{(n)}}(s+\Delta s)(\CA)-\alpha_{\derG^{(n)}}(s)(\CA)=\int_s^{s+\Delta s} \alpha_{\derG^{(n)}}(u)\left(\derG^{(n)}(u)(\CA)\right) du.
\eeq
Since according to Theorem 2.2 of \cite{nachtergaele2006propagation} for a $\alpha_{\derG^{(n)}}(s)(\CA)$ converges in norm to its $n\ra\infty$ limit uniformly in $s$ on any compact subset of $\RR$, we may exchange the limit $n\ra\infty$ and integration and get
\beq\label{eq:alphaGintegral}
\alpha_{\derG}(s+\Delta s)(\CA)-\alpha_{\derG}(s)(\CA)=\int_s^{s+\Delta s} \alpha_{\derG}(u)\left(\derG(u)(\CA)\right) du.
\eeq
To deduce this for general $\CA\in\SAal$, we choose a sequence of local observables $\CA^{(n)}$ converging to $\CA$ in the \Frechet\ topology and use the uniform convergence of $\derG(s)(\CA^{(n)})$ to $\derG(s)(\CA)$ on $[0,1]$  and Prop. \ref{prop:bracketContinuity} to show that (\ref{eq:alphaGintegral}) holds for $\CA\in\SAal$.

\end{proof}

\begin{lemma} \label{lma:GAjointcontinuity}
The map $\alpha(1):C([0,1],\mfkDal)\times \SAal \to \SAal$, $(\derG, \CA) \mapsto \alpha_{\derG}(1)(\CA)$ is continuous.
\end{lemma}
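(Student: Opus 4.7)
The plan is to fix a base point $(\derG_0, \CA_0)$ and bound each seminorm $\|\cdot\|_{j,\beta}$ of
$$
\alpha_\derG(1)(\CA) - \alpha_{\derG_0}(1)(\CA_0) = \alpha_\derG(1)(\CA - \CA_0) + \bigl[\alpha_\derG(1)(\CA_0) - \alpha_{\derG_0}(1)(\CA_0)\bigr]
$$
along an arbitrary sequence $(\derG_n, \CA_n) \to (\derG_0, \CA_0)$. Since both factors are metrizable \Frechet\ spaces, sequential continuity suffices.

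The first step is to obtain a uniform Lieb-Robinson estimate along $\{\derG_n\}$. Set $h_n(r) := \sup_{s \in [0,1],\, Y:\diam(Y) \geq r} \|\derG_n^Y(s)\|$. Convergence $\derG_n \to \derG_0$ in $C([0,1], \mfkDal)$, i.e.\ in the seminorms $\|\cdot\|^{br}_\alpha$ uniformly in $s$, implies that $\|h_n - h_0\|_\alpha \leq \sup_s \|\derG_n(s) - \derG_0(s)\|^{br}_\alpha \to 0$, so $h_n \to h_0$ in $\Orfm$. Lemma \ref{lma:grepboundsfn} then produces a single reproducing function $g \in \Orfm$ that upper-bounds all $h_n$ and $h_0$, and hence satisfies (\ref{eq:Gbound}) uniformly in $n$. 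Running the argument of Prop. \ref{lma:alphaODE} with this $g$ yields constants $C_{g,\beta}$, uniform in $n$, $s \in [0,1]$ and $j \in \Lambda$, such that $\|\alpha_{\derG_n}(s)(\CB)\|_{j,\beta} \leq C_{g,\beta} \|\CB\|_{j,\beta}$ for every $\CB \in \SAal$ and every $\beta \in \NN_0$. In particular the first term above is bounded by $C_{g,\beta} \|\CA_n - \CA_0\|_{j,\beta}$, which tends to $0$.

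For the second term the plan is to use the clean Duhamel identity
$$
\alpha_{\derG_n}(t)(\CA_0) - \alpha_{\derG_0}(t)(\CA_0) = \int_0^t \alpha_{\derG_n}(u)\bigl((\derG_n(u) - \derG_0(u))(\alpha_{\derG_0}(u, t)(\CA_0))\bigr)\, du,
$$
where $\alpha_{\derG_0}(u, t) := \alpha_{\derG_0}(u)^{-1} \circ \alpha_{\derG_0}(t)$ is the evolution from time $u$ to time $t$ generated by $\derG_0$. This identity is obtained by differentiating $\phi(u) := \alpha_{\derG_n}(u) \circ \alpha_{\derG_0}(u)^{-1}$: the two $\derG_0$ contributions cancel and only a single factor of $\derG_n - \derG_0$ remains. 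Setting $t=1$ and taking the $\|\cdot\|_{j,\beta}$-norm, the uniform bound from the previous paragraph gives
$$
\|\alpha_{\derG_n}(1)(\CA_0) - \alpha_{\derG_0}(1)(\CA_0)\|_{j,\beta} \leq C_{g,\beta} \sup_{u \in [0,1]} \|(\derG_n(u) - \derG_0(u))(\alpha_{\derG_0}(u, 1)(\CA_0))\|_{j,\beta}.
$$
The map $u \mapsto \alpha_{\derG_0}(u, 1)(\CA_0)$ is a continuous curve in $\SAal$ (by the same ODE argument as in Prop. \ref{lma:alphaODE}), hence its image is bounded in every seminorm $\|\cdot\|_{j,\beta'}$. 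Combined with the continuity of the bracket map $\mfkDal \times \SAal \to \SAal$ (the derivation-on-observable version of Prop. \ref{prop:bracketContinuity}), which yields an estimate of the form $\|(\derG_n(u) - \derG_0(u))(\CC)\|_{j,\beta} \leq C \|\derG_n(u) - \derG_0(u)\|^{br}_{\beta'} \|\CC\|_{j,\beta'}$ for some $\beta' \geq \beta$, this supremum tends to $0$ as $\derG_n \to \derG_0$.

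The main technical obstacle is identifying the correct conjugation $\phi(u) = \alpha_{\derG_n}(u) \circ \alpha_{\derG_0}(u)^{-1}$, which is exactly what eliminates the need for a Gr\"onwall or Dyson-series iteration and reduces joint continuity to a single uniform bound on $\derG_n - \derG_0$ acting on a compact curve of fixed observables.
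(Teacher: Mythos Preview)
Your proof is correct and follows essentially the same approach as the paper: both obtain a uniform Lieb--Robinson constant along the convergent sequence via Lemma~\ref{lma:grepboundsfn}, use the identical Duhamel identity (differentiating $\alpha_{\derG_n}(u)\circ\alpha_{\derG_0}(u)^{-1}$ applied to $\alpha_{\derG_0}(1)(\CA_0)$), and split the joint-continuity estimate in the same way. One minor imprecision: a reproducing $g$ that merely upper-bounds the $h_n$ does not by itself give the summed condition~(\ref{eq:Gbound}); the sequence to feed into Lemma~\ref{lma:grepboundsfn} should be the $g_n$ constructed from $h_n$ as in the proof of Prop.~\ref{lma:alphaODE}, but the paper is equally terse at this step.
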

\begin{proof}
First, let us show continuity in $\derG$. Let $\{ \Delta \derG_n (s)\}, \, n \in \NN_0,$ be a sequence in $C([0,1],\mfkDal)$ converging to 0. 
Note that by Lemma \ref{lma:grepboundsfn} we can find $g \in \Orfm$, such that for any $n$ eq. (\ref{eq:Gbound}) holds for $\derG$ replaced with $\derG + \Delta \derG_n$. Therefore eq. (\ref{eq:LRbestimate}) implies
\beq
\|\alpha_{\derG+\Delta \derG_n}(u)(\CA)\|'_{j,\alpha} \leq B_{\alpha} \|\CA\|'_{j,\alpha} 
\eeq
for any $u \in [0,1]$ and any $n \in \NN_0$ and some constants $B_{\alpha}>0$ depending on $g$ only. Hence
\begin{multline}
\lVert\alpha_{\derG+\Delta\derG_n}(1)(\CA) - \alpha_{\derG}(1)(\CA)\rVert'_{j,\alpha} =
\\
=\lVert \int_{0}^1 du \frac{d}{du} \l \alpha_{\derG+\Delta\derG_n}(u)(\alpha_{\derG}(u)^{-1}(\alpha_{\derG}(1)(\CA))) \r \rVert'_{j,\alpha} \leq 
\\ \leq 
B_\alpha\int_{0}^1 du \lVert \left(\Delta\derG_n(u)\right)\left(\alpha_{\derG}(u)^{-1}(\alpha_{\derG(1)}(\CA))\right)\rVert'_{j,\alpha}\leq\\
\leq \tilde{B}_{\alpha} \| \CA \|'_{j,\alpha + d+3} \|\Delta \derG_n\|^{br}_{\alpha+d+3}.
\end{multline}
for some constant $\tilde{B}_{\alpha}$. Here we have used the same estimate as in (\ref{eq:bracketnormderivation}). Since the r.h.s. converges to zero as $n\ra\infty$, this proves continuity in $\derG$.

To show joint continuity, we similarly choose $g$ for a converging sequence $(\derG_n,\CA_n) \to (\derG,\CA)$. Then
\begin{multline}
\|\alpha_{\derG + \Delta \derG}(1)(\CA+\Delta \CA)-\alpha_\derG(1)(\CA)\|'_{j,\alpha} \leq \|\alpha_{\derG + \Delta \derG}(1)(\Delta \CA)\|'_{j,\alpha}+\\
+\|\alpha_{\derG + \Delta \derG}(1)(\CA)-\alpha_\derG(1)(\CA)\|'_{j,\alpha} \leq B_{\alpha} \|\Delta \CA\|'_{j,\alpha}
+ \\ + \|\alpha_{\derG + \Delta \derG}(1)(\CA)-\alpha_\derG(1)(\CA)\|'_{j,\alpha}.
\end{multline}
\end{proof}

\begin{corollary}\label{cor:GsAcontinuity}
The map $\alpha:C([0,1],\mfkDal) \times [0,1]\times \SAal  \to \SAal$, $(\derG, s,\CA) \mapsto \alpha_{\derG}(s)(\CA)$ is continuous.
\end{corollary}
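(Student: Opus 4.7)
The plan is to reduce joint continuity on $C([0,1],\mfkDal) \times [0,1]\times \SAal$ to the already established joint continuity of $(\derG,\CA)\mapsto \alpha_\derG(1)(\CA)$ (Lemma \ref{lma:GAjointcontinuity}) via a time-rescaling trick. Given $s\in[0,1]$ and $\derG\in C([0,1],\mfkDal)$, define the rescaled path $R(\derG,s)\in C([0,1],\mfkDal)$ by $R(\derG,s)(u):= s\,\derG(s u)$ for $u\in[0,1]$. Uniqueness of the solution of the ODE (\ref{eq:alphaGdiffequation}) (Prop. \ref{lma:alphaODE}) together with the chain rule gives $\alpha_{R(\derG,s)}(1)=\alpha_{\derG}(s)$, so the map in question factors as
\[
(\derG,s,\CA)\;\longmapsto\;(R(\derG,s),\CA)\;\longmapsto\;\alpha_{R(\derG,s)}(1)(\CA)=\alpha_{\derG}(s)(\CA).
\]

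First I would verify that $R:C([0,1],\mfkDal)\times [0,1]\to C([0,1],\mfkDal)$ is continuous. For each seminorm $\|\cdot\|^{br}_\alpha$ on $\mfkDal$ and the corresponding sup-seminorm on $C([0,1],\mfkDal)$, one has
\[
\sup_{u\in[0,1]}\|s\,\derG(su)-s'\,\derG'(s'u)\|^{br}_\alpha \;\leq\; |s-s'|\sup_u\|\derG(su)\|^{br}_\alpha + s'\sup_u\|\derG(su)-\derG'(s'u)\|^{br}_\alpha.
\]
The first term goes to zero trivially. For the second term, one uses the uniform continuity of the continuous map $u\mapsto \derG(u)\in\mfkDal$ on the compact interval $[0,1]$ (in each seminorm) combined with the convergence $\derG'\to\derG$ in $C([0,1],\mfkDal)$ to control $\|\derG(su)-\derG'(s'u)\|^{br}_\alpha$ uniformly in $u$. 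Hence $R$ is continuous.

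Next, composing $R$ with the continuous map $(\tilde\derG,\CA)\mapsto \alpha_{\tilde\derG}(1)(\CA)$ from Lemma \ref{lma:GAjointcontinuity} yields continuity of $(\derG,s,\CA)\mapsto \alpha_\derG(s)(\CA)$, after one checks the identity $\alpha_{R(\derG,s)}(1)=\alpha_\derG(s)$. This identity is the easy step: the curve $u\mapsto \alpha_\derG(su)(\CA)$ solves $\tfrac{d}{du}\alpha_\derG(su)(\CA)=\alpha_\derG(su)\bigl(s\derG(su)(\CA)\bigr)$ with initial condition $\Id$ at $u=0$, which is exactly the defining ODE for $\alpha_{R(\derG,s)}(u)(\CA)$; evaluating at $u=1$ and invoking uniqueness from Prop. \ref{lma:alphaODE} gives the claim.

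The only mildly delicate point is the uniform continuity argument for $R$: one must be careful that uniform continuity of $u\mapsto\derG(u)$ holds separately in each Fréchet seminorm on $\mfkDal$ (it does, since $[0,1]$ is compact and each seminorm $\|\cdot\|^{br}_\alpha$ defines a continuous pseudometric on $\mfkDal$), rather than attempting a single metric statement. Once this is noted, no further estimates are required and the corollary follows at once.
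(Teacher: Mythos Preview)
Your proof is correct and follows essentially the same approach as the paper: the paper also introduces the rescaling map $\lambda(\derG,s)(u)=s\,\derG(su)$, asserts its continuity and the identity $\alpha_\derG(s)=\alpha_{\lambda(\derG,s)}(1)$, and then composes with Lemma~\ref{lma:GAjointcontinuity}. You have simply supplied the details that the paper leaves as ``easy to check'' and ``straightforward to check''.
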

\begin{proof}
Consider the ``rescaling map'' $\lambda:C([0,1],\mfkDal) \times [0,1]\ra C([0,1],\mfkDal)$, $(\derG,s)\mapsto \lambda(\derG,s)(u)=s\derG(s u)$. It is easy to check that this map is continuous. It is also straightforward to check that $\alpha(\derG,s,\CA)=\alpha(1)(\lambda(\derG,s),\CA)$. Therefore by Lemma \ref{lma:GAjointcontinuity} the map $\alpha$ is continuous.
\end{proof}

\begin{prop}\label{prop:GAsmoothness}
The map $\alpha(1):C([0,1],\mfkDal) \times \SAal \to \SAal$ defined by $(\derG, \CA) \mapsto \alpha_{\derG}(1)(\CA)$ is smooth.
\end{prop}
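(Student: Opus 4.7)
Since $\alpha_\derG(1)$ is $\RR$-linear in $\CA$, smoothness in the $\CA$-direction reduces to continuity in $\derG$, which is precisely Lemma \ref{lma:GAjointcontinuity}. The substantive content is therefore smoothness in $\derG$. My plan is to compute the first Gateaux derivative along $\Delta\derG \in C([0,1],\mfkDal)$ as an explicit integral, verify joint continuity using the tools already developed, and then iterate to obtain all higher derivatives as nested integrals of the same shape.

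Starting from the variation-of-parameters identity employed in the proof of Lemma \ref{lma:GAjointcontinuity},
\begin{multline*}
\alpha_{\derG+t\Delta\derG}(1)(\CA) - \alpha_\derG(1)(\CA) = \\
= t\int_0^1 \alpha_{\derG+t\Delta\derG}(u)\Big(\Delta\derG(u)\big(\alpha_\derG(u)^{-1}\alpha_\derG(1)(\CA)\big)\Big)\, du,
\end{multline*}
I would divide by $t$ and pass $t \to 0$, expecting the limit
\begin{equation*}
D_\derG\alpha_\derG(1)(\CA)[\Delta\derG] = \int_0^1 \alpha_\derG(u)\Big(\Delta\derG(u)\big(\alpha_\derG(u)^{-1}\alpha_\derG(1)(\CA)\big)\Big)\, du
\end{equation*}
in the {\Frechet} topology of $\SAal$. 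Convergence of this limit relies on three inputs already established: joint continuity of $(\derG,u,\CA) \mapsto \alpha_\derG(u)(\CA)$ from Corollary \ref{cor:GsAcontinuity}; continuity of the derivation-on-observable bracket of the type used to derive \eqref{eq:bracketnormderivation}; and a {\Frechet} bound on $\alpha_{\derG+t\Delta\derG}(u)$ uniform in $u\in[0,1]$ and small $t$. The last input follows from \eqref{eq:alphaGestimate} applied with a single reproducing function $g$ that dominates the entire family $\{\derG+t\Delta\derG : |t|\leq 1\}$, whose existence is precisely Lemma \ref{lma:grepboundsfn}. Joint continuity of the resulting expression in $(\derG,\Delta\derG,\CA)$ then follows by combining Corollary \ref{cor:GsAcontinuity}, Proposition \ref{prop:bracketContinuity}, and the continuity of $\derG \mapsto \alpha_\derG(u)^{-1}$, handled by applying the same argument to the generator of the inverse LGP described in Section \ref{ssec:automorphisms}.

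For higher derivatives I would iterate. Differentiating the integrand once more produces an integral of the same structure, and in general the $n$-th Gateaux derivative in the $\derG$-directions $(\Delta^{(1)}\derG,\ldots,\Delta^{(n)}\derG)$ takes the form of an $n$-fold nested integral over $0 \leq u_1 \leq \cdots \leq u_n \leq 1$ whose integrand is a composition of automorphisms $\alpha_\derG(u_i)^{\pm 1}$ with brackets against the $\Delta^{(k)}\derG(u_i)$ and the observable $\CA$. I would set this up as an induction on $n$: the inductive hypothesis records simultaneously the integral formula, its joint continuity, and a uniform polynomial-loss-of-seminorms {\Frechet} bound valid on any bounded neighborhood of $(\derG,\CA)$; the inductive step is the same pass-the-limit-inside-the-integral argument used at $n=1$.

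The principal obstacle is technical bookkeeping rather than any missing analytic tool. Each differentiation loses finitely many {\Frechet} indices through the bracket estimate \eqref{eq:bracketNormEstimate}, so the seminorms controlling the $n$-th derivative depend on progressively higher seminorms of the input data. Provided this loss remains finite at each step (which it does, since each bracket costs only $O(d)$ extra indices), the Lieb--Robinson estimate \eqref{eq:LRbestimate} with a single reproducing function dominating a bounded neighborhood, together with the joint-continuity arguments above, close the induction.
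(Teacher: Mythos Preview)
Your proposal is correct and follows essentially the same route as the paper's proof: reduce to smoothness in $\derG$ by linearity in $\CA$ and joint continuity, differentiate via the variation-of-parameters identity, pass the limit using Corollary \ref{cor:GsAcontinuity} and Proposition \ref{prop:bracketContinuity}, and then iterate. You supply more explicit detail on the uniform reproducing-function bound and the inductive bookkeeping for higher derivatives than the paper does, but the underlying argument is the same.
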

\begin{proof}
$\alpha_\derG(1)(\CA)$ is linear in $\CA$ and by Lemma \ref{lma:GAjointcontinuity} is jointly continuous in $\derG$ and $\CA$. Therefore it is sufficient to show that it is a smooth function of $\derG$. As in the proof of Lemma \ref{lma:GAjointcontinuity}, we write
\begin{multline}
\alpha_{\derG+t\Delta\derG}(1)(\CA) - \alpha_{\derG}(1)(\CA)
= \\ = \int_{0}^1 \frac{d}{du} \left[ \alpha_{\derG+t\Delta\derG}(u)\circ\alpha_{\derG}(u)^{-1}\circ\alpha_{\derG}(1)(\CA)\right]du=\\
=t\int_0^1  \alpha_{\derG+t\Delta\derG}(u)\left(\Delta\derG(u)\left(\alpha_{\derG}(u)^{-1}\circ\alpha_{\derG}(1)(\CA)\right)\right)du.
\end{multline}
Using Cor. \ref{cor:GsAcontinuity} we get  
\begin{multline}
\lim_{t\ra 0} \frac{\alpha_{\derG+t\Delta\derG}(1)(\CA) - \alpha_{\derG}(1)(\CA)}{t} = \\ = \int_0^1  \alpha_{\derG}(u)\left(\Delta\derG(u)\left(\alpha_{\derG}(u)^{-1}\circ\alpha_{\derG}(1)(\CA)\right)\right) du.
\end{multline}
This shows that the directional derivative of $\alpha(1)$ with respect to $\derG$  exists. Moreover, by Cor. \ref{cor:GsAcontinuity} and Prop. \ref{prop:bracketContinuity}  the derivative is continuous. Iterating the argument, we infer that $\alpha_\derG(1)(\CA)$ is a smooth function of $\derG$.
\end{proof}

\begin{remark}\label{rmk:alphaMderivative}
It follows from the above computation that if $\derG$ is a smooth $C([0,1],\mfkDal)$-valued function on a manifold $\CM$, then $\alpha_\derG(1)(\CA)$ is a smooth $\SAal$-valued function on $\CM$ whose $d_\CM$-derivative is given by
\beq
d_\CM\alpha_\derG(1)(\CA)=\int_0^1  \alpha_{\derG}(u)\left(d_\CM\derG(u)\left(\alpha_{\derG}(u)^{-1}\circ\alpha_{\derG}(1)(\CA)\right)\right) du.
\eeq
Somewhat schematically, we can also write
\beq\label{eq:alphader}
\alpha_\derG(1)^{-1}\circ d_\CM\alpha_\derG(1)=\int_0^1 \left(\alpha_\derG(1)^{-1}\circ \alpha_\derG(u)\right)(d_\CM \derG(u)) du.
\eeq
This formula is schematic because $\alpha_\derG(1)$ is a function on $\CM$ valued in automorphisms of $\SAal$, and we do not introduce any topology on the set of automorphisms. The proper interpretation of this formula is as follows. Note that the r.h.s. of eq. (\ref{eq:alphader}) is an element of $\Omega^1(\CM,\mfkDal)$. Let us denote it $\omega_\derG$. Then for any $\CB\in\SAal$ the 1-form  $\alpha_\derG(1)^{-1} \circ d_\CM\alpha_\derG(1)(\CB)\in\Omega^1(\CM,\SAal)$ is equal to $\omega_\derG(\CB)$. More generally, if $\CB$ is a smooth $\SAal$-valued function on $\CM$, then 
\beq
d_\CM\alpha_\derG(1)(\CB)=\alpha_\derG(1)\left(d_\CM\CB+\omega_\derG(\CB)\right).
\eeq
This implies that the covariant differential $d_\CM+\omega_\derG(\,\cdot\,)$ on the trivial bundle with fiber $\SAal$ is flat, i.e. $d_\CM\omega_\derG+\frac12 \{ \omega_\derG,\omega_\derG \} = 0$.
\end{remark}

\section{Ground states of gapped Hamiltonians} \label{app:gappedHamiltonians}

For any $\derH \in \mfkDal$ and a piecewise-continuous function $f:\RR \to \RR$ satisfying $f(t)=\CO(|t|^{-\infty})$ let $\mathscr{I}_{\derH,f}:\SAal\ra\SAal$ be the map
\beq
\mathscr{I}_{\derH,f}(\cdot) := \int_{-\infty}^{+\infty} f(t) \alpha_{ \derH}(t) (\cdot) d t.
\eeq
\begin{lemma} \label{lma:tildemap}
The map $\mathscr{I}_{\derH,f}$ is a well-defined continuous map. 
%If $\alpha_{\derH}(t)$ leaves a state $\psi$ invariant, then $\mathscr{I}_{\derH,f}$ preserves the subspace $\mfkdpal.$
\end{lemma}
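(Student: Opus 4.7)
The lemma asserts (a) for each $\CA\in\SAal$ the Bochner integral defining $\scrI_{\derH,f}(\CA)$ converges in $\SAal$, and (b) the resulting linear map is continuous in the \Frechet\ topology. Both follow from a single polynomial-in-$t$ norm estimate combined with the rapid decay of $f$. Observe first that by Cor.~\ref{cor:GsAcontinuity} applied to the constant derivation $\derG(s)\equiv\derH$, the orbit map $t\mapsto\alpha_\derH(t)(\CA)$ is continuous from $\RR$ into $\SAal$; together with piecewise continuity of $f$, this makes the integrand continuous, so convergence of the integral reduces to absolute convergence in each seminorm.

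\textbf{The key estimate.} The heart of the proof is to show that for every $\alpha\in\NN_0$ there is a constant $K_\alpha>0$ (depending on $\derH$, $\alpha$, and a choice of localization function $a$ for $\CA$) such that
\beq\label{eq:orbitpolynomial}
\|\alpha_\derH(t)(\CA)\|_{j,\alpha}\leq K_\alpha\,(1+|t|)^{\alpha}\,\|\CA\|_{j,\alpha},\qquad t\in\RR.
\eeq
This is where the improved Lieb--Robinson bound from \cite{matsuta2017improving,nachtergaelesimsyoung} is essential. Since $\|\derH^Y\|$ decays faster than any polynomial in $\diam(Y)$, one has, for any $N\geq 0$,
\beq
\|[\alpha_\derH(t)(\CB),\CC]\|\leq L_N\,\|\CB\|\,\|\CC\|\,(1+|t|)^{N}(1+d(\supp\CB,\supp\CC))^{-N},
\eeq
the crucial point being polynomial rather than exponential growth in $|t|$. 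Given this, I approximate $\CA$ by a local $\CA^{(R/2)}\in\SA_{B_j(R/2)}$ with $\|\CA-\CA^{(R/2)}\|\leq a(R/2)$, apply the estimate above together with \eqref{eq:commutatorlowerbound} to control $f_j(\alpha_\derH(t)(\CA^{(R/2)}),R)$, and optimize by choosing $N=\alpha$. This yields \eqref{eq:orbitpolynomial}; the linear growth exponent $\alpha$ in $|t|$ is exactly what the improved bound produces and what is needed in the next step.

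\textbf{Completion.} Since $f\in\CO(|t|^{-\infty})$, the integrals $M_\alpha:=\int_{\RR}|f(t)|(1+|t|)^{\alpha}dt$ are finite for every $\alpha\in\NN_0$. Combined with \eqref{eq:orbitpolynomial}, this gives absolute convergence of $t\mapsto f(t)\alpha_\derH(t)(\CA)$ in the completion of $\mfkdal$ in each norm $\|\cdot\|_{j,\alpha}$, so $\scrI_{\derH,f}(\CA)\in\SAal$ and
\beq
\|\scrI_{\derH,f}(\CA)\|_{j,\alpha}\leq K_\alpha M_\alpha\,\|\CA\|_{j,\alpha},
\eeq
which together with linearity of $\scrI_{\derH,f}$ yields continuity in the \Frechet\ topology of $\SAal$.

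\textbf{Main obstacle.} The entire argument hinges on having polynomial rather than exponential growth in $|t|$ in the Lieb--Robinson estimate. The naive bound \eqref{eq:LRbestimate2} used in Prop.~\ref{lma:alphaODE} carries a factor $e^{2C_g|t|}$, which, paired with the merely super-polynomial (not super-exponential) decay of $f$, would force $\int|f(t)|e^{2C_g|t|}dt$ to appear, and this integral need not be finite. Thus the subballistic LR estimate of \cite{matsuta2017improving} is not a convenience but a necessity, and its invocation is the only genuinely non-routine input; everything else is a Bochner-integral bookkeeping argument.
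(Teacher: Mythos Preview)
Your diagnosis of the obstacle is exactly right: the naive bound \eqref{eq:LRbestimate2} carries $e^{C|t|}$ and cannot be integrated against a merely super-polynomially decaying $f$. However, the resolution you propose does not go through. The commutator estimate you write down,
\[
\|[\alpha_\derH(t)(\CB),\CC]\|\leq L_N\,\|\CB\|\,\|\CC\|\,(1+|t|)^{N}(1+d)^{-N},
\]
is a ``linear light cone with polynomial tail'' statement, and it is \emph{not} what Theorem~2.1 of \cite{matsuta2017improving} gives. That theorem, as quoted in the paper's proof, still contains the factor $e^{vt-r^{1-\sigma}}$. In the regime $r\sim (vt)^{1/(1-\sigma)}$ (so $r^{1-\sigma}\sim vt$) the Matsuta bound is no better than the trivial bound $2$, whereas your inequality would force the commutator to be $\lesssim t^{-N\sigma/(1-\sigma)}\to 0$. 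So your key estimate \eqref{eq:orbitpolynomial} cannot be obtained from the cited input; the step ``choose $N=\alpha$'' has nothing to act on.

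The paper's argument avoids this by \emph{not} bounding $\|\alpha_\derH(t)(\CA)\|_{j,\alpha}$ for each $t$ and then integrating. Instead it estimates $f_j\bigl(\scrI_{\derH,f}(\CA),r\bigr)$ directly, splitting the time integral at an $r$-dependent cutoff $t_0=r^{1-\sigma}/(2v)$: on $|t|\le t_0$ one has $e^{vt-r^{1-\sigma}}\le e^{-r^{1-\sigma}/2}$, so the Matsuta bound yields $\Or$-decay; on $|t|>t_0$ the tail $\int_{t_0}^\infty |f(t)|\,dt=\CO(t_0^{-\infty})=\Or$ does the work. The essential mechanism is this coupling of the time cutoff to the localization radius, which is precisely what your attempt to interchange $\sup_r$ with $\int dt$ discards. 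If you want to repair your route, you would have to redo this same $r$-dependent split inside the derivation of a (weaker) polynomial-in-$t$ seminorm bound---at which point you have essentially reproduced the paper's proof with an extra layer of packaging.
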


\begin{proof}
Let us choose $h \in \Orfm$ such that
\beq
\sup_{j \in \Lambda} \sum_{\substack{Y \ni j \\ \diam(Y) \geq r}} \|\derH^Y\| \leq h(r).
\eeq
By Theorem 2.1 from \cite{matsuta2017improving}, for any $\CA \in \SA_{B_j(R)}$ and $\CB \in \SA_{B^c_j(R+r)}$ with $r>1$ and any $0<\sigma<1$ we have
\begin{multline}
\frac{\|[\alpha_\derH(t) (\CA),\CB]\|}{\|\CA\|\|\CB\|} \leq C_1 R^{d} e^{v t-r^{1-\sigma}} + C_2 t R^d (1+r)^d h(r^{\sigma}) + \\ + C_3 t e^{v t-r^{1-\sigma}} R^{2d} r^{\sigma+d} h(r^{\sigma})
\end{multline}
for some constants $C_1,C_2,C_3,v$ independent of $j,t,r,R$.

Let $t_0 = r^{1-\sigma}/2v$. Then
\begin{multline}
\frac{\|[\mathscr{I}_{\derH,f}(\CA),\CB]\|}{\|\CA\| \|\CB\|} \leq \\ \leq 2 \l \int_{0}^{t_0} |f(t)| \frac{\|[\alpha_{\derH}(t)(\CA),\CB]\|}{\|\CA\| \|\CB\|} dt \r  + 4 \int_{t_0}^{\infty} |f(t)| d t \leq \\ \leq  2 C' R^{2d} ( C_1 t_0 e^{v t_0 - r^{1-\sigma}} + C_2 t^2_0 (1+r)^d h(r^{\sigma}) + \\ + C_3 t^2_0 e^{v t_0 -r^{1-\sigma}} r^{\sigma+d} h(r^{\sigma}) ) + 4 \int_{t_0}^{\infty} |f(t)| d t =: g(R,R+r)
\end{multline}
for some constant $C'$. Since $g(r/2,r) \in \Orf$, in the same way as in eq. (\ref{eq:LRbestimate}), we get an estimate for any $\CA \in \SAl$
\beq
f_j(\mathscr{I}_{\derH,f}(\CA),r) \leq f_j(\CA,r/2) + 2 \|\CA\| g(r/2,r) + 2 f_j(\CA,r/2) g(r/2,r),
\eeq
which implies
\begin{equation}
\lVert \mathscr{I}_{\derH,f}(\CA) \rVert'_{j,\alpha} \leq C_{g,\alpha} \lVert \CA\rVert'_{j,\alpha}, \quad \alpha \in \NN_0
\end{equation}
for some $C_{g,\alpha}>0$.  Therefore $\mathscr{I}_{\derH,f}:\SAal \to \SAal$ is well-defined and continuous.

%Finally, if $\alpha_{\derH}(t)$ leaves a state $\psi$ invariant, then for any $\CA,\CB\in\mfkdal$ we have 
%$$
%\langle [\mathscr{I}_{\derH,f}(\CA),\CB]\rangle_\psi=\int_{-%\infty}^{+\infty} f(t)\langle[\CA, %\alpha_{\derH}(-t)(\CB)]\rangle_\psi dt.
%$$
%Therefore if $\CA\in\mfkdpal$, then %$\mathscr{I}_{\derH,f}(\CA)\in\mfkdpal.$

\end{proof}

\begin{remark}
The version of the Lieb-Robinson bounds proved in \cite{bachmann2012automorphic} is sufficient to prove the existence of the map $\mathscr{I}_{\derH,f}$ for UL Hamiltonians or Hamiltonians with exponential decay, but not for arbitrary UAL Hamiltonians. It was pointed to us by Bruno Nachtergaele that the case of UAL Hamiltonians can be dealt with using the improved Lieb-Robinson bounds from  \cite{matsuta2017improving,elsemachadonayakyao}. Further implications of these improved bounds for gapped Hamiltonians are studied in \cite{nachtergaelesimsyoungupcoming}.
\end{remark}

Using the result of \cite{moon2020automorphic} one can show that a smooth family of gapped UL Hamiltonians under certain additional assumptions defines a smooth family of gapped states in the sense of Definition \ref{def:smoothfamily} (though we expect that a similar result should hold for a smooth family of gapped UAL Hamiltonians):

\begin{prop}
Let $\CM$ be a compact manifold, and let $\derH$ be a $\mfkDl$-valued function which is smooth when regarded as $\mfkDal$-valued function. Suppose for any $\m\in\CM$ the derivation $\derH_{\m}$ is gapped with a unique ground state $\psi_\m$. Suppose also that for any $\CA \in \SAal$ the average $\lal \CA \ral_{\psi_m}$ is a smooth function on $\CM$. Then $\psi$ is a smooth family of gapped states.
\end{prop}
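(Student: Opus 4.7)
The plan is to construct the connection $\derG \in \Omega^1(\CM, \mfkDal)$ explicitly as the Hastings quasi-adiabatic generator associated to the family $\derH$, and then invoke the Moon--Ogata theorem \cite{moon2020automorphic} to verify the parallel-transport property of Definition \ref{def:smoothfamily}. First I would establish a uniform gap: since $\CM$ is compact and $\m \mapsto \derH_\m$ is continuous as a $\mfkDal$-valued function, standard lower semicontinuity of the ground-state gap along continuous families (applied in the GNS representations) yields a uniform lower bound $\Delta > 0$ on the gaps of $\derH_\m$ for $\m \in \CM$. Fix once and for all functions $w_\Delta$ and $W_\Delta$ as in Lemmas \ref{lma:wdelta} and \ref{lma:Wdelta} for this value of $\Delta$.

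Next, for $\m \in \CM$ and $v \in T_\m \CM$, set
\beq
\derG_\m(v) := \mathscr{I}_{\derH_\m,\,W_\Delta}\bigl(v(\derH)\bigr),
\eeq
where $v(\derH) \in \mfkDal$ is the directional derivative of $\m \mapsto \derH_\m$, which exists by the smoothness hypothesis, and $\mathscr{I}_{\derH_\m,\,W_\Delta}$ acts on derivations via $\mathscr{I}(\derA)^Y = \sum_Z (\mathscr{I}(\derA^Z))^Y$. By Lemma \ref{lma:tildemap}, $\derG_\m(v) \in \mfkDal$ with continuity estimates controlled entirely by the Lieb--Robinson data of $\derH_\m$. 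Smoothness of $\derG$ as an $\mfkDal$-valued 1-form on $\CM$ follows from smoothness of $v(\derH)$, from smoothness of $\alpha_{\derH_\m}(t)$ jointly in $\m$ and $t$ (a parameter-dependent version of Proposition \ref{prop:GAsmoothness}), and from dominated convergence for the $t$-integral defining $\mathscr{I}$ with rapidly-decaying kernel $W_\Delta(t) = \CO(|t|^{-\infty})$.

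It remains to verify the parallel-transport identity $d\langle \CA \rangle_\psi = \langle \derG(\CA) \rangle_\psi$ for all $\CA \in \SAal$, which by the proposition of Section \ref{ssec:smoothfamiliesofstates} is equivalent to Definition \ref{def:smoothfamily}. The heuristic is the standard spectral-flow computation: in finite volume with rank-one ground-state projector $P_\m$, differentiating $H_\m P_\m = E_\m P_\m$ and using the spectral decomposition of $H_\m - E_\m$ on the orthogonal complement of $P_\m$ yields $\partial_v P_\m = [P_\m, \mathcal{D}_v]$ with $\mathcal{D}_v = \mathscr{I}_{H_\m, W_\Delta}(\partial_v H_\m)$ (up to an overall sign), whence the identity follows by tracing against $\CA$. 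The infinite-volume passage is precisely the content of the Moon--Ogata theorem \cite{moon2020automorphic}, whose argument applies here because the continuity of $\mathscr{I}_{\derH,\,W_\Delta}$ in the UAL \Frechet\ topology, established in Lemma \ref{lma:tildemap} via the improved Lieb--Robinson bounds of \cite{matsuta2017improving}, is exactly what its argument requires.

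\textbf{Main obstacle.} The substantive step is the last one: transferring the Moon--Ogata argument from the UL / exponentially-decaying setting in which it was originally stated to the present setting, where $\derH$ is $\mfkDl$-valued but only UAL-smooth as a function of $\m$. This transfer is routine because all the analytic ingredients that their proof uses---continuity of the integral transforms $\mathscr{I}_{\derH,\,W_\Delta}$ on the UAL \Frechet\ algebra, smoothness of $\alpha_{\derH_\m}(t)$ in parameters, and uniformity of the gap over $\CM$---have been put in place in the preceding sections and appendices.
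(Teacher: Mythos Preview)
Your approach is essentially identical to the paper's: obtain a uniform gap $\Delta$ from compactness, define $\derG$ as the quasi-adiabatic generator $\mathscr{I}_{\derH,\,W_\Delta}(d\derH)$, and invoke the Moon--Ogata theorem along smooth paths to verify the parallel-transport property of Definition~\ref{def:smoothfamily}. The only discrepancy is a sign---the paper takes $\derG = -\mathscr{I}_{\derH,\,W_\Delta}(d\derH)$---which you yourself flag as ``up to an overall sign'' and which is harmless.
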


\begin{proof}

Since $\CM$ is compact, there exists $\Delta>0$ which bounds from below the gap of $\derH_{\m}$ for any $\m\in\CM$.

Let us define $\derG \in \Omega^1(\CM,\mfkDal)$ by $\derG = - \mathscr{I}_{\derH,W_{\Delta}}(d \derH)$, where $W_{\Delta}(t) = \CO(|t|^{-\infty})$ is an odd function such that $\int W_{\Delta}(t) e^{-i \omega t} d t = \frac{i}{\omega}$ for $|\omega|>\Delta'$ for some $0<\Delta'<\Delta$.

For any smooth path $p:[0,1] \to \CM$ the family $p^* \derH$ satisfies the conditions of the Theorem 1.3 from \cite{moon2020automorphic} that guarantees $p^* \psi (s) = p^* \psi(0) \circ \alpha_{p^* \derG}(s)$.

\end{proof}

\printbibliography

\end{document}